\def\ps@headings{%
\def\@oddhead{\mbox{}\scriptsize\rightmark \hfil \thepage}%
\def\@evenhead{\scriptsize\thepage \hfil \leftmark\mbox{}}%
\def\@oddfoot{}%
\def\@evenfoot{}}
 \newcommand{\putFrag}[4]{\begin{figure}[p]
                            \centering
                            #4
			    \includegraphics[width=#3in]{figures/#1.eps}
            		    \caption{}
     			    \label{fig:#1}
                          \end{figure}
                          \clearpage}
 \newcommand{\putTable}[3]{\begin{table}[p]
  			    \centering
		            #3
            		    \caption{}
     			    \label{tab:#1}
			  \end{table}
			  \clearpage}
 \newcommand{\capFrag}[2]{\noindent Fig.~\ref{fig:#1}. #2 \medskip\\}
 \newcommand{\capTable}[2]{\noindent Tab.~\ref{tab:#1}. #2 \medskip\\}
 \newcommand{\putFrag}[4]{\begin{figure}[htbp]
                            \centering
                            #4
			    \includegraphics[width=#3in]{figures/#1.eps}
            		    \caption{#2}
           		    \label{fig:#1}
                          \end{figure} }
 \newcommand{\putTable}[3]{\begin{table}[htbp]
  			    \centering
		            #3
     			    \caption{#2}
     			    \label{tab:#1}
			  \end{table} }
 \newcommand{\capFrag}[2]{}
 \newcommand{\capTable}[2]{}
 \newcommand{\defn}{\triangleq}
 \renewcommand{\vec}[1]{\ensuremath{\boldsymbol{#1}}}
 \newcommand{\mc}[1]{\ensuremath{\mathcal{#1}}}
 \DeclareMathOperator{\E}{\operatorname{E}}
 \DeclareMathOperator*{\argmax}{argmax}
 \newtheorem{theorem}{Theorem}
 \newtheorem{lemma}{Lemma}
 \newtheorem{corollary}{Corollary}
 \renewcommand{\eqref}[1]{(\ref{eq:#1})}
 \newcommand{\Figref}[1]{Figure~\ref{fig:#1}}
 \newcommand{\figref}[1]{Fig.~\ref{fig:#1}}
 \newcommand{\secref}[1]{Section~\ref{sec:#1}}
 \newcommand{\appref}[1]{Appendix~\ref{app:#1}}
 \newcommand{\lemref}[1]{Lemma~\ref{lem:#1}}
 \newcommand{\thmref}[1]{Theorem~\ref{thm:#1}}
 \newcommand{\corref}[1]{Corollary~\ref{cor:#1}}
 \newcommand{\textr}[1]{\textcolor{black}{#1}}
\newcommand{\sizecorr}[1]{\makebox[0cm]{\phantom{$\displaystyle #1$}}}
 \newcounter{comment}[section]
 \newcounter{texthead}[section]
 \renewenvironment{itemize}
   {\begin{list}{\labelitemi}{\topsep 0.05in \itemsep 0in}}{\end{list}}
 \newcommand{\con}{_{\text{\sf con}}}
\begin{document}
\setlength{\arraycolsep}{0.8mm}
 \title{
\huge On the Design of Large Scale Wireless Systems (with detailed proofs)}
	 \author{
	 \IEEEauthorblockN{
	 Rohit Aggarwal, Can Emre Koksal, and Philip Schniter} 	\\
         \IEEEauthorblockA{Dept. of ECE,
	 		   The Ohio State University,
			   Columbus, OH 43210. \\
			   Email: \{aggarwar, koksal, schniter\}@ece.osu.edu} }
 \date{\today}
 \maketitle

\begin{abstract}
In this paper\footnote{This work will be published in part at 
\emph{IEEE International Conference on Computer 
Communications} $(\mathrm{INFOCOM})$, $2012$.}, we consider 
the downlink of large OFDMA-based networks and 
study their performance bounds as a function of the number of - 
transmitters $B$, users $K$, and resource-blocks $N$.
Here, a resource block is a collection of subcarriers such that all such
collections, that are disjoint have associated independently fading
channels. In particular, we analyze the expected achievable 
sum-rate as a function of above variables and derive novel upper and lower bounds
for a general spatial geometry of transmitters, a truncated path-loss
model, and a variety of fading models. We establish the associated scaling laws
for dense and extended networks, and propose design
guidelines for the regulators to guarantee various QoS constraints and, at the same time,
maximize revenue for the service providers. Thereafter, we develop a distributed 
resource allocation scheme that achieves the same sum-rate scaling 
as that of the proposed upper bound for a wide range of $K, B, N$. Based 
on it, we compare low-powered peer-to-peer 
networks to high-powered single-transmitter networks and give an
additional design principle. 
Finally, we also show how our results can be extended to the 
scenario where each of the $B$ transmitters have $M (>1)$ co-located antennas.
\end{abstract}

\section{Introduction} \label{sec:intro}

With the widespread usage of smart phones and an increasing demand for
numerous mobile applications, wireless cellular/dense networks have grown
significantly in size and complexity. Consequently, the decisions regarding the
deployment of transmitters (base-stations, femtocells, picocells etc.), the 
maximum number of subscribers, the amount to be spent on purchasing 
more bandwidth, and the revenue model to choose have become much more 
complicated for service providers. Understanding the performance limits of 
large wireless networks and the optimal balance between the number of 
serving transmitters, the number of subscribers, the number of antennas
used for physical-layer communication, and the amount of available bandwidth 
to achieve those limits 
are critical components of the decisions made. Given that the most significant 
fraction of the performance growth of wireless networks in the last few decades is
associated~\cite{Chandrasekhar:Comm_Mag:08} with cell sizes (that affect
interference management schemes) and the amount of available bandwidth,
the aforementioned issues become more important.

To answer some of the above questions, we analyze the expected
achievable downlink sum-rate in large OFDMA systems as a function 
of the number of transmitters $B$, users $K$, available resource-blocks 
$N$, and/or co-located antennas at each transmitter $M$.
Here, a resource block is a collection of subcarriers such that all such disjoint sub-collections 
have associated independently fading channels. Using our analysis, we 
make the following contributions:
\begin{itemize}
\item For a general spatial geometry of transmitters and the end users,
we develop novel non-asymptotic upper and lower bounds on the average achievable rate as a
function of $K$, $B$, and $N$.

\item We consider asymptotic scenarios in two networks: dense
and regular-extended, in which user nodes have a uniform spatial
distribution. Under this setup, we evaluate our bounds for Rayleigh, Nakagami-$m$, Weibull, and
LogNormal fading models along with a truncated path-loss model using 
various results from the {\em extreme value theory}, and specify the 
associated scaling laws in all parameters.

\item With the developed bounds, we give four design principles 
for service providers/regulators. 
In the first scenario, we consider a {\em dense femtocell network} and develop an asymptotic
\emph{necessary} condition on $K,\ B$, and $N$ to guarantee a non-diminishing rate for each
user. In the second scenario, we consider an extended multicell network and
develop asymptotic \emph{necessary} conditions for $K$, $B$, and $N$ to guarantee a minimum
return-on-investment for the service provider while maintaining a minimum per-user throughput. 
In the third and fourth scenarios, we consider 
extended multicell networks and derive bounds for the choice of user-density $K/B$ in order 
for the service provider to maximize the revenue per transmitter and, at the same time, keep 
the per-user rate above a certain limit. 

\item For dense and regular-networks, we find a distributed resource allocation
scheme that achieves, for a 
wide range of $\{K, B, N\}$, a sum-rate scaling equal to that of the
upper bound (on achievable sum-rate) that we developed earlier.

\item Using the proposed achievability scheme, we show that the achievable sum-rate
of peer-to-peer networks increases linearly with the number of coordinating transmit 
nodes $B$ under fixed
power allocation schemes only if $B = O\big(\frac{\log K}{\log \log K}\big)$. 
Our result extends the result in~\cite{sharif}, wherein it was stated that if
$B = \Omega(\log K)$, then a linear increase in achievable sum-rate w.r.t. $B$
cannot be achieved. 
We end our discussion with a note on MISO (Multiple-Input Single-Output) 
systems, where there are a fixed 
number of co-located antennas at each transmitter, and obtain a similar 
distributed resource allocation problem as we found earlier towards achievability of
expected achievable sum-rate.
\end{itemize}


We now discuss related work. Calculation of achievable performance of wireless networks has been a
challenging, and yet an extremely popular problem in the literature.
The performance of large networks has been mainly analyzed in the asymptotic
regimes and the results have been in the form of scaling
laws~\cite{gupta&kumar,liang,leveque,massimo,gross,kulkarni,Ebrahimi,sharif,sripathi,choi}
following the seminal work by Gupta and Kumar~\cite{gupta&kumar}.
\textr{Various channel and propagation models (e.g., distance-based power-attenuation 
models and fading) have been incorporated in the scaling law analyses of wireless 
networks in~\cite{VTCgesbert,gesbert,sohn}. The path-loss model used by these studies 
are based on far-field assumption, which is developed to model long-distance electro-magnetic wave 
propagation. These models can be problematic~\cite{friis,Inaltekin} for random networks, 
since the singularity of the channel gain at the zero distance affects the asymptotic 
behavior of the achievable rates significantly. Indeed, the capacity scaling law of 
$\Theta(\log K)$ found in~\cite{gesbert,sohn} arises due to the unboundedly increasing 
channel-gains of the users close to the transmitter, whereas, under a fixed path-loss, 
the scaling law changes to $\Theta(\log \log K)$.}

\textr{Unlike the aforementioned studies, we provide {\em non-asymptotic} bounds\footnote{
\textr{Even though \thmref{thm1} is non-asymptotic, the subsequent analyses focus 
on scaling laws, which we derive based on \thmref{thm1}. However, we also discuss 
how to evaluate/simplify our bounds, so that they can provide further insights into 
the achievable performance in various non-asymptotic scenarios.}}
(in \thmref{thm1}) 
for multicellular wireless networks. To develop our bounds, we use a truncated path-loss 
model that eliminates the singularity of unbounded path-loss models. Moreover, we take into 
account the bandwidth and number of transmitters (and/or antennas) in large networks, and 
provide a distributed scheme that achieves a performance, which scales identical to the 
optimal performance 
}\textr{
with the number of users, the number of resource blocks, and the 
number of base stations.}

The rest of the paper is organized as follows. In \secref{model}, we introduce our
system model. In \secref{bounds}, we give general upper and lower bounds on expected achievable
sum-rate. We also give, for the cases of dense and regular-extended networks, associated
sum-rate scaling laws and four network-design principles. In \secref{achieve},
we find a deterministic power allocation scheme that governs the proposed distributed 
achievability scheme, followed by an analysis of peer-to-peer networks. In 
\secref{note}, we provide details of another achievability scheme, similar to that
developed in \secref{achieve}, for MISO systems. Finally, we 
conclude in \secref{conclusion}.

\section{System Model} \label{sec:model}

We consider a time-slotted OFDMA-based downlink network
of $B$ transmitters (or base-stations or femtocells or geographically 
distributed antennas) and $K$ active users, as shown in \figref{model2}.
The transmitters (TX) lie in a disc of radius $p-R$ $(p > R > 0)$, and 
the users are distributed according to some spatial distribution 
in a concentric disc of radius $p$. Under such general settings, \thmref{thm1}
gives bounds on the expected achievable sum-rate of the system. In the sequel, 
however, we assume for simplicity that the transmitter locations are
arbitrary and deterministic and the users are uniformly distributed.
This model too is quite general and can be applied to several network configurations.
For example, it models a \emph{dense network} when transmitter locations are 
random and the network radius $p$ is fixed. Similarly, it models a 
\emph{multi-cellular regular extended network} when the transmitters (or base-stations) are 
located on a regular hexagonal grid with a fixed grid-size, i.e., $p \propto \sqrt{B}$.

Let us denote the coordinates of TX $i$ ($1 \leq i \leq B$) by $(a_i, b_i)$, 
and the coordinates of user $k$ ($1 \leq k \leq K$) by $(x_k, y_k)$.
Therefore, $(a_i, b_i)$ are assumed to be known for all $i$, and $(x_k, y_k)$ is governed by 
the following probability density function (pdf):
\begin{equation}
f_{(x_k, y_k)}\big(x,y\big) = \begin{cases} \frac{1}{\pi p^2} & \mathrm{if}~x^2 + y^2 \leq p^2 \\
0 & \textrm{otherwise}. \end{cases} \label{eq:locpdf}
\end{equation}


We now describe the channel model. We assume that the OFDMA subchannels are
grouped into $N$ independently-fading resource blocks~\cite{Leinonen}, across 
which the transmitters (TXs) schedule users for downlink data-transmission.
We denote the complex-valued channel gain over resource-block $n$ ($1 \leq n \leq N$)
between user $k$ and TX $i$ by $h_{i,k,n}$, and assume that it is defined as
\begin{eqnarray}
h_{i,k,n} &\defn& \beta R_{i,k}^{-\alpha} \nu_{i,k,n}. \label{eq:channel}
\end{eqnarray}
Here, $\beta R_{i,k}^{-\alpha}$ denotes the path-loss attenuation,
\begin{eqnarray}
R_{i,k} = \max\{r_0, \sqrt{(x_k-a_i)^2+(y_k-b_i)^2}\}
\end{eqnarray}
for positive constants $\alpha, \beta, r_0$ ($\alpha >1, r_0<R$), and the fading factor $\nu_{i,k,n}$ is a
complex-valued random variable that is i.i.d. across all $(i,k,n)$.
\textr{Note that $r_0$ is the truncation parameter that eliminates 
singularity in the path-loss model.}
Currently, we keep the distribution of $\nu_{i,k,n}$ general. Specific assumptions 
on the fading model $\{\nu_{i,k,n}\}$ will be made in subsequent sections.
Assuming 
unit-variance AWGN, the channel Signal-to-Noise Ratio (SNR) between 
user $k$ and TX $i$ across resource-block $n$ can now be defined as
\begin{eqnarray}
\gamma_{i,k,n} \defn |h_{i,k,n}|^2 &=& \beta^2 R_{i,k}^{-2\alpha} |\nu_{i,k,n}|^2.
\label{eq:gamdef}
\end{eqnarray}

\textr{We initially assume that perfect knowledge of the users' channel-SNRs from all 
TXs is available at every transmitter\footnote{This can be achieved via a back-haul network 
that enables sharing of users' channel-state information. Later, we will propose 
a distributed resource allocation scheme that does not require any sharing of CSI 
among the transmitters and its sum-rate scales at the same rate \textr{as that of an upper bound} on 
the optimal centralized resource allocation scheme for a wide range of network parameters.}.
We also assume that the transmitters do not coordinate to send data to a particular user. 
Therefore, if a user is being served by more than one transmitter, then while decoding 
the signal from a given TX, it treats the signals from all other TXs as noise. 
This assumption is restrictive since one may achieve a higher performance by 
allowing coordination among TXs to send data to users. However, as will be explained 
after \thmref{thm1} in \secref{bounds}, our results and design principles also hold for 
a class of networks wherein coordination among TXs is allowed.}

\textr{
The maximum achievable sum-rate of our system can now be written as
\begin{eqnarray}
\mc{C}_{\vec{x},\vec{y},\vec{\nu}}(\vec{U}, \vec{P}) &\defn&
\sum_{i=1}^B \sum_{n=1}^N \log \bigg(1 + \frac{P_{i,n} \, \gamma_{i,U_{i,n},n}}
{1+\sum_{j\neq i} P_{j,n} \, \gamma_{j,U_{i,n},n}}\bigg) \label{eq:c}
\end{eqnarray}
where $\vec{x} := \{x_k ~\textrm{for all}~ k\}$, $\vec{y} := \{y_k ~\textrm{for all}~ k\}$,
$\vec{\nu} := \{\nu_{i,k,n} ~\textrm{for all}~ i, k, n\}$, $\vec{U} := \{ U_{i,n} ~\textrm{for all}~ i,n\}$, 
and $\vec{P} := \{ P_{i,n} ~\textrm{for all}~ i,n\}$. Here, $U_{i,n}$ is
the sum-rate maximizing user scheduled by TX $i$ across 
resource-block $n$, and $P_{i,n}$ is the corresponding allocated power.
We assume that, in each time-slot, the total power allocated
by each TX is upper-bounded by $P\con$. Therefore, $\sum_n P_{i,n} \leq P\con$ 
}\textr{
for all $i$. One may also write \eqref{c} as
\begin{eqnarray}
\mc{C}_{\vec{x},\vec{y},\vec{\nu}}(\vec{U}, \vec{P})
&=& \max_{\vec{u} \in \mc{U},\,\vec{p} \in \mc{P}}
\sum_{i=1}^B \sum_{n=1}^N \log\bigg(1+\frac{p_{i,n} \, \gamma_{i,u_{i,n},n}}
{1+\sum_{j\neq i}p_{j,n} \, \gamma_{j,u_{i,n},n}}\bigg), \label{eq:d2}
\end{eqnarray}
where $\vec{u} \defn \{u_{i,n} ~\textrm{for all}~ i,n\}$, 
$\vec{p} \defn \{p_{i,n} ~\textrm{for all}~ i,n\}$, and $\{\mc{U}, \mc{P}\}$ are the sets of feasible 
user allocations and power allocations. In particular,
\begin{eqnarray}
\mc{U} &\defn& \big\{\{u_{i,n}\}: 1 \leq u_{i,n} \leq K ~\textrm{for all}~ i,n\}~\textrm{and} \nonumber \\
\mc{P} &\defn& \big\{\{p_{i,n}\}: p_{i,n} \geq 0 ~\textrm{for all}~ i,n, ~\textrm{and}~ \sum_n p_{i,n} \leq P\con ~\textrm{for all}~ i\big\}.
\end{eqnarray}
}
In the next section, we derive novel upper and lower bounds on the expected value of 
$\mc{C}_{\vec{x},\vec{y},\vec{\nu}}(\vec{U}, \vec{P})$ that are later used to determine the scaling laws and 
develop various network-design guidelines. To state the scaling laws, we use the following 
notations: for two non-negative functions $f(t)$ and $g(t)$,
we write $f(t) = O(g(t))$ if there exists constants $c_1 \in \mathbb{R}^+$ and
$r_1 \in \mathbb{R}$ such that $f(t) \leq c_1 \, g(t)~\textrm{for all}~t \geq r_1$.
Similarly, we write
$f(t) = \Omega(g(t))$ if there exists constants $c_2 \in \mathbb{R}^+$ and
$r_2 \in \mathbb{R}$ such that $f(t) \geq c_2 \, g(t)$ for all $t \geq r_2$. In other words,
$g(t) = O(f(t))$. Finally, we write 
$f(t) = \Theta(g(t))$ if $f(t) = O(g(t))$ and $f(t) = \Omega(g(t))$. 

\section{Proposed General Bounds on Achievable Sum-Rate} \label{sec:bounds}

The expected achievable sum-rate of the system can be written, using \eqref{c}, as
\begin{equation}
\mc{C}^* = \E\big\{
\mc{C}_{\vec{x},\vec{y},\vec{\nu}}(\vec{U}, \vec{P}) \big\},
\label{eq:c*}
\end{equation}
where the expectation is over the SNRs $\{\gamma_{i,k,n}~\textrm{for all}~ i,k,n\}$.
The following theorem gives
bounds on \eqref{c*} that depend only on the sum-power constraint 
and the exogenous channel-SNRs.

\begin{theorem}[General bounds] \label{thm:thm1}
The expected achievable sum-rate of the system, $\mc{C}^*$, can be bounded as:
\begin{eqnarray}
\sum_{i,n} \E\Bigg\{
\frac{\log \big(1+ P\con \, \gamma_{i,k^*,n}
\big)} {N+ P\con \sum_{j \neq i} \gamma_{j,k^*,n}
} \Bigg\} 
&\leq \mc{C}^* \leq& \sum_{i,n} \E\Big\{ \log \big(1+ P\con \max_k \gamma_{i,k,n} \big) \Big\},\label{eq:b}
\end{eqnarray}
where $k^*$ in the lower bound is a function of TX $i$ and resource-block $n$ and is identical to
$\argmax_k \gamma_{i,k,n}$.
Moreover, an alternate upper bound on $\mc{C}^*$ obtained via Jensen's inequality over powers is:
\begin{eqnarray}
\mc{C}^* \leq N \sum_{i} \E \bigg\{ \log \Big(1+ \frac{P\con}{N} \max_{n,k} \gamma_{i,k,n} \Big) \bigg\}.  \label{eq:b2}
\end{eqnarray}
\end{theorem}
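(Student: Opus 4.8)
The plan is to establish the three inequalities of \thmref{thm1} by a sequence of elementary but carefully ordered bounding steps, starting from the exact expression \eqref{c} (equivalently \eqref{d2}).

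\emph{Upper bound \eqref{b}.} First I would drop the interference term in the denominator of each summand of \eqref{d2}: since $p_{j,n}\gamma_{j,u_{i,n},n}\geq 0$, we have $\log\big(1+\frac{p_{i,n}\gamma_{i,u_{i,n},n}}{1+\sum_{j\neq i}p_{j,n}\gamma_{j,u_{i,n},n}}\big)\leq \log(1+p_{i,n}\gamma_{i,u_{i,n},n})$. This decouples the TXs, so the maximization over $\vec u,\vec p$ splits across $i$. For each fixed $i$, the choice of scheduled user $u_{i,n}$ that maximizes $\gamma_{i,u_{i,n},n}$ is $\argmax_k\gamma_{i,k,n}$, giving $\log(1+p_{i,n}\max_k\gamma_{i,k,n})$. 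Next, monotonicity of $\log(1+cx)$ in $x\geq 0$ together with $p_{i,n}\leq\sum_n p_{i,n}\leq P\con$ replaces $p_{i,n}$ by $P\con$. Summing over $n$ and $i$ and taking expectations yields \eqref{b}.

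\emph{Alternate upper bound \eqref{b2}.} Starting again from $\sum_{i,n}\log(1+p_{i,n}\max_k\gamma_{i,k,n})$, for each $i$ I would apply Jensen's inequality to the concave function $\log(1+cx)$ with the ``uniform'' averaging over the $N$ resource blocks: $\frac1N\sum_n\log(1+p_{i,n}g_{i,n})\leq\log\big(1+\frac1N\sum_n p_{i,n}g_{i,n}\big)$ where $g_{i,n}=\max_k\gamma_{i,k,n}$; bounding each $g_{i,n}\leq\max_{n,k}\gamma_{i,k,n}$ and using $\sum_n p_{i,n}\leq P\con$ gives $\frac1N\sum_n\log(1+p_{i,n}g_{i,n})\leq\log(1+\frac{P\con}{N}\max_{n,k}\gamma_{i,k,n})$. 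Multiplying by $N$, summing over $i$, and taking expectations gives \eqref{b2}.

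\emph{Lower bound \eqref{b}.} Here I would exhibit a specific feasible policy and lower-bound the optimal rate by its value. Choose, for every $i$ and $n$, the user $U_{i,n}=k^*:=\argmax_k\gamma_{i,k,n}$ and the equal split $P_{i,n}=P\con/N$. This is feasible. Then $\mc C\geq\sum_{i,n}\log\big(1+\frac{(P\con/N)\gamma_{i,k^*,n}}{1+(P\con/N)\sum_{j\neq i}\gamma_{j,k^*,n}}\big)$. To reach the stated form I would use the elementary inequality $\log(1+\frac{a}{1+b})\geq\frac{\log(1+a)}{1+b}$ for $a,b\geq 0$ (which follows because $\log(1+\frac{a}{1+b})\geq\frac{1}{1+b}\log(1+a)$ can be checked by concavity/subadditivity of $\log$, or from $\log(1+\frac{a}{1+b})=\log(1+a+b)-\log(1+b)$ and convexity arguments) with $a=(P\con/N)\gamma_{i,k^*,n}$ and $b=(P\con/N)\sum_{j\neq i}\gamma_{j,k^*,n}$; clearing the $1/N$ inside the $\log$ against the numerator's $P\con$ and multiplying denominator through by $N$ produces exactly $\frac{\log(1+P\con\gamma_{i,k^*,n})}{N+P\con\sum_{j\neq i}\gamma_{j,k^*,n}}$ after noting $\log(1+(P\con/N)x)\le\log(1+P\con x)$ — wait, I need the inequality in the right direction, so instead I would first apply the $\log(1+\frac{a}{1+b})\ge\frac{\log(1+a)}{1+b}$ step with $a=(P\con/N)\gamma_{i,k^*,n}$, then lower-bound $\log(1+(P\con/N)\gamma_{i,k^*,n})$ — this is the delicate point, since that quantity is \emph{smaller} than $\log(1+P\con\gamma_{i,k^*,n})$. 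The resolution is to instead pick the cruder feasible power $P_{i,n}=P\con$ on a single block and $0$ elsewhere per TX is not allowed jointly; the correct route is to keep $P_{i,n}=P\con/N$ but use $\log(1+\tfrac{1}{N}c)\geq\tfrac1N\log(1+c)$ (concavity of $\log(1+\cdot)$ through the origin) on the numerator term after the first step, which supplies the missing factor and, combined with the $N$ already in the denominator from the first inequality, yields the claimed bound. Summing over $(i,n)$ and taking expectations completes the proof.

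\emph{Main obstacle.} The upper bounds are routine. The real work is the lower bound: one must commit to an explicit feasible $(\vec U,\vec P)$ and then manipulate $\log(1+\frac{a}{1+b})$ into a ratio of the form $\frac{\log(1+P\con a')}{N+P\con b'}$ using the two scalar inequalities $\log(1+\tfrac{a}{1+b})\geq\tfrac{\log(1+a)}{1+b}$ and $\log(1+\tfrac{c}{N})\geq\tfrac1N\log(1+c)$ in the correct order so that the $N$'s land in the denominator rather than shrinking the numerator; verifying those two scalar inequalities (both follow from concavity of $t\mapsto\log(1+t)$ and $\log(1+0)=0$) is the one place where care is needed.
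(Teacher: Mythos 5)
Your proposal is correct and follows essentially the same route as the paper's proof: the upper bounds are obtained by dropping interference and applying Jensen over the $N$ resource blocks exactly as in \appref{ap1}, and the lower bound uses the same feasible policy (equal power $P\con/N$ per block, scheduling $k^*=\argmax_k\gamma_{i,k,n}$) together with the concavity-through-the-origin inequality $\log(1+d_1/d_2)\geq \log(1+d_1)/d_2$ for $d_2\geq 1$. Your two scalar inequalities compose into the paper's single application of that bound with $d_2=N+P\con\sum_{j\neq i}\gamma_{j,k^*,n}$ (the paper first rewrites the SINR exactly as $P\con\gamma_{i,k^*,n}/(N+P\con\sum_{j\neq i}\gamma_{j,k^*,n})$ by clearing the factor $1/N$), so the final chain you describe, despite the hesitation in its exposition, is valid and equivalent.
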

\begin{IEEEproof}
See \appref{ap1} for proof. 
\textr{Note that the upper bounds in \eqref{b} and \eqref{b2} can be further simplified via Jensen's inequality
by taking the expectation over $\{\gamma_{i,k,n}\}$ inside the logarithm and can be evaluated easily for finite $K$.}
\end{IEEEproof}
\textr{The upper bounds in \thmref{thm1} are obtained by ignoring
interference, and the lower bound is obtained by allocating equal
powers $\frac{P\con}{N}$ to every resource-block by every TX. 
As mentioned earlier, our bounds, which assume an uncoordinated system, also serve as bounds
(up to a constant scaling factor) for the expected max-sum-rate of a class of networks wherein the
number of transmitters coordinating to send data to any user on any resource block is bounded.
This can be explained using the following argument. 
Let $S$ transmitters coordinate to send data to user $k$ on resource block $n$ and let 
$\{\gamma_{1,k,n}, \ldots, \gamma_{S,k,n}\}$ be the corresponding instantaneous exogenous Signal-to-Noise 
ratios. Then, an upper bound on the sum-rate of those $S$ transmitters across resource block $n$ is
$\log\Big(1+\Big(\sum_{s=1}^S \sqrt{P_{s,n} \gamma_{s,k,n}}\Big)^2\Big)$~\cite{MaiVu}, where
$P_{s,n}$ is the power allocated by transmitter $s$ across resource block $n$.
However, this term is upper bounded by $S\sum_{s=1}^S \log(1+P_{s,n} \gamma_{s,k,n})$,
which is $S$ times the upper bound on sum-rate 
obtained by ignoring interference in a completely uncoordinated system (same as that used in 
\thmref{thm1}). Since $S$ is bounded, our subsequent scaling laws for the upper bound and the resulting design 
principles remain unchanged for this level of coordination. Now, our lower bound
assumes no coordination and allocates
equal power to every TX and every resource-block. Clearly, by coordinating among transmitters, 
one can achieve better performance. The above arguments, coupled with the fact that \thmref{thm1}
does not assume any specific channel-fading process or any specific distribution on transmitter and 
user-locations, make our bounds valid for a wide variety 
of coordinated and uncoordinated networks.
In the next subsection,
\secref{applications}, we evaluate the bounds in \thmref{thm1} under asymptotic situations in
two classes of networks -- 
\emph{dense} and \emph{regular-extended} -- using extreme-value theory, and then provide interesting 
design principles based on them.}



\subsection{Scaling Laws and Their Applications in Network Design} \label{sec:applications}

We first present an analysis of dense networks, followed by an analysis of regular-extended networks. 
In particular, we use extreme-value theory and \thmref{thm1} to obtain performance 
bounds and associated scaling laws. \textr{Our results hold good for uncoordinated systems
and a class of coordinated systems in which the number of transmitters coordinating
to send data to any user across any resource-block is bounded.}

\subsubsection{Dense Networks} \label{sec:dense}

Dense networks contain a large number of transmitters that are
distributed over a fixed area. Typically, such networks occur in
dense-urban environments and in dense femtocell deployments.
In our system-model, a dense network corresponds to the case 
in which $p$ is fixed, and $K, B, N$ are allowed to grow. The 
following two lemmas use extreme-value theory and 
\thmref{thm1} to give bounds on the
achievable sum-rate of the system for various fading channels.

\begin{lemma} \label{lem:exthm2}
For dense networks with large number of users $K$ and Rayleigh fading channels, i.e.,
$\nu_{i,k,n} \sim \mc{CN}(0, 1)~\textrm{for all}~ i,k,n$, 
\begin{eqnarray}
\big(\log(1+P\con l_K) + O(1)\big)BN f^{\sf{DN}}_{\sf{lo}}(r,B,N) 
\leq \mc{C}^* \leq \big(\log(1+P\con l_K) + O(1)\big)BN, \label{eq:rayden}
\end{eqnarray}
where $r>0$ is a constant, $l_K = \beta^2 r_0^{-2\alpha} \log \frac{K r_0^2}{p^2}$, and
$f^{\sf{DN}}_{\sf{lo}}(r,B,N) = \frac{r^2}{(1+r^2)(N+P\con \beta^2 r_0^{-2\alpha}(1+r) B)}$.
\textr{Moreover, the upper bound on $\mc{C}^*$ obtained via \eqref{b2} gives 
$\mc{C}^* \leq \Big(\log\big(1+\frac{P\con}{N} l_{KN}\big) + O(1)\Big)BN$, where
$l_{KN} = \beta^2 r_0^{-2\alpha} \log \frac{KN r_0^2}{p^2}$.}
The following scaling laws result from \eqref{rayden}: 
\begin{eqnarray}
\mc{C}^* &=& O(BN \log \log K),~\textrm{and}~ 
\mc{C}^* = \Omega(\min\{B,N\} \log \log K). \label{eq:10}
\end{eqnarray}
\end{lemma}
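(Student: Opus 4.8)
The plan is to specialize the general bounds of \thmref{thm1} to the Rayleigh case, in which each $|\nu_{i,k,n}|^2$ is a unit-mean exponential, so that everything reduces to the behaviour of maxima of i.i.d.\ exponentials damped by the truncated path loss; the scaling laws in \eqref{10} then fall out by elementary algebra on the resulting non-asymptotic bounds. Write $\mu_0\defn\beta^2 r_0^{-2\alpha}$ and $q\defn r_0^2/p^2$. Since every TX lies in the disc of radius $p-R$ with $r_0<R$, the disc of radius $r_0$ about any TX is contained in the user disc; hence a uniformly placed user lies within distance $r_0$ of TX $i$ with probability exactly $q$, $R_{i,k}\ge r_0$ always, and therefore $\gamma_{i,k,n}=\beta^2 R_{i,k}^{-2\alpha}|\nu_{i,k,n}|^2\le\mu_0|\nu_{i,k,n}|^2$ pointwise. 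Also, for fixed $i,n$ the $\gamma_{i,k,n}$, $k=1,\dots,K$, are i.i.d.

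For the upper bounds I would start from $\mc{C}^*\le\sum_{i,n}\E\{\log(1+P\con\max_k\gamma_{i,k,n})\}$ in \eqref{b}. Using $\gamma_{i,k,n}\le\mu_0|\nu_{i,k,n}|^2$ and Jensen's inequality (concavity of $x\mapsto\log(1+P\con x)$) inside each term, each summand is at most $\log(1+P\con\mu_0\,\E\{\max_k|\nu_{i,k,n}|^2\})$. The maximum of $K$ i.i.d.\ unit exponentials has mean $H_K=\sum_{m=1}^K 1/m=\log K+O(1)$; since $l_K=\mu_0\log(Kq)=\mu_0\log K+O(1)$ and $l_K\to\infty$, this gives $\log(1+P\con\mu_0 H_K)=\log(1+P\con l_K)+O(1)$. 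Summing the $BN$ terms yields the right inequality of \eqref{rayden}. The identical computation applied to \eqref{b2}, with $\max_{n,k}$ over $KN$ exponentials and $H_{KN}=\log(KN)+O(1)$ in place of $H_K$, produces $\mc{C}^*\le(\log(1+\frac{P\con}{N}l_{KN})+O(1))BN$ with $l_{KN}=\mu_0\log(KNq)$.

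The lower bound is the crux. From \eqref{b}, $\mc{C}^*\ge\sum_{i,n}\E\{\log(1+P\con\max_k\gamma_{i,k,n})\,/\,(N+P\con\sum_{j\ne i}\gamma_{j,k^*,n})\}$ with $k^*=\argmax_k\gamma_{i,k,n}$, so the numerator equals $\log(1+P\con\max_k\gamma_{i,k,n})$. I would bound the numerator and the denominator of a generic term separately. For the numerator, conditioning on a user falling within $r_0$ of TX $i$ and using independence of location and fading gives the single-user tail bound $\Pr\{\gamma_{i,k,n}>t\}\ge q\,e^{-t/\mu_0}$, hence $\Pr\{\max_k\gamma_{i,k,n}>t\}\ge 1-(1-q\,e^{-t/\mu_0})^K$; choosing $t$ so that $q\,e^{-t/\mu_0}=v/K$, i.e.\ $t=\mu_0\log(Kq/v)=l_K-\mu_0\log v$, and using $(1-v/K)^K\le e^{-v}$ gives $\Pr\{\max_k\gamma_{i,k,n}>t\}\ge 1-e^{-v}$. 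Taking $v=\log(1+r^2)$ turns $1-e^{-v}$ into $\frac{r^2}{1+r^2}$ while keeping $t=l_K+O(1)$, so $\E\{\log(1+P\con\max_k\gamma_{i,k,n})\}\ge\log(1+P\con t)\cdot\frac{r^2}{1+r^2}=(\log(1+P\con l_K)+O(1))\frac{r^2}{1+r^2}$. For the denominator, bound $\gamma_{j,k^*,n}\le\mu_0|\nu_{j,k^*,n}|^2$, so $N+P\con\sum_{j\ne i}\gamma_{j,k^*,n}\le N+P\con\mu_0 S$ with $S\defn\sum_{j\ne i}|\nu_{j,k^*,n}|^2$. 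The key observation is that $k^*$ is a function of the channels $\{\gamma_{i,k,n}\}_k$ to TX $i$ only, which are independent of $\{\nu_{j,k,n}:j\ne i\}$; hence, conditioned on $k^*$, the variables $\{|\nu_{j,k^*,n}|^2\}_{j\ne i}$ are still i.i.d.\ unit exponentials, so $S\sim\mathrm{Gamma}(B-1,1)$ and $S$ is independent of $\max_k\gamma_{i,k,n}$. Therefore the expectation of the generic term factors, and Jensen gives $\E\{1/(N+P\con\mu_0 S)\}\ge 1/(N+P\con\mu_0(B-1))\ge 1/(N+P\con\mu_0(1+r)B)$. Multiplying the two factors and summing over the $BN$ pairs $(i,n)$ yields the left inequality of \eqref{rayden}.

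Finally, since $l_K=\mu_0\log(Kq)=\Theta(\log K)$ we have $\log(1+P\con l_K)+O(1)=\Theta(\log\log K)$, so the upper bound immediately gives $\mc{C}^*=O(BN\log\log K)$; and because $BN f^{\sf{DN}}_{\sf{lo}}(r,B,N)=\frac{r^2}{1+r^2}\cdot\frac{BN}{N+P\con\mu_0(1+r)B}$ with $N+P\con\mu_0(1+r)B\le(1+P\con\mu_0(1+r))\max\{B,N\}$, we get $BN f^{\sf{DN}}_{\sf{lo}}=\Omega(\min\{B,N\})$ and hence $\mc{C}^*=\Omega(\min\{B,N\}\log\log K)$, establishing \eqref{10}. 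I expect the main obstacle to be the lower bound: the two non-routine ingredients are the decoupling of the interference denominator from the extremal numerator through the $\{\gamma_{i,k,n}\}_k$-measurability of $k^*$, and the extreme-value estimate that pins down the correct scale $l_K$ (rather than merely $\log K$) by isolating the users close to the transmitter; the rest is bookkeeping of the constant $r$ and of the $O(1)$ terms, where the ``$K$ large'' hypothesis is used to ensure $l_K>0$ and $v\le K$.
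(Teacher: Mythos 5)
Your proof is correct, and it reaches \eqref{rayden} by a genuinely more elementary route than the paper's. The paper's argument runs through three auxiliary lemmas: it computes the exact cdf of $\gamma_{i,k,n}$ under the truncated path-loss model, shows via the growth function $\frac{1-F}{f}\to\beta^2r_0^{-2\alpha}$ that $\gamma_{i,k,n}$ lies in the Gumbel domain of attraction with scaling point $l_K$, and then invokes a concentration result from~\cite{sharif} to obtain $\E\{\log(1+P\con\max_k\gamma_{i,k,n})\}=\log(1+P\con l_K)+O(1)$ from both sides (the upper side additionally needs a crude a.s.\ bound $\log(1+P\con\beta^2r_0^{-2\alpha}K)$ on the complement event); the interference denominator is handled by Cantelli's inequality, which is where the factor $\frac{r^2}{1+r^2}$ and the term $(1+r)B$ originate. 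You avoid extreme-value theory entirely: your upper bounds use only Jensen's inequality, the pointwise domination $\gamma_{i,k,n}\le\beta^2r_0^{-2\alpha}|\nu_{i,k,n}|^2$, and the exact mean $H_K=\log K+O(1)$ of the maximum of $K$ unit exponentials; your lower bound on the numerator uses the one-line tail estimate $\Pr\{\gamma_{i,k,n}>t\}\ge \frac{r_0^2}{p^2}e^{-tr_0^{2\alpha}/\beta^2}$ with a threshold at $t=l_K+O(1)$ chosen so that the exceedance probability is exactly $\frac{r^2}{1+r^2}$; and your denominator is handled by the factorization of the expectation (valid because $k^*$ is measurable with respect to the TX-$i$ channels, which are independent of $\{\nu_{j,\cdot,n}\}_{j\ne i}$ once $R_{j,k^*}$ has been replaced by $r_0$) followed by Jensen applied to $x\mapsto 1/(N+P\con\beta^2r_0^{-2\alpha}x)$. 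Amusingly, the factor $\frac{r^2}{1+r^2}$ appears in both proofs for different reasons --- Cantelli on the interference sum in the paper, the tail probability of the maximum in yours --- and your denominator constant $N+P\con\beta^2r_0^{-2\alpha}(B-1)$ is in fact slightly sharper than the paper's $N+P\con\beta^2r_0^{-2\alpha}(1+r)B$, which you then correctly relax to match the stated $f^{\sf DN}_{\sf lo}$. What the paper's heavier machinery buys is reuse: the same extreme-value framework carries over essentially verbatim to the Nakagami, Weibull and LogNormal cases of \lemref{exthm3} and to \thmref{thm3}, whereas your harmonic-number computation and exact exponential tail are specific to Rayleigh fading. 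For this lemma in isolation, your argument is shorter, fully non-asymptotic with explicit constants, and complete.
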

\begin{IEEEproof}
For proof, see \appref{ap4}.
\end{IEEEproof}


Similar results under different fading models are summarized in the following lemma.
\begin{lemma} \label{lem:exthm3}
If $|\nu_{i,k,n}|$ belongs to either Nakagami-$m$, Weibull, or LogNormal family of distributions,
then, for dense networks, the $\mc{C}^*$ satisfies, for large $K$,
\vspace{2mm} \\ \vspace{2mm}  
\begin{tabular}{|r|l|l|}
\hline \footnotesize
\textsf{For Nakagami}-$(m,w)$: & $\mc{C}^* = O(BN\log \log K)$ & $\mc{C}^* = \Omega(\min\{B, N\} \log \log K) $\\ \footnotesize
\textsf{For Weibull}$\,(\lambda, t)$: & $\mc{C}^* = O(BN \log \log^{\frac{2}{t}} K)$ &  $\mc{C}^* = \Omega(\min\{B, N\} \log \log^{\frac{2}{t}} K)$\\ \footnotesize
\textsf{For LogNormal}$\,(a,\omega)$: & $\mc{C}^* = O(BN \sqrt{\log K})$  & $\mc{C}^* = \Omega(\min\{B, N\} \sqrt{\log K})$. \\
\hline
\end{tabular}
\normalsize
\end{lemma}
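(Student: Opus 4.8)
The plan is to run the argument of \lemref{exthm2} essentially verbatim, replacing only the Rayleigh-specific extreme-value estimate by the one appropriate to each family. The starting point is \thmref{thm1}: since $r_0 \le R_{i,k} \le 2p$ we have $\beta^2(2p)^{-2\alpha}|\nu_{i,k,n}|^2 \le \gamma_{i,k,n} \le \beta^2 r_0^{-2\alpha}|\nu_{i,k,n}|^2$, so the upper bound in \eqref{b} gives $\mc{C}^* \le BN\,\E\{\log(1+P\con\beta^2 r_0^{-2\alpha}M_K)\}$ with $M_K \defn \max_{1\le k\le K}|\nu_{i,k,n}|^2$, while the lower bound in \eqref{b} will be treated exactly as in \lemref{exthm2}. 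Thus everything reduces to the growth in $K$ of $\E\{\log(1+cM_K)\}$ for the relevant constants $c>0$, i.e.\ to the extreme-value behaviour of a maximum of $K$ i.i.d.\ copies of $|\nu|^2$.

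First I would record the law of $|\nu|^2$ in each case: for Nakagami-$(m,w)$, $|\nu|^2$ is Gamma with shape $m$; for Weibull$(\lambda,t)$, $|\nu|^2$ is Weibull with shape $t/2$; for LogNormal$(a,\omega)$, $\log|\nu|^2 = 2Z$ with $Z\sim\mathcal N(a,\omega^2)$. All three laws lie in the max-domain of attraction of the Gumbel law, so there are norming constants $b_K$ with $M_K\approx b_K$ in probability (after the usual affine rescaling), and the only thing that matters is the order of $\log b_K$: $b_K = \Theta(\log K)$ for Gamma, $b_K = \Theta\big((\log K)^{2/t}\big)$ for Weibull of shape $t/2$, and $b_K = \Theta\big(e^{2\omega\sqrt{2\log K}}\big)$ for LogNormal — the last because $\max_{k\le K}Z_k = a+\omega\sqrt{2\log K}+o(\sqrt{\log K})$. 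Since $\log(1+cM_K)$ has leading order $\log b_K$, this produces $\Theta(\log\log K)$, $\Theta\big(\log\log^{2/t}K\big)$, and $\Theta(\sqrt{\log K})$ respectively, matching the three rows of the table.

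To convert the in-probability statement into the two expectations actually needed, I would sandwich $M_K$ between deterministic levels $\ell_K$ and $u_K$, both of order $b_K$, with $\Pr(M_K>u_K)\to 0$ and $\liminf_K\Pr(M_K\ge\ell_K)>0$. For the upper bound, $\E\{\log(1+cM_K)\}\le\log(1+cu_K)+\E\{\log(1+cM_K)\,\mathbbm{1}_{\{M_K>u_K\}}\}$, and the residual term is negligible in the scaling because $\log M_K$ concentrates tightly around $\log b_K$ (polynomially small upper tail in $K$ for Gamma and Weibull, Gaussian for LogNormal, so Cauchy--Schwarz kills the residual); this gives the $O(BN\,\log b_K)$ bounds. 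For the lower bound, $\E\{\log(1+cM_K)\}\ge\log(1+c\ell_K)\,\Pr(M_K\ge\ell_K)=\Omega(\log b_K)$, and, as in \lemref{exthm2}, I would then condition on the event $\sum_{j\ne i}|\nu_{j,k^*,n}|^2\le(1+r)B$ for a suitable constant $r>0$: given the index $k^*$ and the user locations, the variables $\{\nu_{j,k^*,n}\}_{j\ne i}$ are i.i.d.\ with finite second moment (finite for all three families) and independent of $\{\nu_{i,k,n}\}_k$, so this event has conditional probability bounded away from $0$, and on it the denominator in \eqref{b} is at most $N+P\con\beta^2 r_0^{-2\alpha}(1+r)B$. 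Summing the resulting $BN$ terms and using $\frac{BN}{B+N}=\Theta(\min\{B,N\})$ yields the stated $\Omega(\min\{B,N\}\,\log b_K)$ lower bounds.

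The main obstacle is the per-family extreme-value bookkeeping — pinning down the order of $b_K$ (routine for Gamma and Weibull, a little more delicate for LogNormal, where one must check that $\log(1+cM_K)\sim 2\max_{k\le K}Z_k$ both concentrates and is uniformly integrable enough that $\E\{\log(1+cM_K)\}=\Theta(\sqrt{\log K})$) — together with the mechanical verification that the additive $O(1)$ terms and the multiplicative constants from $\beta,r_0,p,P\con$, the Gumbel scales, and the conditioning probability are all absorbed into $\Theta/O/\Omega$ without altering the stated rates. No ingredient beyond those already used in \lemref{exthm2} is needed; only the tail estimates change from family to family.
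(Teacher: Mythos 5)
Your proposal is correct, and it reaches the stated scaling laws by a genuinely different (and lighter) route than the paper. The paper's Appendix C works with the \emph{compound} SNR $\gamma_{i,k,n}=G_{i,k}|\nu_{i,k,n}|^2$: for each family it writes the exact mixture CDF $F_{\gamma_{i,k,n}}(\gamma)=1-\int e^{-(\cdot)}f_{G_{i,k}}(g)\,dg$ (using the path-loss density of \lemref{lem2}), proves that the integral terms over $g\in[\beta^2(p+d)^{-2\alpha},\beta^2 r_0^{-2\alpha})$ are asymptotically negligible relative to the atom at $g=\beta^2 r_0^{-2\alpha}$, deduces that the growth function has vanishing derivative (Gumbel domain), solves $F_{\gamma}(l_K)=1-1/K$ for the exact $l_K$, and then invokes Uzgoren's expansion to get the concentration estimate before repeating the chain \eqref{mast}--\eqref{49}. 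You instead sandwich the path loss between the constants $\beta^2(2p)^{-2\alpha}$ and $\beta^2 r_0^{-2\alpha}$ and apply classical extreme-value theory to the pure fading maximum $M_K=\max_k|\nu_{i,k,n}|^2$; this bypasses the entire mixture-CDF bookkeeping (which occupies most of the paper's appendix) at the cost of a multiplicative factor $(2p/r_0)^{2\alpha}$ inside the logarithm. Since $p$ is fixed in a dense network, that factor is absorbed into the $O(1)$ additive term and is harmless for the $O/\Omega$ statements of this lemma; it would, however, not recover the sharp leading constants (e.g.\ the $r_0^2/p^2$ prefactor in $l_K$) that the paper's computation provides and uses in the explicit bounds of \lemref{exthm2}/\lemref{exthm4} and the design principles. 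Your treatment of the interference denominator via Cantelli and the independence of $\{\nu_{j,k^*,n}\}_{j\neq i}$ from $k^*$ mirrors \lemref{lem1}, and your tail/Cauchy--Schwarz argument for converting the in-probability statements into expectations plays the role of \eqref{mast}--\eqref{1}; the only detail to fix is that the Cantelli threshold should be $(\mu+r\sigma)B$ with $\mu,\sigma$ the (finite) mean and standard deviation of $|\nu|^2$ for the family at hand, rather than $(1+r)B$, which is a cosmetic adjustment.
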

\begin{IEEEproof}
For proof, see \appref{ap5}.
\end{IEEEproof}

Based on \lemref{exthm2}, we now propose a design principle for large dense networks.
In the sequel, we call our system \emph{scalable} under
a certain condition, if the condition is not violated as the number of users
$K \to \infty$.
\\
\textbf{Principle 1.} \emph{In dense femtocell deployments, with the condition 
that the per-user throughput remains above a certain lower bound, for the
system to be scalable, the total number of independent resources $BN$ must 
scale as $\Omega \big(\frac{K}{\log \log K}\big)$.}

We use the dense-network abstraction for a dense femtocell
deployment~\cite{andrews} where the service provider wants to
maintain a minimum throughput per user. In
such cases, based on the upper bound on $\mc{C}^*$ in \eqref{10}, the
\emph{necessary} condition that the service provider must satisfy is:
\begin{eqnarray}
\frac{BN \log \log K}{K} = \Omega(1).
\end{eqnarray}
Therefore, the total number of independent resources $BN$, i.e.,
the product of number of transmitters and the number of resource blocks (or bandwidth), must scale
no slower than $\frac{K}{\log \log K}$. Otherwise, then the system is not scalable
and a minimum per-user throughput requirement cannot be maintained.

Next, we consider another class of networks, namely regular-extended
networks, and find performance bounds that motivate the subsequent design guidelines
for such networks.

\subsubsection{Regular Extended Networks} \label{sec:extended}

\textr{In extended networks, the area of the network grows with the number of transmitter nodes,
keeping the transmitter density (number of transmitters per unit area) fixed. 
The users are then distributed in the network.} Here we study {\em regular} extended networks,
in which the TXs lie on a regular hexagonal grid as shown in \figref{model3} and the 
users are distributed uniformly in the network. 
The distance between two neighboring transmitters is $2R$. Hence, the radius 
of the network $p = \Theta(R \sqrt{B})$.

The following two lemmas use \thmref{thm1} and extreme-value theory to give 
performance bounds and associated scaling laws
for regular extended networks under various fading channels.
\begin{lemma} \label{lem:exthm4}
For regular extended networks $(p^2 \approx R^2B)$ with large $K$ and 
Rayleigh fading channels, i.e., $(\nu_{i,k,n} \sim \mc{CN}(0, 1))$, we have
\begin{eqnarray}
\big(\log(1+P\con l_K) + O(1)\big)BN f^{\sf EN}_{\sf lo}(r,N)
\leq \mc{C}^* \leq \big(\log(1+P\con l_K) + O(1)\big)BN,\label{eq:actualbounds}
\end{eqnarray}
where $l_K = \beta^2 r_0^{-2\alpha} \log \frac{K r_0^2}{B R^2}$,
$f^{\sf EN}_{\sf lo}(r,N) = \frac{(1+r^2)^{-1} r^2}{N+(1+r)c_0}$, and
$c_0 = \frac{P\con \beta^2 r_0^{2-2\alpha}}{R^2}\Big(4 +
\frac{\pi}{\sqrt{3}(2\alpha -2)}\Big)$.
\textr{Moreover, the upper bound on $\mc{C}^*$ obtained via \eqref{b2} gives 
$\mc{C}^* \leq \Big(\log\big(1+\frac{P\con}{N} l_{KN}\big) + O(1)\Big)BN$, 
}\textr{where
$l_{KN} = \beta^2 r_0^{-2\alpha} \log \frac{KN r_0^2}{B R^2}$.}
The scaling laws associated with \eqref{actualbounds} are:
\begin{eqnarray}
\mc{C}^* &=& O\Big(BN \log \log \frac{K}{B}\Big), ~\mathrm{and}~
\mc{C}^* = \Omega\Big(B\log \log \frac{K}{B}\Big). \label{eq:eqs}
\end{eqnarray}
\end{lemma}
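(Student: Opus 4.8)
The plan is to feed the two inequalities of \thmref{thm1} with the Rayleigh statistics and then account for the hexagonal geometry; most of the extreme‑value work is inherited from the dense‑network analysis in \appref{ap4}, since substituting $p^{2}=\Theta(R^{2}B)$ into the dense‑network $l_{K}$ produces exactly the $l_{K}$ claimed here. For the \emph{upper bound} I would apply $\mc{C}^{*}\le\sum_{i,n}\E\{\log(1+P\con\max_{k}\gamma_{i,k,n})\}$ and work with a single $(i,n)$: the $K$ variables $\gamma_{i,k,n}=\beta^{2}R_{i,k}^{-2\alpha}|\nu_{i,k,n}|^{2}$ are i.i.d.\ across $k$, and since $R_{i,k}=\max\{r_{0},\cdot\}$ puts an atom of mass $r_{0}^{2}/p^{2}=r_{0}^{2}/(BR^{2})$ at distance $r_{0}$ while $|\nu_{i,k,n}|^{2}$ is unit‑mean exponential, the atom dominates the right tail: $\Pr\{\gamma_{i,k,n}>t\}\le C\,(r_{0}^{2}/(BR^{2}))\,e^{-r_{0}^{2\alpha}t/\beta^{2}}$ for large $t$. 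A union bound then gives $\Pr\{\max_{k}\gamma_{i,k,n}>l_{K}+x\}\le C e^{-r_{0}^{2\alpha}x/\beta^{2}}$, so $\E\{\max_{k}\gamma_{i,k,n}\}\le l_{K}+O(1)$, and pushing the expectation inside the logarithm by Jensen (as already noted after \thmref{thm1}) yields $\E\{\log(1+P\con\max_{k}\gamma_{i,k,n})\}\le\log(1+P\con l_{K})+O(1)$ uniformly in $(i,n)$; summing over the $BN$ pairs gives the right‑hand side of \eqref{actualbounds}, and the \eqref{b2}-based variant is the same computation with $K$ replaced by $KN$ and $P\con$ by $P\con/N$.

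For the \emph{lower bound} I would start from $\mc{C}^{*}\ge\sum_{i,n}\E\{\log(1+P\con\gamma_{i,k^{*},n})/(N+P\con\sum_{j\ne i}\gamma_{j,k^{*},n})\}$ with $k^{*}=\argmax_{k}\gamma_{i,k,n}$, and condition on $\mc{F}_{i,n}:=\sigma(\{(x_{k},y_{k})\}_{k},\{\nu_{i,k,n}\}_{k})$, which freezes the identity and location of $k^{*}$ and the value $\gamma_{i,k^{*},n}$, while leaving the residual fadings $\{|\nu_{j,k^{*},n}|^{2}\}_{j\ne i}$ independent and unit‑mean. Convexity of $x\mapsto(N+P\con x)^{-1}$ and Jensen then give
\[
\E\left\{\frac{\log\big(1+P\con\gamma_{i,k^{*},n}\big)}{N+P\con\sum_{j\ne i}\gamma_{j,k^{*},n}}\,\Big|\,\mc{F}_{i,n}\right\}\ \ge\ \frac{\log\big(1+P\con\gamma_{i,k^{*},n}\big)}{N+P\con\beta^{2}\sum_{j\ne i}R_{j,k^{*}}^{-2\alpha}},
\]
which is now $\mc{F}_{i,n}$-measurable. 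The signal numerator is controlled by $\gamma_{i,k^{*},n}\ge\beta^{2}r_{0}^{-2\alpha}\max_{k:\,D_{i,k}\le r_{0}}|\nu_{i,k,n}|^{2}$, where the count $|\{k:D_{i,k}\le r_{0}\}|$ is Binomial with mean $\asymp Kr_{0}^{2}/(BR^{2})$; a Chernoff bound plus the concentration of the maximum of $m$ unit exponentials around $\log m$ give $\gamma_{i,k^{*},n}\ge l_{K}$ (up to an additive $O(1)$ inside the log) with probability $1-o(1)$ in the large‑$K$ regime.

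The genuinely new ingredient is bounding the interference sum $\sum_{j\ne i}R_{j,k^{*}}^{-2\alpha}$ independently of $B$: on the event that $k^{*}$ lies within distance $r$ of TX $i$ (equivalently, roughly, in TX $i$'s Voronoi cell), the triangle inequality forces $\mathrm{dist}(k^{*},\mathrm{TX}_{j})\ge d_{ij}/2$ for all $j\ne i$, so $\sum_{j\ne i}R_{j,k^{*}}^{-2\alpha}\le\sum_{j\ne i}\max\{r_{0},d_{ij}/2\}^{-2\alpha}$; summing the hexagonal lattice series — treating the six nearest neighbours at distance $2R$ explicitly and comparing the farther rings against $\int_{2R}^{\infty}r^{1-2\alpha}\,dr$ times the lattice density $1/(2\sqrt3\,R^{2})$, which converges precisely because $\alpha>1$ — produces the closed form $P\con\beta^{2}\sum_{j\ne i}R_{j,k^{*}}^{-2\alpha}\le(1+r)c_{0}$, with the "$4$" and the $\pi/(\sqrt3(2\alpha-2))$ being these two contributions. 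Combining the Markov step on the interference (costing the factor $(1+r)$ in the denominator) with the location/SNR events on which the numerator is at least $\log(1+P\con l_{K})+O(1)$, and showing the intersection has probability at least $r^{2}/(1+r^{2})$, bounds each summand below by $(\log(1+P\con l_{K})+O(1))f^{\sf EN}_{\sf lo}(r,N)$; summing over the $BN$ pairs gives \eqref{actualbounds}. The scaling laws \eqref{eqs} then follow by fixing $r$: $\log(1+P\con l_{K})=\Theta(\log\log\tfrac{K}{B})$ gives the $O(BN\log\log\tfrac{K}{B})$ upper bound, while $BN f^{\sf EN}_{\sf lo}(r,N)=BN\,r^{2}/((1+r^{2})(N+(1+r)c_{0}))=\Theta(B)$ uniformly in $N$ gives the $\Omega(B\log\log\tfrac{K}{B})$ lower bound.

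The hard part is the lower bound, and specifically the coupling between the scheduling and the interference: because $k^{*}$ is chosen using the very fadings $\{\nu_{i,k,n}\}_{k}$ and the user positions, its location is statistically correlated with $\sum_{j\ne i}\gamma_{j,k^{*},n}$, so one must simultaneously (i) decouple that dependence — the conditioning on $\mc{F}_{i,n}$ followed by Jensen on $1/(N+P\con x)$ is what does this — and (ii) turn the worst‑case position of $k^{*}$ into a $B$-independent bound on a lattice sum of $r^{-2\alpha}$ terms, which is where the truncation radius $r_{0}$ and the packing density of the hexagonal grid enter and where the explicit constant $c_{0}$ is forged. Everything downstream (the Gumbel‑type asymptotics and the passage to scaling laws) is routine once these two steps are in place.
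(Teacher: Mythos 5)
Your proposal is correct in substance and follows the paper's overall architecture --- \thmref{thm1} plus concentration of $\max_k\gamma_{i,k,n}$ around $l_K$ for the numerator, plus a separate $B$-independent bound on the interference denominator --- but it departs from the paper in two places worth noting. For the upper bound you use an elementary tail/union-bound argument, whereas the paper (in \appref{ap4}) goes through the growth function $h(\gamma)=\frac{1-F_{\gamma}(\gamma)}{f_{\gamma}(\gamma)}\to\beta^2 r_0^{-2\alpha}$, Gumbel-type convergence, and the concentration result of \cite{sharif} to get $\operatorname{Pr}\{|{\max_k\gamma_{i,k,n}}-l_K|\le\log\log K\}\ge 1-O(1/\log K)$; both routes deliver $\E\{\log(1+P\con\max_k\gamma_{i,k,n})\}=\log(1+P\con l_K)+O(1)$, and the paper's machinery is reused for the other fading families, which is what it buys. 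For the lower bound the paper's \lemref{lem1} decouples scheduling from interference by upper-bounding every interference path-gain by $\beta^2 r_0^{-2\alpha}$ (so the residual sum $\sum_{j\ne i}|\nu_{j,k,n}|^2$ is independent of $\gamma_{i,k,n}$) and then applying Cantelli's inequality, which is exactly where the probability factor $\frac{r^2}{1+r^2}$ and the $(\mu+r\sigma)=(1+r)$ in the denominator come from; your conditioning on $\mc{F}_{i,n}$ followed by Jensen on the convex map $x\mapsto(N+P\con x)^{-1}$ is a genuinely different and arguably cleaner decoupling. One internal inconsistency: having used Jensen to replace the interference by its conditional mean, you do not also need the ``Markov step costing $(1+r)$'' and the $\frac{r^2}{1+r^2}$ probability loss on the interference --- those two devices are substitutes, not complements --- so your route would actually yield a slightly larger constant than the stated $f^{\sf EN}_{\sf lo}(r,N)$; to reproduce the paper's exact prefactor you should use Cantelli as in \lemref{lem1}. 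This does not affect the scaling laws in \eqref{eqs}. Finally, your Voronoi/lattice-sum derivation of $c_0$ (nearest ring treated explicitly, outer rings compared with $\int_{2R}^{\infty}r^{1-2\alpha}dr$ against the density $\frac{1}{2\sqrt3 R^2}$, convergent since $\alpha>1$) supplies precisely the step the paper's appendix leaves implicit for the extended case, together with the needed high-probability localization of $k^*$ within $r_0$ of TX $i$; that is the right way to close the gap between the dense-network constant $P\con\beta^2 r_0^{-2\alpha}(1+r)B$ and the $B$-independent $(1+r)c_0$.
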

\begin{IEEEproof}
For proof, see \appref{ap4}.
\end{IEEEproof}


\begin{lemma} \label{lem:exthm5}
If $|\nu_{i,k,n}|$ belongs to either Nakagami-$m$, Weibull, or LogNormal family of distributions,
then, for regular extended networks, the scaling laws for the upper bounds are: \vspace{2mm} \\ \vspace{2mm}
\begin{tabular}{|r|l|l|} 
\hline \footnotesize
\textsf{For Nakagami}-$(m,w)$: & $\mc{C}^* = O\Big(BN\log \log \frac{K}{B}\Big)$ & $\mc{C}^* = \Omega(B \log \log \frac{K}{B})$ \\ \footnotesize
\textsf{For Weibull}$(\lambda, t)$: & $\mc{C}^* = O\Big(BN \log \log^{\frac{2}{t}} \frac{K}{B}\Big)$ & $\mc{C}^* = \Omega\Big(B\log \log^{\frac{2}{t}} \frac{K}{B}\Big)$\\ \footnotesize
\textsf{For LogNormal}$(a,\omega)$: & $\mc{C}^* = O\Big(BN \sqrt{\log \frac{K}{B}}\Big)$ & $\mc{C}^* = \Omega\Big(B\sqrt{\log \frac{K}{B}}\Big)$. \\
\hline
\end{tabular}
\normalsize
\end{lemma}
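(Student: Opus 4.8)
The plan is to mirror the proof of \lemref{exthm4} — the Rayleigh case for regular extended networks — and substitute the extreme-value asymptotics appropriate to each fading family, exactly as \lemref{exthm3} does relative to \lemref{exthm2} in the dense-network setting. Throughout, the bounds of \thmref{thm1} reduce the problem to controlling $\E\{\log(1+P\con\max_{k}\gamma_{i,k,n})\}$ (for the upper bounds, via \eqref{b} or \eqref{b2}) and $\E\big\{\log(1+P\con\gamma_{i,k^*,n})/(N+P\con\sum_{j\neq i}\gamma_{j,k^*,n})\big\}$ (for the lower bounds), where $\gamma_{i,k,n}=\beta^2 R_{i,k}^{-2\alpha}|\nu_{i,k,n}|^2$ and $k^*=\argmax_k\gamma_{i,k,n}$.

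For the upper bounds I would first condition on the user locations and partition the users seen by a fixed TX $i$ into geometric annuli of radii $\Theta(jR)$, $j=1,2,\dots,\Theta(\sqrt{B})$, centered at TX $i$. The innermost region (within the truncation radius $r_0$) contains $\Theta\big(K r_0^2/(BR^2)\big)=\Theta(K/B)$ users in expectation — and concentrates around this value by a Chernoff bound on a Binomial — each with path loss exactly $\beta^2 r_0^{-2\alpha}$, while annulus $j$ contains $\Theta(jK/B)$ users with path loss $\Theta\big(\beta^2(jR)^{-2\alpha}\big)$. Hence $\max_k\gamma_{i,k,n}$ equals, up to constants, $\beta^2 r_0^{-2\alpha}$ times the maximum of $\Theta(K/B)$ i.i.d.\ copies of $|\nu|^2$, and the outer annuli are dominated because $\sum_j j^{-2\alpha}<\infty$ for $\alpha>1$ while the extreme-value growth function is only poly-logarithmic. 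Invoking the standard extreme-value asymptotics — the maximum of $n$ i.i.d.\ copies of $|\nu|^2$ grows like $\Theta(\log n)$ for Nakagami-$(m,w)$ (Gamma-type tail), like $\Theta\big(\log^{\frac{2}{t}} n\big)$ for Weibull$(\lambda,t)$, and like $\exp\big(\Theta(\sqrt{\log n})\big)$ for LogNormal$(a,\omega)$ — and taking a logarithm gives $\E\{\log(1+P\con\max_k\gamma_{i,k,n})\}$ equal to $\Theta\big(\log\log\tfrac{K}{B}\big)$, $\Theta\big(\log\log^{\frac{2}{t}}\tfrac{K}{B}\big)$, and $\Theta\big(\sqrt{\log\tfrac{K}{B}}\big)$ respectively; multiplying by $BN$ yields the claimed upper-bound scaling laws.

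For the lower bounds I would use the first inequality of \thmref{thm1}. The key geometric fact is that $k^*$ lies, with high probability, within $O(1)$ distance of TX $i$, so $R_{j,k^*}=\Theta(\mathrm{dist}(i,j))$; summing $R_{j,k^*}^{-2\alpha}$ over the hexagonal grid converges for $\alpha>1$ — this is the origin of the constant $c_0$ in \lemref{exthm4} — so $P\con\sum_{j\neq i}\gamma_{j,k^*,n}=O(1)$ in expectation and with high probability. On that event the denominator $N+P\con\sum_{j\neq i}\gamma_{j,k^*,n}$ is $\Theta(N)$ (and $\Theta(1)$ for small $N$), while the numerator has the same growth-function-of-$K/B$ behavior computed above; since $\{\nu_{j,k^*,n}\}_{j\neq i}$ is independent of $\nu_{i,k^*,n}$ given $k^*$ and the locations, numerator and denominator decouple after conditioning. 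Summing the $BN$ terms, each of order (growth function)$/N$, produces the $\Omega\big(B\log\log\tfrac{K}{B}\big)$, $\Omega\big(B\log\log^{\frac{2}{t}}\tfrac{K}{B}\big)$, and $\Omega\big(B\sqrt{\log\tfrac{K}{B}}\big)$ lower bounds.

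The main obstacle is the lower bound's expectation of a ratio of dependent random variables: $k^*$ correlates the numerator's location and fading with the interference in the denominator, and $\E\{A/B\}\neq\E\{A\}/\E\{B\}$ in general. I expect to handle this exactly as in \appref{ap4}: restrict to the high-probability event that (i) TX $i$ has $\Theta(K/B)$ users within distance $r_0$ and (ii) the aggregate interference at $k^*$ is at most a fixed constant, lower-bound the ratio on that event, and discard the complement (whose contribution is negligible). A secondary technical point is confirming that the outer-annuli contributions and the boundary TXs of the finite hexagonal grid do not perturb the constants in a way that affects the $\Theta(\cdot)$ scaling; this follows from the summability of $j^{-2\alpha}$ and is identical to the argument already used for \lemref{exthm4}, so I would simply refer to it and to the extreme-value computations underlying \lemref{exthm3} in \appref{ap5}.
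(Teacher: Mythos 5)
Your proposal follows the same skeleton as the paper's proof: reduce everything to $\E\{\log(1+P\con\max_k\gamma_{i,k,n})\}$ via \thmref{thm1}, show that $\max_k\gamma_{i,k,n}$ concentrates at the fading-dependent scaling point evaluated at $\Theta(K/B)$ (because only the users within the truncation radius $r_0$ of TX $i$, a $\Theta(r_0^2/p^2)=\Theta(1/B)$ fraction, drive the extreme tail), and control the interference in the lower bound by the convergent lattice sum over the hexagonal grid (the origin of $c_0$ in \lemref{exthm4}), so that the denominator is $N+O(1)$ and the $BN$ terms sum to $\Omega(B\cdot(\text{growth}))$. The one place you genuinely diverge is the extreme-value step: the paper (\appref{ap5}) computes $F_{\gamma_{i,k,n}}$ exactly by integrating the fading tail against the path-gain density, shows the tail is asymptotic to $\frac{r_0^2}{p^2}$ times the fading tail at $\gamma r_0^{2\alpha}/\beta^2$, verifies the Gumbel domain via the growth function $h(\gamma)$, and invokes Uzgoren's expansion to get the $1-O(1/\log K)$ concentration; you instead partition users into geometric annuli and argue by union bound that the inner region dominates. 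The two routes are equivalent in substance — your annulus picture is the geometric reading of the paper's $\frac{r_0^2}{p^2}$ tail factor — but the paper's analytic version handles all three fading families uniformly through $h(\gamma)$ and delivers the explicit concentration rate needed to discard the rare event where $\max_k\gamma_{i,k,n}$ is atypically large (the $\log(1+P\con\beta^2 r_0^{-2\alpha}K)\cdot O(1/\log K)$ truncation in \appref{ap4}), whereas your version would require you to verify the annulus union bound separately for each family; for LogNormal in particular, where the max grows like $e^{\Theta(\sqrt{\log n})}$, the domination of the outer annuli only holds once $K/B$ exceeds a constant depending on $\alpha$ and $\omega$, a point worth making explicit.
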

\begin{IEEEproof}
For proof, see \appref{ap5}.
\end{IEEEproof}


Using \lemref{exthm4}, we now propose three design principles.
\\
\textbf{Principle 2.} \emph{In regular extended networks, if $a)$ the users are charged based on 
the number of bits they download; $b)$ there is a unit cost for 
each TX installed and a cost $c_N$ for unit resource block incurred by the
service provider; $c)$ the return-on-investment must remain above
a certain lower bound; then
for fixed $B$, the system is scalable only if $N = O(\log K)$, and
for fixed $N$, the system is scalable only if $B = O(K)$. \textr{In addition, if 
a minimum per-user throughput requirement is also required to be met, then the system is scalable 
for fixed $N$ only if $B = \Theta(K)$, and not scalable for fixed $B$.}}

Consider the case of a regular extended network with large $K$. Using the upper bound 
in \lemref{exthm4} obtained via \eqref{b2}, we have
\begin{eqnarray}
\mc{C}^* &\leq& \bigg(\log\Big(1+\frac{P\con}{N}l_{KN} \Big) + O(1)\bigg) BN \nonumber \\
&\approx& BN \log\Big(\frac{P\con}{N}l_{KN}	 \Big), ~\textrm{for large}~ \frac{P\con l_{KN}}{N},
\end{eqnarray}
where $l_{KN} = \beta^2 r_0^{-2\alpha} \log \frac{KN r_0^2}{B R^2}$.
For simplicity of analysis, let $P\con = \beta = r_0 = R = 1$ (in their respective
SI units). If the
service provider wants to maintain a minimum level of return-on-investment, then
\begin{eqnarray}
\frac{BN}{B + c_N N} \log\Big(\frac{1}{N}\log
\frac{KN }{B} \Big) > \bar{s}, \label{eq:N}
\end{eqnarray}
for some $\bar{s}>0$. The above equation implies
$N = O(\log K)$ for fixed $B$, and $B = O(K)$ for fixed $N$.
\textr{In addition, if a minimum per-user throughput is also required, then the service provider 
must also satisfy $\frac{BN}{K} \log\big(\frac{1}{N}\log \frac{KN }{B} \big) > \hat{s}$ for some
$\hat{s} > 0$.
This yields that the system is not scalable under fixed $B$, and for fixed $N$, the system is 
scalable only if $B = \Theta(K)$.}
\\
\textbf{Principle 3.}
\emph{In a large extended multi-cellular network, if the users are charged
based on the number of bits they download and there is a unit cost for each
TX incurred by the service provider, then there is a finite range
of values for the user-density $\frac{K}{B}$ in order to maximize
return-on-investment of the service provider while maintaining a minimum
per-user throughput.}

Consider a regular extended network with fixed number of resource blocks $N$.
In this case, we have $\mc{C}^* = \Theta\big(B \log \log \frac{K}{B}\big)$.
Assuming a revenue model wherein the service provider charges
per bit provided to the users, the total return-on-investment of the service
provider is proportional to the achievable sum-rate per TX.
Therefore, in large scale systems (large $K$), one must solve:
\begin{eqnarray}
&& \max_{K,B}\, c\log \Big(1+P\con \beta^2 r_0^{-2\alpha} \log \frac{K r_0^2}{B R^2}\Big)
\quad \textrm{s.t.}~ \frac{c B \log(1+P\con
\beta^2 r_0^{-2\alpha} \log \frac{K r_0^2}{B R^2})}{K} \geq \bar{s}, \label{eq:op}
\end{eqnarray}
for some $\bar{s} > 0$, where $c$ is a constant bounded according to \eqref{actualbounds}-\eqref{eqs}.
For simplicity, let $\beta = r_0 = R = P\con = 1$ (in respective SI units).
By variable-transformation, the above problem becomes convex in $\rho \defn \frac{K}{B}$. Solving it via dual method, the
Karush-Kuhn-Tucker condition is
\begin{eqnarray}
\rho = \frac{(\lambda + 1)10}{(1+\log \rho)\lambda},
\label{eq:tra}
\end{eqnarray}
where $\lambda \geq 0$ is the Lagrange multiplier. 
The plots of LHS and RHS of \eqref{tra} along with the constraint curve as a function of $\rho$ are 
plotted for $\lambda = 0.1, 1,\infty$ in \figref{tradeoff_new}. There, the constraint
curve (see the constraint in \eqref{op}) is given by 
$\frac{c}{\bar{s}}\log (1+\log \rho)$. Note that according to
\eqref{op}, the constraint is satisfied only when the constraint curve (in 
\figref{tradeoff_new}) lies above the LHS curve, i.e., when 
$\rho \in [1.1, 12.7]$. 
Therefore, the optimal $\rho$ lies in the set $[1.1, 12.7]$.
In \figref{tradeoff_new}, the optimal $\rho$ for a given $\lambda$ (denoted by $\rho^*(\lambda)$)
is the value of $\rho$ at which the LHS and RHS curves intersect for that $\lambda$. 
We observe from the figure that $\rho^*(\lambda)$ decreases with increasing $\lambda$.
Since $\rho^*(\lambda) = 4.1$ when $\lambda = \infty$, the optimal 
$\rho$ is greater than or equal to $4.1$. 
\Figref{tradeoff2_new} shows the variation of $\rho^*(\lambda)$ as a function of $\lambda$.
From the plot, we observe that $\rho^*(\lambda)$ exists only for 
$\lambda > 0.29$, and satisfies $4.1 \leq \rho^*(\lambda) \leq 12.7$ users/BS. 
Furthermore, 
the optimal user-density $\rho^*(\lambda)$ is a strictly-decreasing convex function of
the cost associated with violating the per-user throughput constraint, i.e., $\lambda$.
\\ 
\textbf{Principle 4.}
\emph{In a large extended multi-cellular network, if the users are charged a
fixed amount regardless of the number of bits they download and there is a
unit cost for each TX incurred by the service provider, then there
is a finite range of values for $\frac{K}{B}$ in order to
maximize return-on-investment of the service provider while maintaining a
minimum per-user throughput.}

Consider a regular extended network with fixed $N$,
similar to that assumed in Principle $3$.
Here, we assume a revenue model for the service provider wherein the service provider charges
each user a fixed amount regardless of the number of bits the user downloads.
Then, the return-on-investment of the service
provider is proportional to the user-density $\rho = \frac{K}{B}$.
In large systems (large $K$), the associated optimization problem is:
\begin{eqnarray}
\max_{K,B}\, s \frac{K}{B} \quad \mathrm{s.t.}~ \frac{c B \log(1+P\con
\beta^2 r_0^{-2\alpha} \log \frac{K r_0^2}{B R^2})}{K} \geq \bar{s} \label{eq:op2}
\end{eqnarray}
for some constants $c, s, \bar{s} > 0$. Here, $s$ depends on the amount users are 
charged by the service provider, and $c$ can be bounded according to 
\eqref{actualbounds}-\eqref{eqs}.
For simplicity of analysis, let $\beta = r_0 = R = P\con = 1$ (in respective SI units).
The above problem becomes convex in $\rho \defn \frac{K}{B}$. Let the optimal 
solution be denoted by $\rho^*$. Now, the constraint in terms of $\rho$ is
\begin{eqnarray}
 \frac{\bar{s}}{c} \leq \frac{\log (1 + \log \rho)}{\rho}.,\label{eq:tra2}
\end{eqnarray}
The plot of LHS and RHS of \eqref{tra2} as a function of $\rho$ (for $\rho \geq 1$) 
is plotted in \figref{tradeoff3_new}.
Examining \eqref{tra2} and \figref{tradeoff3_new}, we note that the per-user throughput 
constraint is satisfied only if $\frac{\bar{s}}{c} \in [0, 0.26]$. 
Moreover, for a given value of $\frac{\bar{s}}{c}$, the set of feasible $\rho$ lies in a 
closed set (for which the RHS curve remains above the LHS curve). 
The maximum value of $\rho$ in this closed set, i.e., the value of $\rho$ at point 
$B$ in \figref{tradeoff3_new}, is 
the one that maximizes the objective in \eqref{op2}, i.e., $sK/B$. Hence, it is the optimal 
$\rho$ for the given value of $\bar{s}/c$. Let us denote it by $\rho^*(\bar{s}/c)$. Note that 
$\rho^*(\bar{s}/c) \geq 2.14$ (since point $B$ lies to the right of point $A$ in 
\figref{tradeoff3_new}). 

If $\bar{s}/c$ is known exactly, then the optimal user-density $\rho^* = \rho^*(\bar{s}/c)$.
If not, we can write from \eqref{actualbounds}-\eqref{eqs} that
$c_{\sf{lb}}\leq c \leq c_{\sf{ub}}$, for some positive constants
$c_{\sf{lb}}, c_{\sf{ub}}$.
Then, $\rho^* \in [\rho^*(\bar{s}/c_{\sf{lb}}), \rho^*(\bar{s}/c_{\sf{ub}})]$.
Moreover, 
since $\rho^*(\bar{s}/c) \geq 2.14$ for all $\bar{s}/c \in [0, 0.26]$, we have 
$\rho^*(\bar{s}/c_{\sf{ub}}) \geq \rho^*(\bar{s}/c_{\sf{lb}}) \geq 2.14$.

\section{Maximum Sum-Rate Achievability Scheme} \label{sec:achieve}

In the previous section, we derived general performance bounds
and proposed design principles based on them for two specific types
of networks - dense and regular-extended. 
In this section, we propose a distributed
scheme for achievability of max-sum-rate under the above two types
of networks. To this end, we construct 
a tight approximation of $\mc{C}^*$ and find a distributed resource allocation
scheme that achieves the same sum-rate scaling law as that achieved by
$\mc{C}^*$ for a large set of network parameters. 
Let us define an approximation of $\mc{C}^*$ as follows:
\begin{eqnarray}
\mc{C}^*_{\textsf{LB}} \defn \max_{\vec{p} \in \mc{P}} \,\E\bigg\{ \max_{\vec{u} \in \mc{U}}
\sum_{i=1}^B \sum_{n=1}^N \log\bigg(1+\frac{\gamma_{i,u_{i,n},n} \, p_{i,n}}
{1+\sum_{j\neq i}\gamma_{j,u_{i,n},n} \, p_{j,n}}\bigg) \bigg\}. \label{eq:C*LB}
\end{eqnarray}
Note that $\mc{C}^*_{\textsf{LB}} \leq \mc{C}^*$. 
To analyze $\mc{C}^*_{\textsf{LB}}$,
we first give the following theorem. 
\begin{theorem} \label{thm:thm2}
\textr{Let $\{X_1, \ldots, X_T\}$ be i.i.d. random variables 
with cumulative distribution function (cdf) $F_X(\cdot)$. 
Then, for any monotonically non-decreasing function $V(\cdot)$, we have
\begin{eqnarray}
\big(1-e^{-S_1}\big)V\big( l_{T/S_1}\big) &\leq& \E \Big\{
V\big(\max_{1\leq t \leq T} X_t \big) \Big\}.
\end{eqnarray}
Here, $S_1 \in (0, T]$ and $F_X\big(l_{T/S_1}\big) = 1-\frac{S_1}{T}$. Additionally, if $V(\cdot)$ is concave, then we have}
\begin{eqnarray}
\big(1-e^{-S_1}\big)V\big( l_{T/S_1}\big) &\leq& \E \Big\{
V\big(\max_{1\leq t \leq T} X_t \big) \Big\} \leq V\Big(\E\Big\{\max_{1\leq t \leq T} X_t\Big\} \Big). \nonumber
\end{eqnarray}
\end{theorem}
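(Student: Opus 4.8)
The plan is to treat the two inequalities separately: the concave (upper) bound is a one-line consequence of Jensen's inequality, while the lower bound follows from a tail estimate on the maximum combined with the monotonicity of $V$. For the upper bound, observe that $\max_{1\le t\le T}X_t$ is an ordinary random variable, so whenever $V$ is concave Jensen's inequality gives $\E\{V(\max_{1\le t\le T}X_t)\}\le V\big(\E\{\max_{1\le t\le T}X_t\}\big)$ immediately, the only requirement being that $\E\{\max_{1\le t\le T}X_t\}$ is finite.

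For the lower bound, write $M:=\max_{1\le t\le T}X_t$ and note that, by independence, $M$ has cdf $F_M(x)=[F_X(x)]^T$. Fix the threshold $l:=l_{T/S_1}$, characterized by $F_X(l)=1-S_1/T$ (using the usual quantile convention when $F_X$ has jumps). Then $P(M\ge l)=1-P(M<l)\ge 1-[F_X(l)]^T=1-(1-S_1/T)^T$, and since $1+y\le e^y$ yields $(1-S_1/T)^T\le e^{-S_1}$ for $S_1/T\in(0,1]$, we get $P(M\ge l)\ge 1-e^{-S_1}$. Because $V$ is non-decreasing, $V(M)\ge V(l)$ on the event $\{M\ge l\}$, so splitting $\E\{V(M)\}$ over $\{M\ge l\}$ and $\{M<l\}$ and using $V\ge 0$ on the latter gives $\E\{V(M)\}\ge V(l)\,P(M\ge l)\ge (1-e^{-S_1})\,V(l_{T/S_1})$; this is the claimed lower bound, and it is also the left inequality in the concave case.

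The steps needing care are bookkeeping rather than genuine obstacles: (i) the truncation $\E\{V(M)\}\ge V(l)\,P(M\ge l)$ uses $V\ge 0$ on the support of the $X_t$, so the statement should be read with $V$ non-negative there, as it is wherever the theorem is invoked (e.g.\ when $V$ is an achievable rate); and (ii) the quantity $l_{T/S_1}$ must be well defined, which is automatic when $F_X$ is continuous — the case relevant to the fading models considered — and is otherwise handled by replacing $F_X(l)$ by $F_X(l^-)$ in the tail bound and taking $l_{T/S_1}=\sup\{x:F_X(x)\le 1-S_1/T\}$. Neither point alters the argument above, so the main work is simply assembling the tail bound $P(M\ge l_{T/S_1})\ge 1-e^{-S_1}$ and invoking monotonicity of $V$.
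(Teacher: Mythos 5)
Your proof is correct and follows essentially the same route as the paper's: the lower bound via $\operatorname{Pr}(\max_t X_t \geq l_{T/S_1}) \geq 1-(1-S_1/T)^T \geq 1-e^{-S_1}$ together with monotonicity of $V$, and the upper bound via Jensen's inequality. Your added remarks on requiring $V \geq 0$ for the truncation step and on defining $l_{T/S_1}$ when $F_X$ has jumps are careful points the paper leaves implicit, but they do not change the argument.
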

\begin{proof}
See \appref{ap6}.
\end{proof}

\textr{\thmref{thm2} 
can be used to bound $\mc{C}^*_{\textsf{LB}}$ for finite $K$. In particular, for a given power allocation $\{p_{i,n}\}$, the achievable 
expected sum-rate can be bounded by bounding 
the contribution of each $(i,n)$ towards sum-rate by appropriately 
selecting $X_t$ and $V(\cdot)$ via\footnote{\textr{For example, by setting $T=K$, $X_t = \frac{\gamma_{i,t,n} \, p_{i,n}}{1+\sum_{j\neq t} \gamma_{i,t,n} \, p_{i,n}}$ 
and $V(x) = \log(1+x)$.}}
\thmref{thm2} and then taking the summation over all $(i,n)$. Thereafter, by maximizing the bounds over all feasible power allocations
that lie in $\mc{P}$, non-asymptotic bounds on $\mc{C}^*_{\textsf{LB}}$ can be obtained.
In the sequel, however, we will use \thmref{thm2} under asymptotic regime to propose a 
class of deterministic optimization problems 
that bound $\mc{C}^*_{\textsf{LB}}$ for dense/extended networks and Rayleigh-fading channels\footnote{
\thmref{thm3} can be easily extended for Nakagami-$m$, Weibull, and LogNormal fading
channels.}.
}

\begin{theorem} \label{thm:thm3}
\textr{Let a class of deterministic optimization problems be defined as follows:
\begin{eqnarray}
\mathsf{OP}\big(c, h(K)\big)
&\defn& \max_{\vec{p} \in \mc{P}} \, \sum_{i=1}^B \sum_{n=1}^N \log (1 + p_{i,n} x_{i,n}) \label{eq:c3} \\ 
&& \mbox{} \mathrm{s.t.}~ 
\frac{r_0^2 \, h(K)}{p^2} 
= e^{\frac{x_{i,n}}{\beta^2 r_0^{-2\alpha}}} \prod_{j \neq i}
\bigg(1+\frac{ p_{j,n} x_{i,n}}{c^{2\alpha}r_0^{-2\alpha}}\bigg) ~\textrm{for all}~ i,n, \label{eq:ob.c_1}
\end{eqnarray}
where $h(\cdot)$ is an increasing function and $c$ is a positive constant. Then, for large $K$ 
and Rayleigh-fading channels, i.e., $|\nu_{i,k,n}| \sim \mc{CN}(0,1)$, we have
\begin{eqnarray}
\big(1-e^{-S_1}\big) \mathsf{OP}\big(r_0, K/S_1\big) 
\leq \mc{C}^*_{\mathsf{LB}} \leq \bigg(1+ \frac{\beta^2 r_0^{-2\alpha}u}{\bar{l}(2p,K)} \bigg) \mathsf{OP}\big(2p, K\big) \label{eq:OPB}
\end{eqnarray}
where $S_1 \in (0, K]$, and $\bar{l}(\cdot, K)$ is a large number that increases with increasing $K$. In particular, 
if $l = \hat{l}(\eta_1, \eta_2)$ is the solution to
$\frac{r_0^2 \eta_2}{p^2} = e^{\frac{l}{\beta^2 r_0^{-2\alpha}}} \Big(1+\frac{l P\con }{\eta_1^{2\alpha} r_0^{-2\alpha}}\Big)^{B-1}$
for any $\eta_1, \eta_2$, then $\hat{l}(2p,K) \approx \bar{l}(2p,K)$. 
Further, $\mathsf{OP}(\cdot, \cdot)$ satisfies 
\begin{eqnarray}
1 \leq \frac{\mathsf{OP}\big(c_2, h(K)\big) }{\mathsf{OP}\big(c_1, h(K)\big)} \leq 
\Big(\frac{c_2}{c_1}\Big)^{2\alpha} \label{eq:woww}
\end{eqnarray}
}\textr{for positive constants $c_1$ and $c_2$ $(0 < c_1 \leq c_2)$.}
\end{theorem}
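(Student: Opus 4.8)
The plan is to establish the three parts of the theorem in sequence, each leveraging \thmref{thm2} and the structure of $\mc{C}^*_{\mathsf{LB}}$ in \eqref{C*LB}. First I would fix an arbitrary feasible power allocation $\vec{p} \in \mc{P}$ and analyze the inner expectation $\E\{\max_{\vec u} \sum_{i,n} \log(1 + \gamma_{i,u_{i,n},n} p_{i,n}/(1+\sum_{j\neq i}\gamma_{j,u_{i,n},n}p_{j,n}))\}$. Because the scheduler in \eqref{C*LB} chooses $u_{i,n}$ per-$(i,n)$ but the SINR for TX $i$ involves interference terms $\gamma_{j,u_{i,n},n}$ that depend on the \emph{same} scheduled user, the coupling across $i$ for fixed $n$ is the crux. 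The standard device is to pass to a per-$(i,n)$ bound: for the upper bound drop the beneficial effect of the scheduler on the interferers (or bound the interference below by $0$, combined with the Jensen step already noted after \thmref{thm1}) to get a sum of independent single-user extreme-value terms $\E\{\log(1+p_{i,n}\max_k(\ldots))\}$; for the lower bound, restrict the scheduler to pick, for each $(i,n)$, the user maximizing TX $i$'s \emph{signal} SNR $\gamma_{i,k,n}$ (as in the lower bound of \thmref{thm1}), which is a suboptimal but feasible policy. In both cases $X_t$ is taken to be a monotone function of $\gamma_{i,k,n} = \beta^2 R_{i,k}^{-2\alpha}|\nu_{i,k,n}|^2$, and $V(x) = \log(1+p_{i,n}x)$ is monotone nondecreasing and concave, so \thmref{thm2} applies directly with $T = K$.

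Second, I would compute the relevant quantile $l_{T/S_1}$ for the Rayleigh case. With $|\nu|^2$ exponential, $\gamma_{i,k,n} = \beta^2 R_{i,k}^{-2\alpha}|\nu_{i,k,n}|^2$ has, after averaging over the uniform user location in the disc of radius $p$ with truncation $R_{i,k} = \max\{r_0, \mathrm{dist}\}$, a cdf whose $(1-S_1/K)$-quantile is, to leading order, $l_{K/S_1} \approx \beta^2 r_0^{-2\alpha}\log\frac{K r_0^2}{S_1 p^2}$ — this is exactly the $l_K$-type constant appearing in \lemref{exthm2}/\lemref{exthm4}, and it is where the truncation parameter $r_0$ and the network radius $p$ enter. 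Substituting this quantile into \thmref{thm2}, the lower bound on each $(i,n)$-term becomes $(1-e^{-S_1})\log(1 + p_{i,n} l_{K/S_1}^{\mathrm{eff}})$ where the ``effective'' SNR also carries the interference denominator; matching this to the constraint \eqref{ob.c_1} identifies $x_{i,n}$ with the interference-loaded log-SINR expression, so that maximizing $\sum_{i,n}\log(1+p_{i,n}x_{i,n})$ over $\vec p\in\mc P$ subject to \eqref{ob.c_1} is precisely $\mathsf{OP}(r_0, K/S_1)$. The upper bound proceeds analogously but with $c = 2p$ in place of $r_0$: since every user lies within distance $2p$ of every TX, $R_{i,k} \le 2p$, which upper-bounds the interference contribution and yields the multiplicative slack factor $1 + \beta^2 r_0^{-2\alpha}u/\bar l(2p,K)$, with $\bar l(2p,K)$ the corresponding large quantile solving the displayed transcendental equation for $\hat l$.

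Third, for the monotonicity estimate \eqref{woww}: the lower bound $1 \le \mathsf{OP}(c_2,h(K))/\mathsf{OP}(c_1,h(K))$ follows because enlarging $c$ from $c_1$ to $c_2$ only relaxes the constraint \eqref{ob.c_1} (the factor $(1+p_{j,n}x_{i,n}/(c^{2\alpha}r_0^{-2\alpha}))$ decreases in $c$, so for the same $\{x_{i,n}\}$ a smaller — hence the same or a rescaled larger — power budget suffices), so the feasible set grows and the max does not decrease. For the upper bound $(c_2/c_1)^{2\alpha}$, I would take a feasible $(\vec p, \{x_{i,n}\})$ for $\mathsf{OP}(c_2,h(K))$ and construct a feasible point for $\mathsf{OP}(c_1,h(K))$ by rescaling $x_{i,n} \mapsto x_{i,n}$ and $p_{i,n} \mapsto (c_1/c_2)^{2\alpha} p_{i,n}$ (so the interference factors in \eqref{ob.c_1} are preserved while the $e^{x_{i,n}/(\beta^2 r_0^{-2\alpha})}$ term is unchanged, keeping \eqref{ob.c_1} satisfied, and $\sum_n p_{i,n} \le P\con$ still holds since $c_1 \le c_2$); this shows $\mathsf{OP}(c_1,h(K)) \ge \sum_{i,n}\log(1+(c_1/c_2)^{2\alpha}p_{i,n}x_{i,n}) \ge (c_1/c_2)^{2\alpha}\sum_{i,n}\log(1+p_{i,n}x_{i,n})$ using concavity/superadditivity of $\log(1+\cdot)$ through the origin, i.e.\ $\log(1+\theta t) \ge \theta\log(1+t)$ for $\theta\in[0,1]$. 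Rearranging gives \eqref{woww}.

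I expect the main obstacle to be the second step: rigorously converting the coupled scheduler-and-interference expectation in \eqref{C*LB} into the decoupled per-$(i,n)$ form that \thmref{thm2} can digest, and in particular controlling the error incurred by replacing the random interference denominators $1+\sum_{j\neq i}\gamma_{j,u_{i,n},n}p_{j,n}$ by their deterministic surrogates inside \eqref{ob.c_1} — this is what forces the asymptotic ``large $K$'' regime (so that the signal quantile $l_{K/S_1}$ dominates and concentration of the interference sum around its conditional mean can be invoked) and is the source of the slack factors on both sides of \eqref{OPB}. The extreme-value and rescaling arguments in the first and third steps are comparatively routine once the setup is in place.
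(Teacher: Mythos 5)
Your overall architecture matches the paper's: sandwich the interfering path-losses between $(2p)^{-2\alpha}$ and $r_0^{-2\alpha}$ to reduce \eqref{C*LB} to decoupled per-$(i,n)$ extreme-value terms, apply \thmref{thm2} for the lower bound, identify the relevant quantiles with the constraint \eqref{ob.c_1}, and prove \eqref{woww} by a comparison argument. Your treatment of \eqref{woww} is in fact a valid alternative to the paper's: the paper fixes $\vec{p}$ and shows $x_{i,n}(c_1) \le x_{i,n}(c_2) \le (c_2/c_1)^{2\alpha} x_{i,n}(c_1)$ directly from the constraint, whereas you rescale $p_{i,n} \mapsto (c_1/c_2)^{2\alpha} p_{i,n}$ so that \eqref{ob.c_1} is preserved exactly and then invoke $\log(1+\theta t) \ge \theta \log(1+t)$ for $\theta \in [0,1]$; both are correct and yours is arguably cleaner.

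The gap is in the central step. The random variable to which \thmref{thm2} must be applied is the full interference-loaded ratio $\mathbb{X}_{i,k,n}(c) = \beta^2 R_{i,k}^{-2\alpha}|\nu_{i,k,n}|^2 \big/ \big(1+\beta^2 c^{-2\alpha}\sum_{j\neq i} p_{j,n}|\nu_{j,k,n}|^2\big)$, which becomes i.i.d.\ across $k$ precisely because the denominator's distance dependence has been replaced by the constant $c$; the scheduler must maximize this ratio over $k$, not the signal SNR $\gamma_{i,k,n}$ alone. Your proposed lower-bound policy (schedule $\argmax_k \gamma_{i,k,n}$, as in \thmref{thm1}) leaves the selected user with an interference sum that concentrates near $\beta^2 r_0^{-2\alpha}\sum_j p_{j,n}$, yielding a quantile of order $\beta^2 r_0^{-2\alpha}\log K/(1+\Theta(B))$ rather than the solution of \eqref{ob.c_1}, which loses only an additive $O(B\log\log K)$; these do not agree up to constants for growing $B$, so $\mathsf{OP}(r_0, K/S_1)$ is not recovered --- indeed, closing exactly this gap relative to \thmref{thm1} is the point of the theorem. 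Relatedly, \eqref{ob.c_1} is not something one ``matches'' to a known quantile: it is derived as $1-F_{\mathbb{X}_{i,k,n}(c)}(x) = 1/h(K)$, where the tail $\frac{r_0^2}{p^2}\,e^{-x/(\beta^2 r_0^{-2\alpha})}\prod_{j\neq i}\big(1+p_{j,n}x/(c^{2\alpha}r_0^{-2\alpha})\big)^{-1}$ is computed from the moment generating function of the weighted-exponential interference sum together with the location average producing the $r_0^2/p^2$ prefactor; the quantile you quote, $\beta^2 r_0^{-2\alpha}\log(K r_0^2/(S_1 p^2))$, is the interference-free one. Finally, the slack factor $1+\beta^2 r_0^{-2\alpha}u/\bar{l}(2p,K)$ in \eqref{OPB} does not arise from concentration of the interference around its mean: it comes from Jensen's inequality plus the $L^1$ convergence of $\max_k \mathbb{X}_{i,k,n}(2p) - l_K(2p,i,n)$ to a Gumbel variable of mean $\beta^2 r_0^{-2\alpha}u$ (with $u$ the Euler--Mascheroni constant), followed by $\log(1+al) \le a\log(1+l)$ and the uniform bound $l_K(2p,i,n) \ge \bar{l}(2p,K)$.
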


\begin{proof}
Proof given in \appref{ap7}.
\end{proof}
The above theorem leads to following two corollaries for dense and regular-extended networks.
\begin{corollary} \label{cor:cor2}
For dense networks with large $K$ and Rayleigh-fading channels, we have
\begin{eqnarray}
\bigg(1-\frac{1}{\log K}\bigg) \mathsf{OP}\big(r_0, K/\log \log K\big) &\leq& \mc{C}^*_{\mathsf{LB}} \leq 
 \bigg(1+ \frac{\beta^2 r_0^{-2\alpha}u}{\bar{l}(2p, K)} \bigg)  \mathsf{OP}(2p, K)~\mathrm{and} \label{eq:cor2-1a} \\
0.63 \, \mathsf{OP}(r_0, K) &\leq& \mc{C}^*_{\mathsf{LB}} \leq \bigg(\frac{2p}{r_0}\bigg)^{2\alpha} 
\bigg(1+ \frac{\beta^2 r_0^{-2\alpha}u}{\bar{l}(2p, K)} \bigg) \mathsf{OP}(r_0, K). \label{eq:cor2-1}
\end{eqnarray}
where $\bar{l}(2p,K) = \Theta(\log K)$. In other words,
\begin{eqnarray}
0.63 \leq \frac{\mc{C}^*_{\mathsf{LB}}}{\mathsf{OP}\big(r_0, K\big)}  \leq \bigg(\frac{2p}{r_0}\bigg)^{2\alpha} +O\bigg(\frac{1}{\log K}\bigg).
\end{eqnarray}
\end{corollary}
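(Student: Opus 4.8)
The plan is to read off both displayed chains directly from \thmref{thm3} by specializing the free parameter $S_1$ in \eqref{OPB}, and then to estimate the quantity $\bar{l}(2p,K)$ in the dense regime. For the first chain \eqref{cor2-1a}, apply \thmref{thm3} with $S_1 = \log\log K$, which lies in $(0,K]$ for all large $K$. Since $1-e^{-\log\log K} = 1-\tfrac{1}{\log K}$ and $K/S_1 = K/\log\log K$, the bound \eqref{OPB} is, verbatim, $\big(1-\tfrac{1}{\log K}\big)\mathsf{OP}\big(r_0,K/\log\log K\big) \le \mc{C}^*_{\mathsf{LB}} \le \big(1+\tfrac{\beta^2 r_0^{-2\alpha}u}{\bar{l}(2p,K)}\big)\mathsf{OP}(2p,K)$, which is exactly \eqref{cor2-1a}.

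Next I would establish $\bar{l}(2p,K) = \Theta(\log K)$. By \thmref{thm3}, $\bar{l}(2p,K) \approx \hat{l}(2p,K)$, where $l=\hat{l}(2p,K)$ solves $\tfrac{r_0^2 K}{p^2} = e^{l/(\beta^2 r_0^{-2\alpha})}\big(1+\tfrac{lP\con}{(2p)^{2\alpha}r_0^{-2\alpha}}\big)^{B-1}$. Taking logarithms gives $\log\tfrac{r_0^2 K}{p^2} = \tfrac{l}{\beta^2 r_0^{-2\alpha}} + (B-1)\log\big(1+\tfrac{lP\con}{(2p)^{2\alpha}r_0^{-2\alpha}}\big)$. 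In a dense network $p$ is a fixed constant, so as $K\to\infty$ we have $l\to\infty$ and the term linear in $l$ dominates the logarithmic correction; hence $\hat{l}(2p,K) \sim \beta^2 r_0^{-2\alpha}\log\tfrac{r_0^2 K}{p^2} = \Theta(\log K)$, consistent with the extreme-value scale $l_K$ of \lemref{exthm2}. Consequently the prefactor of the upper bound is $1 + O(1/\log K)$.

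For the second chain \eqref{cor2-1}, apply \thmref{thm3} with $S_1 = 1$: then $1-e^{-1} = 0.6321\ldots > 0.63$ and $K/S_1 = K$, which yields $0.63\,\mathsf{OP}(r_0,K) \le \mc{C}^*_{\mathsf{LB}}$. For the matching upper bound, observe that $r_0 < R < p \le 2p$, so \eqref{woww} applies with $c_1 = r_0$ and $c_2 = 2p$, giving $\mathsf{OP}(2p,K) \le \big(\tfrac{2p}{r_0}\big)^{2\alpha}\mathsf{OP}(r_0,K)$; substituting this into the upper bound above produces $\mc{C}^*_{\mathsf{LB}} \le \big(\tfrac{2p}{r_0}\big)^{2\alpha}\big(1+\tfrac{\beta^2 r_0^{-2\alpha}u}{\bar{l}(2p,K)}\big)\mathsf{OP}(r_0,K)$, which is \eqref{cor2-1}. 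Finally, dividing \eqref{cor2-1} through by $\mathsf{OP}(r_0,K) > 0$ and using the estimate $\bar{l}(2p,K) = \Theta(\log K)$ from the previous step to write $1 + \tfrac{\beta^2 r_0^{-2\alpha}u}{\bar{l}(2p,K)} = 1 + O(1/\log K)$ gives the stated bracketing $0.63 \le \mc{C}^*_{\mathsf{LB}}/\mathsf{OP}(r_0,K) \le \big(\tfrac{2p}{r_0}\big)^{2\alpha} + O(1/\log K)$.

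The only step requiring real work is the estimate $\bar{l}(2p,K) = \Theta(\log K)$: everything else is a direct substitution of a particular $S_1$ into \thmref{thm3} together with the monotonicity estimate \eqref{woww}. The subtlety there is controlling the $(B-1)\log(\cdot)$ correction in the transcendental defining relation for $\hat{l}$, in particular when $B$ is permitted to grow with $K$; one must verify that this correction stays of lower order than the linear term, so that the leading behavior remains $\beta^2 r_0^{-2\alpha}\log(r_0^2K/p^2)$.
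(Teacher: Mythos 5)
Your proposal is correct and follows essentially the same route as the paper: the paper's proof is precisely ``put $S_1=\log\log K$ in \thmref{thm3} for \eqref{cor2-1a}, and put $S_1=1$ in \eqref{OPB} together with \eqref{woww} for \eqref{cor2-1}.'' Your additional derivation of $\bar{l}(2p,K)=\Theta(\log K)$ from the transcendental relation \eqref{lbar} (and your caveat about the $(B-1)\log(\cdot)$ correction when $B$ grows) supplies a justification that the paper merely asserts, so it is a welcome, compatible elaboration rather than a different approach.
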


\begin{proof}
Put $S_1 = \log \log K$ in \thmref{thm3} to prove \eqref{cor2-1a}.
Put $S_1 = 1$ in \eqref{OPB} and use \eqref{woww} to prove \eqref{cor2-1}.
\end{proof}

\begin{corollary} \label{cor:cor3}
For regular extended networks and Rayleigh-fading channels, 
if $\rho \defn K/B$ users are distributed uniformly in each cell and each TX schedules users
only within its cell, then
\begin{eqnarray}
\bigg(1-\frac{1}{\log \rho}\bigg) \mathsf{OP}\bigg(r_0, \frac{\rho}{\log \log \rho}\bigg) \leq 
\mc{C}^*_{\mathsf{LB}} \leq 
\bigg(1+ \frac{\beta^2 r_0^{-2\alpha}u}{\bar{l}(R\sqrt{3}/2, \rho)} \bigg) 
\mathsf{OP}\bigg(\frac{R\sqrt{3}}{2}, \rho\bigg) \label{eq:cor3-1a}
\end{eqnarray}
for large $\rho$. Moreover,
\begin{eqnarray}
0.63 \, \mathsf{OP}(r_0, \rho) \leq \mc{C}^*_{\mathsf{LB}} \leq 
\bigg(1+ \frac{\beta^2 r_0^{-2\alpha}u}{\bar{l}(R\sqrt{3}/2, \rho)} \bigg) 
\bigg(\frac{R\sqrt{3}}{2r_0}\bigg)^{2\alpha} \mathsf{OP}(r_0, \rho).\label{eq:cor3-1}
\end{eqnarray}
\end{corollary}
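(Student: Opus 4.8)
The plan is to follow the proof of \corref{cor2} essentially verbatim, with \thmref{thm3} read off for a regular extended network rather than a dense one. The point is that once each transmitter is restricted to scheduling only the $\rho \defn K/B$ users in its own hexagonal cell, the network is --- from the viewpoint of \thmref{thm3} --- a shift-invariant array of $B$ ``dense-network-like'' cells, and the general bounds of \thmref{thm3} apply with three replacements: the user count $K$ becomes the per-cell count $\rho$; the network radius $p$ appearing in \eqref{ob.c_1} becomes the characteristic radius $\Theta(R)$ of a hexagonal cell, so that $r_0^2 h(K)/p^2$ turns into $\sim r_0^2\rho/R^2$, consistent with the $l_K$ of \lemref{exthm4}; and the ``diameter'' constant $2p$ in the upper bound of \thmref{thm3} becomes $R\sqrt{3}/2$, the geometric constant that controls the separation between an out-of-cell (interfering) transmitter and an in-cell user under the disc approximation of a cell. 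With this instantiation in hand, both displays of the corollary come out of the same two substitutions that produced \corref{cor2}.

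Concretely, I would first re-run the argument behind \thmref{thm3} with these replacements to obtain, for large $\rho$ and Rayleigh fading,
\[
\big(1-e^{-S_1}\big)\,\mathsf{OP}\big(r_0,\rho/S_1\big)\ \le\ \mc{C}^*_{\mathsf{LB}}\ \le\ \Big(1+\tfrac{\beta^2 r_0^{-2\alpha}u}{\bar{l}(R\sqrt{3}/2,\rho)}\Big)\mathsf{OP}\big(\tfrac{R\sqrt{3}}{2},\rho\big),\quad S_1\in(0,\rho],
\]
together with $\bar{l}(R\sqrt{3}/2,\rho)=\Theta(\log\rho)$, obtained from the defining relation for $\hat{l}$ in \thmref{thm3} with $K\to\rho$ and $p\to\Theta(R)$, and the ratio bound \eqref{woww} with the cell-geometry constants. (The lower bound is produced exactly as in \thmref{thm3}: for each fixed power allocation, apply \thmref{thm2} with $T=\rho$ and $V(x)=\log(1+x)$ to the SINR seen on each transmitter--resource-block pair $(i,n)$, then maximize the resulting bound over the power allocation; the upper bound bounds the interference terms using the grid geometry --- this is where $R^{-2\alpha}$-type quantities and hence $R\sqrt{3}/2$ enter --- and again uses the concave half of \thmref{thm2}.) Then \eqref{cor3-1a} follows by putting $S_1=\log\log\rho$, using $e^{-\log\log\rho}=1/\log\rho$; and \eqref{cor3-1} follows by putting $S_1=1$, which gives $(1-e^{-1})\mathsf{OP}(r_0,\rho)\ge 0.63\,\mathsf{OP}(r_0,\rho)$ on the left, and applying \eqref{woww} with $c_1=r_0$, $c_2=R\sqrt{3}/2$, i.e.\ $\mathsf{OP}(R\sqrt{3}/2,\rho)\le\big(\tfrac{R\sqrt{3}}{2r_0}\big)^{2\alpha}\mathsf{OP}(r_0,\rho)$, on the right.

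The main obstacle is the geometric bookkeeping that justifies the two cell constants: showing that approximating the hexagonal cell by a disc legitimately puts radius $\Theta(R)$ into the signal term and the effective interference distance $R\sqrt{3}/2$ into \eqref{ob.c_1}. This is where one must be careful about the inscribed-versus-circumscribed disc and about lower-bounding $R_{j,k}$ for every transmitter $j$ outside the cell of user $k$. A secondary point is checking that the restriction ``each TX schedules only within its cell'' --- which shrinks $\mc{U}$, and hence $\mc{C}^*_{\mathsf{LB}}$, relative to its definition --- is precisely the regime in which the stated inequalities are claimed, and that the shift-invariance of the grid makes the per-transmitter subproblems in $\mathsf{OP}(\cdot,\cdot)$ homogeneous, so that the summation over the $B$ cells is already absorbed into $\mathsf{OP}$ and needs no separate treatment. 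Everything beyond that is the same mechanical substitution used for \corref{cor2}.
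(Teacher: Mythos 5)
Your proposal follows the same route as the paper's (very terse) proof: instantiate \thmref{thm3} with $h(\rho)$ in place of $h(K)$ and the cell geometry in place of the network radius, replace $2p$ by $R\sqrt{3}/2$, and then obtain \eqref{cor3-1a} with $S_1=\log\log\rho$ and \eqref{cor3-1} with $S_1=1$ together with \eqref{woww} for $c_1=r_0$, $c_2=R\sqrt{3}/2$ --- exactly the substitutions the paper makes. The geometric bookkeeping you flag as the main obstacle is precisely what the paper glosses over in one sentence ("the maximum distance between a user and its serving TX is $R\sqrt{3}/2$"), so your treatment is, if anything, more careful than the original.
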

\begin{proof}
Note that $p = \Theta(\sqrt{B})$ in this case. Therefore we use, instead of $h(K)$, $h(\rho)$ in
\thmref{thm3} to obtain the above result, where $\rho = \frac{K}{B}$. Also note that $2p$ is replaced by 
$\frac{R\sqrt{3}}{2}$ since the maximum distance between a user and its serving TX is $\frac{R\sqrt{3}}{2}$.
\end{proof}
The above two corollaries highlight the idea behind the proposed achievability strategy. 
In particular, we use the lower bounds in \eqref{cor2-1} and \eqref{cor3-1} to give a distributed 
resource allocation scheme\footnote{One could also use the lower bounds in \eqref{cor2-1a} and \eqref{cor3-1a}
to obtain an alternate distributed resource allocation scheme.}.
The steps of the proposed achievability scheme are summarized below.
\begin{enumerate}
 \item Find the best power allocation (denoted by $\{P_{i,n}\}$) by solving the LHS of \eqref{cor2-1} for dense
networks, or LHS of \eqref{cor3-1} for regular-extended networks. This can be computed offline.
 \item For each TX $i$ and resource-block $n$, schedule the user $k(i,n)$ that satisfies:
\begin{eqnarray}
k(i,n)  = \argmax_{k} \frac{P_{i,n} \gamma_{i,k,n}}{1+\sum_{j\neq i} P_{j,n} \gamma_{j,k,n}}. \label{eq:masteq}
\end{eqnarray}
\end{enumerate}
We propose that each user $k$ calculates $\frac{P_{i,n} \gamma_{i,k,n}}{1+\sum_{j\neq i} P_{j,n} \gamma_{j,k,n}}$ 
for each $(i,n)$ combination and feeds back the value to TX $i$, thus making the algorithm distributed.

We will now compare low-powered peer-to-peer networks and high-powered single TX systems to 
give a design principle using on the bounds in \corref{cor2}.
\\
\textbf{Principle 5.} \emph{The sum-rate of a peer-to-peer network
with $B$ transmit nodes (geographically distributed antennas), each 
transmitting at a fixed power $\bar{P}$ across every resource-block,
increases linearly with $B$ only if $B = O\big(\frac{\log K}{\log \log K} \big)$. 
If $B = \Omega\big(\frac{\log K}{\log \log K}\big)$, then there is no gain with increasing $B$.
Further, the gain obtained by implementing a peer-to-peer network over 
a high-powered single-TX system (with power $B\bar{P}$ across each resource-block) is}
\begin{eqnarray}
\begin{cases}
 \Theta (B) & \mathrm{if}~B = O\big(\frac{\log K}{\log \log K}\big),\\
 \Theta \big( \frac{\log K}{\log \log K} \big) & \mathrm{if}~B = \Omega\big(\frac{\log K}{\log \log K}\big) ~\mathrm{and}~ B = O(\log K),\\
 \Theta \big( \frac{\log K}{\log B} \big) & \mathrm{if}~B = \Omega(\log K).
\end{cases} \label{eq:gainpeer}
\end{eqnarray}

In this case, we consider a peer-to-peer networks with $B$ nodes randomly distributed in a 
circular area of fixed radius $p$. Assuming fixed power allocation, we have $P_{i,n}=\bar{P}$ 
for all $i,n$. Therefore, from \eqref{ob.c_1}, we get
\begin{eqnarray} 
x_{i,n} \approx \Theta\Bigg( \min\Bigg\{ \beta^2 r_0^{-2\alpha} \log \frac{r_0^2 h(K)}{p^2}, \, \frac{1}{\bar{P}} 
\Big(\frac{c}{r_0}\Big)^{2\alpha} \sqrt[B-1]{\frac{r_0^2}{p^2}h(K)} \Bigg\}\Bigg),
\end{eqnarray}
and $\mathsf{OP} (c, h(K)) = \sum_{i,n}\log (1+\bar{P} x_{i,n}) = 
\Theta \big( \min \big\{ BN \log \log h(K), N \log h(K)\big\}\big)$. 
Note that for a fixed power allocation scheme, $\mc{C}^*_{\mathsf{LB}} = \Theta\big(\mathsf{OP} (c,K)\big)$ 
also denotes the expected maximum achievable sum-rate. Therefore, using \corref{cor2} with $h(K) = K$, 
the max-sum-rate under fixed power-allocation scales as:
\begin{eqnarray}
\mc{C}^*_{\mathsf{LB}} = \Theta \big( \min \big\{ BN \log \log K, N \log K\big\}\big). \label{eq:compp2p}
\end{eqnarray}
In other words, if $B = O\big(\frac{\log K}{\log \log K} \big)$, then 
$\mc{C}^*_{\mathsf{LB}} = \Theta (BN \log \log K )$, i.e., we get a linear scaling in max-sum-rate
w.r.t. $B$. Note that this is also the scaling of the upper bound on max-sum-rate
given in \lemref{exthm2}. However, if $B = \Omega\big(\frac{\log K}{\log \log K} \big)$, then 
$\mc{C}^*_{\mathsf{LB}} = \Theta (N \log K)$. 

One can also view the above scenario as a multi-antenna system with a single base-station in which all $B$ 
transmitters are treated as co-located antennas (i.e., $B=M$).
Then, comparing our results to those in~\cite{sharif}, we note that our results
extend the results in~\cite{sharif}. In particular,~\cite{sharif} showed
that linear scaling of sum-rate $\mc{C}^*_{\mathsf{LB}}$ w.r.t. number of antennas $M$ holds when 
$M = \Theta(\log K)$ and does not hold when $M = \Omega(\log K)$. We establish that 
even if $M$ scales slower than $\log K$, the achievable sum-rate scaling is not 
linear in $M$ unless $M = O\big(\frac{\log K}{\log \log K} \big)$. Only in the special
case of $M = \Theta(\log K)$ is $\mc{C}^*_{\mathsf{LB}} = \Theta(N\log K) = \Theta(NM)$.
Another way to state the above result is that for a given number of users $K$ ($K$ is large), the achievable 
sum-rate increases with increasing $M$ only until $M = O\big(\frac{\log K}{\log \log K} \big)$, 
beyond which it stabilizes.

Now, for fair comparison with lower-powered peer-to-peer network, we assume that in case of the 
high-powered single-TX system, $P_{1,n} = B\bar{P}~\textrm{for all}~n$. Then, for a high-powered single-TX system, 
$\mc{C}^*_{\mathsf{LB}} = \Theta(N \log (B \bar{P}\log K))$. Hence, 
the gain of peer-to-peer networks over a high-powered single-TX system is given by \eqref{gainpeer}.

\textr{In the above design principle, the total power allocated by each transmitter is $N\bar{P}$. 
Replacing $\bar{P}$ by $\frac{P\con}{N}$, one can calculate the scaling of achieved sum-rate
when a sum-power constraint of $P\con$ must be met at each transmitter in a dense network (or peer-to-peer
network with $B$ nodes). Repeating the above analysis, we obtain that the equal power
allocation scheme achieves a sum-rate scaling of $\Theta(BN \log \log K)$, which is same as that of the 
upper bound of $\mc{C}^*$ in \thmref{thm1}, as long as $B = O\Big( \frac{\log K}{\log \log K}\Big)$ 
and $N = O(\log K)$. Since the proposed distributed user and power outperforms the equal-power allocation
scheme, the sum-rate scaling remains optimal for the proposed algorithm in the aforementioned range of $B,N$.}

\section{A Note on MISO vs SISO Systems} \label{sec:note}

Until now, we discussed systems where either every transmitter had a single antenna or different transmitters
were treated as geographically distributed antennas with independent power constraints 
(i.e., $P\con$ at each TX). We wrap up our analysis with a discussion on multiple antennas at each TX 
followed by conclusions in \secref{conclusion}. 

We use the opportunistic random scheduling scheme proposed in~\cite{sharif}, which achieves 
the max-sum-rate in the scaling sense for fixed power-allocation schemes. Assume that 
each TX has $M$ antennas and each user (or, receiver) has a single antenna. Every TX constructs $M$ 
orthonormal random beams $\phi_{m}$ $(M \times 1)$ for $m \in \{1, \ldots, M\}$ using an isotropic 
distribution~\cite{marzetta}. With some abuse of notation, let the
user scheduled by TX $i$ across resource block $n$ using beam $m$ be denoted by $u_{i,n,m}$. Then, the 
signal received by $u_{i,n,m}$ across resource block $n$ is
\begin{eqnarray}
y_{u_{i,n,m},n} &=& \vec{H}_{i,u_{i,n,m},n} \Big(\vec{\phi}_m \, x_{i,u_{i,n,m},n} + 
\sum_{m' \neq m} \vec{\phi}_{m'} \, x_{i,u_{i,n,m'},n} \Big)\nonumber \\
&& \mbox{} + \sum_{j \neq i} \sum_{\tilde{m}=1}^M \vec{H}_{j,u_{i,n,\tilde{m}},n}
\, \vec{\phi}_{\tilde{m}} \, x_{j,u_{j,n,\tilde{m}},n} + w_{u_{i,n,m},n} \, ,
\end{eqnarray}
where $\vec{H}_{i,k,n} = \beta R_{i,k}^{-\alpha} \vec{\nu}_{i,k,n} \in \mathbb{C}^{1 \times M}$ 
is the channel-gain matrix, $\vec{\nu}_{i,k,n}$ is the $1 \times M$ vector containing 
i.i.d. complex Gaussian random variables, and 
$w_{k,n} \sim \mc{CN}(0, 1)$ is AWGN that is i.i.d. for all $(k, n)$. Abbreviating $\E\{|x_{i,u_{i,n,m},n}|^2\}$
by $p_{i,n,m}$, we can write the SINR corresponding to the combination $(i, k, n, m)$ as:
\begin{eqnarray}
\mathsf{SINR}_{i,k,n,m} 
&=&\frac{p_{i,n,m}\gamma_{i,k,n,m}}{1 + \sum_{m' \neq m} p_{i,n,m'} \, \gamma_{i,k,n,m}+ 
\sum_{j \neq i} \sum_{\tilde{m}=1}^M  p_{j,n,\tilde{m}} \,\gamma_{j,k,n,\tilde{m}}}, \label{eq:newSINR}
\end{eqnarray}
where $\gamma_{i,k,n,m} \defn |\vec{H}_{i,k,n} \, \vec{\phi}_m|^2$ for all $(i,k,n,m)$.
Since $\vec{H}_{i,k,n} \, \vec{\phi}_m$ are i.i.d. over all $(k,n,m)$~\cite{sharif}, $\gamma_{i,k,n,m}$ 
are i.i.d. over $(k,n,m)$.
A lower bound on max-sum-rate, similar to that in \eqref{C*LB}, under opportunistic random beamforming 
can be written as:
\begin{eqnarray}
\mc{C}^*_{\mathsf{LB, MISO}}
&& \mbox{} \defn \max_{\{p_{i,n,m} \geq 0~\textrm{for all}~i,n,m \}} 
\E\Bigg\{ \max_{\{u_{i,n,m}\}}\sum_{i=1}^B 
\sum_{n=1}^N \sum_{m=1}^M \log \big( 1 +  \mathsf{SINR}_{i,u_{i,n,m},n,m}\big) \Bigg\}\label{eq:newC*1}\\
&& \mbox{} \qquad \qquad \mathrm{s.t.}~ \sum_{n,m} p_{i,n,m} \leq P\con ~\textrm{for all}~ i. \label{eq:newC*2}
\end{eqnarray}
The above optimization problem is similar to that in \eqref{C*LB} with $BM$ transmitters. 
Therefore, repeating the analysis in \eqref{C*LB}-\eqref{cor2-1} under dense networks for the problem in
\eqref{newC*1}-\eqref{newC*2}, we get
$\mc{C}^*_{\mathsf{LB, MISO}} = \Theta\big(\mathsf{OP}_{\mathsf{MISO}}(r_0, K)\big)$,
where
\begin{eqnarray}
\mathsf{OP}_{\mathsf{MISO}}(c, h(K))
&\defn &\max_{\{p_{i,n,m} \geq 0 ~\textrm{for all}~ i,n,m\}} \sum_{i=1}^B \sum_{n=1}^N \sum_{m=1}^M 
\log (1 + p_{i,n,m} \, x_{i,k,n,m}) \label{eq:mimoob1}\\ 
&& \quad \qquad \mathrm{s.t.}~ \sum_{n,m} p_{i,n,m} \leq P\con ~\textrm{for all}~ i, ~\textrm{and for all}~(i,m) \nonumber \\
&& \quad \qquad \bigg(1+\frac{p_{i,n,m} x_{i,k,n,m}}{c^{2\alpha} r_0^{-2\alpha}}\bigg) 
\frac{r_0^2 h(K)}{p^2} = e^{\frac{x_{i,k,n,m}}{\beta^2 r_0^{-2\alpha}}}\prod_{j} \prod_{\tilde{m}=1}^M 
\bigg(1+\frac{ p_{j,n,\tilde{m}} x_{i,k,n,\tilde{m}}}{c^{2\alpha}r_0^{-2\alpha}}\bigg). \nonumber
\end{eqnarray}

\section{Conclusion} \label{sec:conclusion}

In this paper, we developed bounds on the downlink max-sum-rate in
large OFDMA based networks and derived the associated scaling laws
with respect to number of users $K$, transmitters $B$, and resource-blocks $N$.
Our bounds hold for a general spatial distribution of transmitters, a truncated path-loss 
model, and a general channel-fading model. We evaluated the bounds under asymptotic 
situations in \emph{dense} and \emph{extended} networks 
in which the users are distributed uniformly 
for Rayleigh, Nakagami-$m$, Weibull, and 
LogNormal fading models. Using the derived results, we
proposed four design principles for service providers and regulators to achieve
QoS provisioning along with system scalability. According to the first
principle, in dense-femtocell deployments, for a minimum per-user
throughput requirement, we showed that
then the system is scalable only if $BN$ scales as
$\Omega\big(\frac{K}{\log \log K}\big)$. 
In the second principle, we considered 
the cost of bandwidth to the service provider along with the cost of the
transmitters in regular extended networks and showed that under a minimum 
return-on-investment and a minimum per-user throughput requirement, the 
system is not scalable under fixed $B$ and is scalable under fixed $N$ only if 
$B = \Theta(K)$. In the third and fourth
principles, we considered different pricing policies in regular extended networks and
showed that the user density must be kept within a finite range of values in
order to maximize the return-on-investment, while maintaining a minimum per-user rate. 
Thereafter, towards developing an achievability scheme,
we proposed a deterministic distributed resource allocation scheme
and developed an additional design principle. In particular, we showed that
the max-sum-rate of a peer-to-peer network with $B$ transmitters
increases with $B$ only when
$B = O\big(\frac{\log K}{\log \log K}\big)$. Finally, we showed how our results
can be extended to MISO systems.

\bibliographystyle{ieeetr}
\bibliography{macros_abbrev,references}

\begin{thebibliography}{10}

\bibitem{Chandrasekhar:Comm_Mag:08}
V.~Chandrasekhar, J.~G. Andrews, and A.~Gatherer, ``Femtocell networks: a
  survey,'' {\em IEEE Commun. Mag.}, vol.~46, pp.~59--67, Sept. 2008.

\bibitem{sharif}
M.~Sharif and B.~Hassibi, ``On the capacity of {MIMO} broadcast channels with
  partial side information,'' {\em IEEE Trans. Inform. Theory}, vol.~51,
  pp.~506--522, Feb. 2005.

\bibitem{gupta&kumar}
P.~Gupta and P.~Kumar, ``The capacity of wireless networks,'' {\em IEEE Trans.
  Inform. Theory}, vol.~46, pp.~388--404, Mar. 2000.

\bibitem{liang}
L.-L. Xie and P.~Kumar, ``A network information theory for wireless
  communication: scaling laws and optimal operation,'' {\em IEEE Trans. Inform.
  Theory}, vol.~50, pp.~748--767, May 2004.

\bibitem{leveque}
O.~Leveque and I.~Telatar, ``Information-theoretic upper bounds on the capacity
  of large extended ad hoc wireless networks,'' {\em IEEE Trans. Inform.
  Theory}, vol.~51, pp.~858--865, Mar. 2005.

\bibitem{massimo}
M.~Franceschetti, O.~Dousse, D.~N.~C. Tse, and P.~Thiran, ``Closing the gap in
  the capacity of wireless networks via percolation theory,'' {\em IEEE Trans.
  Inform. Theory}, vol.~53, pp.~1009--1018, Mar. 2007.

\bibitem{gross}
M.~Grossglauser and D.~Tse, ``Mobility increases the capacity of ad-hoc
  wireless networks,'' in {\em INFOCOM 2001}, vol.~3, pp.~1360--1369, 2001.

\bibitem{kulkarni}
S.~Kulkarni and P.~Viswanath, ``A deterministic approach to throughput scaling
  in wireless networks,'' {\em IEEE Trans. Inform. Theory}, vol.~50,
  pp.~1041--1049, Jun. 2004.

\bibitem{Ebrahimi}
M.~Ebrahimi, M.~Maddah-Ali, and A.~Khandani, ``Throughput scaling laws for
  wireless networks with fading channels,'' {\em IEEE Trans. Inform. Theory},
  vol.~53, pp.~4250--4254, Nov. 2007.

\bibitem{sripathi}
P.~Sripathi and J.~Lehnert, ``A throughput scaling law for a class of wireless
  relay networks,'' in {\em Proc. Asilomar Conf. Signals, Systems and
  Computers}, vol.~2, pp.~1333--1337, Nov. 2004.

\bibitem{choi}
W.~Choi and J.~Andrews, ``The capacity gain from intercell scheduling in
  multi-antenna systems,'' {\em IEEE Trans. Wireless Commun.}, vol.~7,
  pp.~714--725, Feb. 2008.

\bibitem{VTCgesbert}
M.~Kountouris and J.~Andrews, ``Throughput scaling laws for wireless ad hoc
  networks with relay selection,'' in {\em Proc. IEEE Veh. Tech. Conf.},
  pp.~1--5, Apr. 2009.

\bibitem{gesbert}
D.~Gesbert and M.~Kountouris, ``Rate scaling laws in multicell networks under
  distributed power control and user scheduling,'' {\em IEEE Trans. Inform.
  Theory}, vol.~57, pp.~234--244, Jan. 2011.

\bibitem{sohn}
I.~Sohn, J.~Andrews, and K.~B. Lee, ``Capacity scaling of {MIMO} broadcast
  channels with random user distribution,'' in {\em Proc. IEEE Int. Symposium
  Inform. Theory}, pp.~2133--2137, Jun. 2010.

\bibitem{friis}
H.~T. Friis, ``A note on a simple transmission formula,'' {\em Proc. of the
  IRE}, vol.~34, pp.~254--256, May 1946.

\bibitem{Inaltekin}
H.~Inaltekin, S.~B. Wicker, M.~Chiang, and H.~V. Poor, ``On unbounded path-loss
  models: effects of singularity on wireless network performance,'' {\em IEEE
  J. Select. Areas In Commun.}, vol.~27, pp.~1078--1092, Sept. 2009.

\bibitem{Leinonen}
J.~Leinonen, J.~Hamalainen, and M.~Juntti, ``Performance analysis of downlink
  ofdma resource allocation with limited feedback,'' {\em Wireless
  Communications, IEEE Transactions on}, vol.~8, pp.~2927 --2937, june 2009.

\bibitem{MaiVu}
M.~Vu, ``Miso capacity with per-antenna power constraint,'' {\em IEEE Trans.
  Commun.}, vol.~59, pp.~1268--1274, May 2011.

\bibitem{andrews}
M.~Andrews, V.~Capdevielle, A.~Feki, and P.~Gupta, ``Autonomous spectrum
  sharing for mixed lte femto and macro cells deployments,'' in {\em INFOCOM},
  pp.~1--5, Mar. 2010.

\bibitem{marzetta}
B.~Hassibi and T.~L. Marzetta, ``Multiple-antennas and isotropically random
  unitary inputs: The received signal density in closed form,'' {\em IEEE
  Trans. Inform. Theory}, vol.~48, pp.~1473--1484, Jun. 2002.

\bibitem{Nagaraja}
H.~A. David and H.~N. Nagaraja, {\em Order Statistics}.
\newblock New York: John Wiley and Sons, 3rd~ed., 2003.

\bibitem{Nagaraja2}
B.~C. Arnold, N.~Balakrishnan, and H.~N. Nagaraja, {\em A First Course in Order
  Statistics}.
\newblock SIAM-Society for Industrial and Applied Mathematics, 2008.

\bibitem{uzgoren}
N.~T. Uzgoren, {\em "The asymptotic developement of the distribution of the
  extreme values of a sample" in Studies in Mathematics and Mechanics Presented
  to Richard von Mises}.
\newblock New York: Academic, 1954.

\bibitem{stegun}
M.~Abramowitz and I.~A. Stegun, {\em Handbook of Mathematical Functions}.
\newblock 9th printing, Dover Publications, 1970.

\bibitem{pickands}
J.~{Pickands III}, ``Moment convergence of sample extremes,'' {\em Ann. Math.
  Statist.}, vol.~39, no.~3, pp.~881--889, 1968.

\end{thebibliography}

\putFrag{model2}
	{OFDMA downlink system with $K$ users and $B$ transmitters. $O$ is assumed to be the origin.}
	{2.5}
	{\psfrag{p}[][][0.7]{\Large\sf $p$}
	\psfrag{p-R}[][][0.7]{\Large\sf $p-R$~~}
	\psfrag{base-station}[][][0.7]{\Large\sf \quad \quad Transmitter (TX)}
	\psfrag{users}[][][0.7]{\Large\sf ~~~User}
	\psfrag{O}[][][0.7]{\Large\sf O}
	}

\putFrag{model3}
	{A regular extended network setup.}
	{2.5}
	{\psfrag{2R}[][][0.7]{\Large\sf $2R$}
	\psfrag{O}[][][0.7]{\Large\sf $O$}
	\psfrag{Base Stations}[][][0.7]{\Large\sf Transmitters}
	\psfrag{users}[][][0.7]{\Large\sf ~~~Users}
	\psfrag{O}[][][0.7]{\Large\sf O}
	}

\putFrag{tradeoff_new}
	{LHS and RHS of \eqref{tra} as a function of $\rho$.}
	{3.8}
	{\psfrag{rho}[][][0.7]{\Large\sf User density $\rho = \frac{K}{B}$}
	\psfrag{yaxis}[][][0.7]{\Large\sf LHS and RHS of \eqref{tra}}
	}

\putFrag{tradeoff2_new}
	{Optimal user-density, i.e, $\rho^*(\lambda)$, as a function of $\lambda$.}
	{3.8}
	{\psfrag{rho}[][][0.7]{\Large\sf $\rho^*(\lambda)$}
	\psfrag{lam}[][][0.7]{\Large\sf Langrange multiplier $\lambda$}
	}

\putFrag{tradeoff3_new}
	{LHS and RHS of \eqref{tra2} as a function of $\rho$.}
	{3.8}
	{\psfrag{rho}[][][0.7]{\large\sf User density $\rho = K/B$}
	  \psfrag{A}[][][0.7]{ \large $\mathsf{A}$}
	  \psfrag{B}[][][0.7]{ \large $\mathsf{B}$}
	\psfrag{y}[][][0.7]{\large\sf LHS and RHS of \eqref{tra2}}
	}

\ifthenelse{\boolean{SEP_FIG_CAPS}}
{
 \renewcommand{\putFrag}[4]{\begin{figure}[p]
                            \centering
                            #4
			    \includegraphics[width=#3in]{figures_jsac/#1.eps}
            		    \caption{}
     			    \label{fig:#1}
                          \end{figure}
                          \clearpage}
 \renewcommand{\putTable}[3]{\begin{table}[p]
  			    \centering
		            #3
            		    \caption{}
     			    \label{tab:#1}
			  \end{table}
			  \clearpage}
 \renewcommand{\capFrag}[2]{\noindent Fig.~\ref{fig:#1}. #2 \medskip\\}
 \newcommand{\capTable}[2]{\noindent Tab.~\ref{tab:#1}. #2 \medskip\\}
}
{
 \renewcommand{\putFrag}[4]{\begin{figure}[h]
                            \centering
                            #4
			    \includegraphics[width=#3in]{figures_jsac/#1.eps}
            		    \caption{#2}
           		    \label{fig:#1}
                          \end{figure} }
 \renewcommand{\putTable}[3]{\begin{table}[ht]
  			    \centering
		            #3
     			    \caption{#2}
     			    \label{tab:#1}
			  \end{table} }
 \renewcommand{\capFrag}[2]{}
 \renewcommand{\capTable}[2]{}
}

\newpage
\begin{appendices}

\section{Proof of \thmref{thm1}} \label{app:ap1}

By ignoring the interference, we have
\begin{eqnarray}
\mc{C}_{\vec{x},\vec{y},\vec{\nu}}(\vec{U}, \vec{P})
&\leq& \sum_{i=1}^B \sum_{n=1}^N
\log\Big(1+ P_{i,n}\, \gamma_{i, U_{i, n}, n}\Big) \label{eq:UB}
\end{eqnarray}
Taking expectation w.r.t. $\{\vec{x}, \vec{y}, \vec{\nu}\}$, we have
\begin{eqnarray}
\mc{C}^* &=& \E\{
\mc{C}_{\vec{x},\vec{y},\vec{\nu}}(\vec{U}, \vec{P})\} \nonumber \\
&\leq&
\sum_{i=1}^B \sum_{n=1}^N
\max_k \E
\Big\{ \log\big(1+ P\con\, \gamma_{i,k,n}\big) \Big\}\nonumber \\
&\leq& \sum_{i=1}^B \sum_{n=1}^N
\E\Big\{ \max_k \log\big(1+P\con \gamma_{i,k,n}
\big) \Big\} \label{eq:ub_p}\\
&\leq&
\sum_{i=1}^B \sum_{n=1}^N
\E \Big\{ \log\big(1+P\con \max_k \gamma_{i,k,n}
\big) \Big\}, \label{eq:fub}
\end{eqnarray}
where \eqref{ub_p} follows because, for any function $f(\cdot, \cdot)$,
$\max_k \E \{f(k, \cdot)\} \leq
\E \{\max_k f(k, \cdot)\}$, and \eqref{fub} follows because $\log(\cdot)$
is a non-decreasing function.
One can also construct an alternate upper bound by applying Jensen's inequality
to the RHS of \eqref{UB} as follows:
\begin{eqnarray}
\mc{C}_{\vec{x},\vec{y},\vec{\nu}}(\vec{U}, \vec{P}) &\leq& N \sum_{i=1}^B
\log\Big(1+ \frac{1}{N}\sum_n P_{i,n}\, \gamma_{i, U_{i, n}, n}\Big) \\
&\leq& N \sum_{i=1}^B \log\Big(1+ \frac{P\con}{N} \max_{n,k}\gamma_{i,k,n}\Big),
\end{eqnarray}
since $\sum_{n}P_{i,n} \leq P\con$. Therefore,
\begin{eqnarray}
\mc{C}^* &=& \E\{\mc{C}_{\vec{x},\vec{y},\vec{\nu}}(\vec{U}, \vec{P})\} \nonumber \\
&\leq& N \sum_{i=1}^B
\E \bigg\{
\log\Big(1+ \frac{P\con}{N} \max_{n,k}
\gamma_{i,k,n} \Big) \bigg\}.  \label{eq:pub}
\end{eqnarray}
Combining \eqref{fub} and \eqref{pub}, we obtain 
\begin{eqnarray}
\mc{C}^* &\leq&
\min \bigg\{
\sum_{i,n} \E_{\vec{x}, \vec{y}, \vec{\nu}}\Big\{ \log\big(1+P\con \max_k \gamma_{i,k,n}
\big) \Big\}, \nonumber \\
&& \mbox{} N \sum_{i}
\E_{\vec{x}, \vec{y}, \vec{\nu}} \Big\{
\log\Big(1+ \frac{P\con}{N} \max_{n,k}
\gamma_{i,k,n} \Big) \Big\}
\bigg\}.
\end{eqnarray}

For lower bound, let $P\con/N$ power be allocated to each
resource-block by every BS. Then,
\begin{eqnarray}
\mc{C}_{\vec{x},\vec{y},\vec{\nu}}(\vec{U}, \vec{P}) &\geq& \sum_{i=1}^B \sum_{n=1}^N
\log\bigg(1+\frac{P\con \,\gamma_{i,k_{i,n},n}}
{N+ P\con \sum_{j\neq i}\gamma_{j,k_{i,n},n}}\bigg), \label{eq:BNlb}
\end{eqnarray}
where $k_{i,n}$ is an arbitrary user allocated on subchannel $n$ by BS $i$.
Note that, due to sub-optimal power allocation, all user-allocation
strategies $\{k_{i,n}, \forall i, n\}$ achieve
a utility that is lower that
$\mc{C}_{\vec{x},\vec{y},\vec{\nu}}(\vec{U}, \vec{P})$.
To handle \eqref{BNlb} easily, we introduce an indicator variable
$I_{i,k,n}(\vec{x}, \vec{y}, \vec{\nu})$
which equals $1$ if $k = k_{i,n}$, otherwise takes the value $0$.
Since, each BS $i$ can schedule only one user
on any resource block $n$ in a given time-slot, we have
$\sum_k I_{i,k,n}(\vec{x}, \vec{y}, \vec{\nu}) = 1~\forall~ i,n$.
Now, \eqref{BNlb} can be re-written as:
\begin{eqnarray}
\mc{C}_{\vec{x},\vec{y},\vec{\nu}}(\vec{U}, \vec{P})
&\geq& \sum_{i=1}^B \sum_{n=1}^N \sum_{k=1}^K
I_{i,k,n}(\vec{x}, \vec{y}, \vec{\nu})
\log\bigg(1+\frac{P\con \,\gamma_{i,k,n}}
{N+ P\con \sum_{j\neq i}\gamma_{j,k,n}}\bigg). \nonumber
\end{eqnarray}
Taking expectation w.r.t. $(\vec{x}, \vec{y}, \vec{\nu})$, we get
\begin{eqnarray}
\mc{C}^*
&\geq& \sum_{i,n,k}
\E \Bigg\{ I_{i,k,n}(\vec{x}, \vec{y}, \vec{z})
\log\bigg(1+\frac{P\con \,\gamma_{i,k,n}}
{N+ P\con \sum_{j\neq i}\gamma_{j,k,n}}\bigg)\Bigg\} \nonumber \\
&\geq& \sum_{i,n,k} 
\E
\Bigg\{
I_{i,k,n}(\vec{x}, \vec{y}, \vec{z})
\frac{\log\big(1+ P\con \, \gamma_{i,k,n} \big) }
{N+ P\con \sum_{j\neq i}\gamma_{j,k,n}}\Bigg\}.
\end{eqnarray}
Here, the last equation holds because for any non-decreasing
concave function $V(\cdot)$ (for example, $V(x) = \log(1+x)$)
and for all $d_1, d_2 >0$, we have
\begin{eqnarray}
V(d_1) - V(0) &\leq& \Big[V\Big(\frac{d_1}{d_2}\Big)-V(0)\Big]d_2 \nonumber \\
\implies V\Big(\frac{d_1}{d_2}\Big) &\geq& \frac{V(d_1) - V(0)}{d_2} + V(0).
\end{eqnarray}
Now, $\frac{1}{N+ P\con \sum_{j\neq i}\gamma_{j,k,n}
\big(y_{j,k,n}\big)} \leq 1$. Therefore,
\begin{eqnarray}
\label{eq:BNlb2}
\mc{C}^* &\geq&
\sum_{i,n,k} \E\Bigg\{
\frac{I_{i,k,n}(\vec{x}, \vec{y}, \vec{\nu}) \log\big(1+ P\con \,
\gamma_{i,k,n} \big)}
{N+ P\con \sum_{j\neq i}\gamma_{j,k,n}} \Bigg\}.
\end{eqnarray}
To obtain the best lower bound, we now select the user $k_{i,n}$ to be the one
for which $\gamma_{i,k,n}$ attains the highest value for every combination
$(i,n)$, i.e.,
\begin{equation}
I_{i,k,n}(\vec{x}, \vec{y}, \vec{\nu}) = \begin{cases} 1 & \textrm{if}~ k = \arg\max_{k'} \gamma_{i,k',n} \\
0 & \textrm{otherwise}. \end{cases} \label{eq:I*}
\end{equation}
Using \eqref{I*} in \eqref{BNlb2}, we get the lower bound in \thmref{thm1}.

\section{Proof of \lemref{exthm2} and \lemref{exthm4}} \label{app:ap4}

The proof outline is as follows. We first prove three additional lemmas.
The first lemma (see \lemref{lem1} below) uses one-sided variant of 
Chebyshev's inequality (also called Cantelli's inequality) and Theorem $1$ to show that
\begin{eqnarray}
\mc{C}^*
&\geq& f^{\sf DN}_{\sf lo}(r, B,N) \sum_{i,n} \E \Big\{
\log\big(1+ P\con \max_k \gamma_{i,k,n} \big) \Big\}, \nonumber
\end{eqnarray}
where $\mc{C}^*$ is expected achievable sum-rate of the system.
The second lemma, i.e, \lemref{lem2}, finds the cumulative distribution function (CDF) of 
channel-SNR, denoted by $F_{\gamma_{i,k,n}}(\cdot)$,
under Rayleigh-distributed $|\nu_{i,k,n}|$ and a truncated path-loss model. 
The third lemma, i.e, \lemref{lem3}, uses \lemref{lem2} and
extreme-value theory to show that
$(\max_k\gamma_{i,k,n} - l_K)$ converges in distribution to a
limiting random variable with a Gumbel type cdf, that is given by
\begin{equation}
\exp(-e^{-x r_0^{2\alpha}/\beta^2}), ~x \in (-\infty, \infty), \label{eq:gnedenko}
\end{equation}
where $F_{\gamma_{i,k,n}}(l_K) = 1 - \frac{1}{K}$.
Thereafter, we use Theorem $1$, \lemref{lem1}, \lemref{lem3}, and \cite[Theorem A.2]{sharif} 
to obtain the final result. 

Now, we give details of the full proof.

\begin{lemma} \label{lem:lem1}
The expected achievable sum-rate is lower bounded as:
\begin{eqnarray}
\mc{C}^* 
&\geq& f^{\sf DN}_{\sf lo}(r, B,N) \sum_{i,n} \E \Big\{ 
\log\big(1+ P\con \max_k \gamma_{i,k,n} \big) \Big\}, \label{eq:lob}
\end{eqnarray}
where $r > 0$ is a fixed number,
$f^{\textsf DN}_{\textsf lo}(r,B,N) = 
\frac{r^2}{(1+r^2)(N+P\con \beta^2 r_0^{-2\alpha} (\mu + r \sigma) B)}$, 
$\mu$ and $\sigma$ are the mean and standard-deviation of $|\nu_{i,k,n}|^2$.
\end{lemma}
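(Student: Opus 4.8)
The plan is to start from the lower bound already proved in \thmref{thm1}, namely the left inequality of \eqref{b},
\[
\mc{C}^* \;\geq\; \sum_{i,n}\E\Bigg\{\frac{\log\big(1+P\con\,\gamma_{i,k^*,n}\big)}{N+P\con\sum_{j\neq i}\gamma_{j,k^*,n}}\Bigg\},\qquad k^*=\argmax_k\gamma_{i,k,n},
\]
and to show that for each pair $(i,n)$ the corresponding summand is at least $f^{\sf DN}_{\sf lo}(r,B,N)\,\E\{\log(1+P\con\max_k\gamma_{i,k,n})\}$; summing over $(i,n)$ then yields \eqref{lob}. The difficulty is that the numerator and the denominator share the random index $k^*$ (and, through $k^*$, the user locations), so one cannot simply factor an expectation of the numerator out of a bound on the reciprocal of the denominator. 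The truncated path-loss model is exactly what removes this coupling.

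First I would use $R_{j,k}\geq r_0$ together with \eqref{gamdef} to bound $\gamma_{j,k,n}=\beta^2R_{j,k}^{-2\alpha}|\nu_{j,k,n}|^2\leq\beta^2r_0^{-2\alpha}|\nu_{j,k,n}|^2$, so that the interference in the denominator is at most $\beta^2r_0^{-2\alpha}W$ with $W\defn\sum_{j\neq i}|\nu_{j,k^*,n}|^2$; note that $W$ depends on the user locations only through the index $k^*$. Next I would condition on the $\sigma$-field $\mc{F}$ generated by all user locations $\{(x_k,y_k)\}$ and the $i$-th transmitter's own fading magnitudes $\{|\nu_{i,k,n}|^2\}_k$ on block $n$. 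Given $\mc{F}$, both $\log(1+P\con\max_k\gamma_{i,k,n})$ and $k^*$ are deterministic, while the fadings $\{\nu_{j,k,n}:j\neq i\}$ remain independent of $\mc{F}$; hence, conditionally on $\mc{F}$, $W$ is a sum of $B-1$ i.i.d.\ copies of $|\nu_{i,k,n}|^2$, so $\E\{W\mid\mc{F}\}=(B-1)\mu$ and $\var(W\mid\mc{F})=(B-1)\sigma^2$.

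Then I would apply the one-sided Chebyshev (Cantelli) inequality: $\Pr\big(W-(B-1)\mu\geq t\mid\mc{F}\big)\leq\frac{(B-1)\sigma^2}{(B-1)\sigma^2+t^2}$ for $t>0$. Choosing $t=r\sqrt{B-1}\,\sigma$ gives $\Pr\big(W<(B-1)\mu+r\sqrt{B-1}\,\sigma\mid\mc{F}\big)\geq\frac{r^2}{1+r^2}$, and on that event the denominator satisfies $N+P\con\beta^2r_0^{-2\alpha}W< N+P\con\beta^2r_0^{-2\alpha}\big((B-1)\mu+r\sqrt{B-1}\,\sigma\big)\leq N+P\con\beta^2r_0^{-2\alpha}(\mu+r\sigma)B$, where the last step uses the crude bound $\sqrt{B-1}\leq B$. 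Since $\log(1+P\con\max_k\gamma_{i,k,n})\geq0$ and is $\mc{F}$-measurable, restricting the conditional expectation to this event and lower-bounding the reciprocal of the denominator by the reciprocal of the constant above yields $\E\{\cdot\mid\mc{F}\}\geq f^{\sf DN}_{\sf lo}(r,B,N)\log(1+P\con\max_k\gamma_{i,k,n})$. Taking the outer expectation over $\mc{F}$ gives
\[
\E\Bigg\{\frac{\log(1+P\con\gamma_{i,k^*,n})}{N+P\con\sum_{j\neq i}\gamma_{j,k^*,n}}\Bigg\}\;\geq\;f^{\sf DN}_{\sf lo}(r,B,N)\,\E\big\{\log(1+P\con\max_k\gamma_{i,k,n})\big\},
\]
and summing over all $(i,n)$ completes the proof of \eqref{lob}.

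I expect the conditioning step to be the main obstacle: one has to argue carefully that extracting the (random, $\mc{F}$-measurable) index $k^*$ from the location-independent fading magnitudes preserves their i.i.d.\ law and creates no dependence with the numerator, so that the conditional mean and variance of $W$ are exactly $(B-1)\mu$ and $(B-1)\sigma^2$. Once this is established, the remainder is a one-line application of Cantelli's inequality together with monotonicity of $x\mapsto(N+cx)^{-1}$ and the elementary bound $\sqrt{B-1}\leq B$.
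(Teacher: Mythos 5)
Your proof is correct and takes essentially the same route as the paper's: bound the interference using $R_{j,k}\geq r_0$ so that it depends only on the fades $\{|\nu_{j,k,n}|^2\}_{j\neq i}$, apply Cantelli's inequality to isolate an event of probability at least $\frac{r^2}{1+r^2}$ on which the denominator is at most $N+P\con\beta^2 r_0^{-2\alpha}(\mu+r\sigma)B$, and exploit the independence of the interfering fades from the numerator. Your explicit conditioning on the user locations and TX $i$'s own fades to handle the random index $k^*$ is simply a more careful rendering of the one-line independence claim the paper uses to justify the same step.
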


\begin{proof}
We know that
\begin{equation}
\sum_{j \neq i} \gamma_{j,k,n} = \beta^2
\sum_{j \neq i} R_{j,k}^{-2\alpha} |\nu_{j,k,n}|^2 \leq \beta^2 r_0^{-2\alpha}
\sum_{j \neq i} |\nu_{j,k,n}|^2.
\end{equation}
Therefore, the lower bound in Theorem $1$ reduces to the following equation.
\begin{eqnarray}
\mc{C}^* \geq 
\sum_{i,n,k} \E\Bigg\{ 
\frac{\max_k \log\big(1+ P\con 
 \gamma_{i,k,n} \big)} {N+ P\con \beta^2 r_0^{-2\alpha} \sum_{j \neq i} 
|\nu_{j,k,n}|^2} \Bigg\}. \label{eq:BNlb4}
\end{eqnarray}

Now, we apply one-sided variant of 
Chebyshev's inequality (also called Cantelli's inequality) 
to the term $\sum_{j \neq i} |\nu_{j,k,n}|^2$ in the denominator.
By assumption, $|\nu_{i,k,n}|^2$ are i.i.d. across $i,k,n$ with mean $\mu$
and variance $\sigma$. Hence, applying Cantelli's inequality, We have
\begin{eqnarray}
\operatorname{Pr}\Big(\sum_{j\neq i}|\nu_{j,k,n}|^2 > (B-1)(\mu + r \sigma) \Big) 
&\leq& \frac{1}{1+r^2} \nonumber \\ 
\implies \operatorname{Pr}\Big(\sum_{j\neq i}|\nu_{j,k,n}|^2 > (\mu + r \sigma)B \Big) 
&\leq& \frac{1}{1+r^2} \\
\implies \operatorname{Pr}\Big(\sum_{j\neq i}|\nu_{j,k,n}|^2 \leq (\mu + r \sigma)B \Big) 
&\geq& \frac{r^2}{1+r^2} \label{eq:musig_1} 
\end{eqnarray}
where $r > 0$ is a fixed number.

Now, we break the expectation in \eqref{BNlb4} into two parts --- one with
$\sum_{j\neq i}|\nu_{j,k,n}|^2 > (\mu + r \sigma)B$ 
and other with $\sum_{j\neq i}|\nu_{j,k,n}|^2 \leq (\mu + r \sigma)B$.
We then ignore the first part to obtain another lower bound.
Therefore, we now have
\begin{eqnarray}
\mc{C}^* &\geq& \sum_{i=1}^B 
\sum_{n=1}^N \E \bigg\{ \frac{\max_k
\log\big(1+ P\con \, \gamma_{i,k,n} \big)}{N+(\mu + r \sigma)B P\con \beta^2 r_0^{-2\alpha}} 
\bigg|_{\sum_{j\neq i}|\nu_{j,k,n}|^2 \leq (\mu + r \sigma)B}\bigg\} \nonumber \\
&& \mbox{} \times \operatorname{Pr}\Big( \sum_{j\neq i}|\nu_{j,k,n}|^2 \leq (\mu + r \sigma)B\Big)
\nonumber \\ 
&\geq& \frac{\frac{r^2}{1+r^2}}{N+ 
(\mu + r \sigma)B P\con \beta^2 r_0^{-2\alpha}} \sum_{i=1}^B 
\sum_{n=1}^N \E \bigg\{ \max_k
\log\big(1+ P\con \, \gamma_{i,k,n} \big) \bigg\} 
\label{eq:ij} \\ 
&=& f^{\sf DN}_{\sf lo}(r,B,N) \sum_{i,n} 
\E \Big\{ \log\big(1+ P\con \, \max_k \gamma_{i,k,n} \big) 
\Big\}, 
\end{eqnarray}
where \eqref{ij} follows because $\sum_{j\neq i}|\nu_{j,k,n}|^2$ 
is independent of $\nu_{i,k,n}$ (and hence, independent of $\gamma_{i,k,n}$). 
Note that for Rayleigh fading channels, i.e., $\nu_{i,k,n} \sim \mc{CN}(0, 1)$, we have $\mu = \sigma = 1$.
\end{proof}

\lemref{lem1} and Theorem $1$ (proved earlier) show that the lower and upper bounds on $\mc{C}^*$ 
are functions of $\max_k \gamma_{i,k,n}$. To compute $\max_k \gamma_{i,k,n}$ for large $K$, 
we prove \lemref{lem2} and \lemref{lem3}.

\begin{lemma} \label{lem:lem2}
Under Rayleigh fading, i.e., $\nu_{i,k,n} \sim \mc{CN}(0, 1)$, the
CDF of $\gamma_{i,k,n}$ is given by
\begin{eqnarray}
F_{\gamma_{i,k,n}}(\gamma)
&=& 1 - \frac{r_0^2}{p^2} e^{-\frac{\gamma}{\beta^2 r_0^{-2\alpha}}}
- \frac{1}{\alpha \beta^2 p^2} \int_{\frac{\beta^2}{(p-d)^{2\alpha}}}^{\beta^2 r_0^{-2\alpha}} e^{-\frac{\gamma}{g}}
 \Big(\frac{g}{\beta^2}\Big)^{-1-\frac{1}{\alpha}} dg 
\nonumber \\ && \mbox{} 
+ \int^{\frac{\beta^2}{(p-d)^{2\alpha}}}_{\frac{\beta^2}{(p+d)^{2\alpha}}}
\exp (-\gamma/g)d s(g), \label{eq:long_int}
\end{eqnarray}
where $d = \sqrt{a_i^2 + b_i^2}$, and
\begin{eqnarray}
s(g) &=& \frac{1}{\pi p^2} \left[
\Big(\frac{g}{\beta^2}\Big)^{-1/\alpha} \cos^{-1}\bigg(\frac{d^2 + \big(\frac{g}{\beta^2}
\big)^{-1/\alpha} - p^2}{2d \big(\frac{g}{\beta^2}\big)^{-1/2\alpha}}\bigg) 
+ p^2 \cos^{-1} \bigg(\frac{d^2+p^2 - \big(\frac{g}{\beta^2}\big)^{-1/\alpha}}{2dp}\bigg) \right.
\nonumber \\ && \mbox{}
\qquad \quad - \frac{1}{2}\sqrt{\Big(p+d-\Big(\frac{g}{\beta^2}\Big)^{-1/2\alpha}\Big)
\Big(p+\Big(\frac{g}{\beta^2}\Big)^{-1/2\alpha}-d\Big)}
\nonumber \\ && \mbox{} \qquad \qquad
\left. \times \sqrt{\Big(d + \Big(\frac{g}{\beta^2}\Big)^{-1/2\alpha} - p\Big)
\Big(d + p + \Big(\frac{g}{\beta^2}\Big)^{-1/2\alpha}\Big)}
\,\, 
\sizecorr{
\Big(\frac{g}{\beta^2}\Big)^{-1/\alpha} \cos^{-1}\bigg(\frac{d^2 + \big(\frac{g}{\beta^2}
\big)^{-1/\alpha} - p^2}{2d \big(\frac{g}{\beta^2}\big)^{-1/2\alpha}}\bigg) 
+ p^2 \cos^{-1} \bigg(\frac{d^2+p^2 - \big(\frac{g}{\beta^2}\big)^{-1/\alpha}}{2dp}\bigg)
}
\right]. \nonumber
\end{eqnarray}
\end{lemma}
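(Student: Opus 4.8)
The plan is to condition on the location of user $k$ and to exploit that, for $\nu_{i,k,n}\sim\mc{CN}(0,1)$, the fading power $|\nu_{i,k,n}|^2$ is a unit-mean exponential random variable, independent of the user's position. Writing the random path-loss gain as $g \defn \beta^2 R_{i,k}^{-2\alpha}$, we have $\gamma_{i,k,n}=g\,|\nu_{i,k,n}|^2$, so $\Pr\{\gamma_{i,k,n}\le\gamma\mid g\}=1-e^{-\gamma/g}$ and hence
\[
F_{\gamma_{i,k,n}}(\gamma) \;=\; 1-\E_g\big\{e^{-\gamma/g}\big\}.
\]
The whole computation therefore reduces to identifying the law of $g$, equivalently of $R_{i,k}=\max\{r_0,r\}$, where $r$ is the Euclidean distance from user $k$ to TX $i$. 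Because the user is uniform on the disc of radius $p$ about the origin and TX $i$ lies at distance $d=\sqrt{a_i^2+b_i^2}\le p-R<p-r_0$ from the origin (using $r_0<R$), the law of $r$ is governed by $\Pr\{r\le t\}=A(t)/(\pi p^2)$, where $A(t)$ denotes the area of the intersection of the disc of radius $t$ centred at TX $i$ with the network disc of radius $p$.

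I would then split $\E_g\{e^{-\gamma/g}\}$ into the three geometric regimes for $r$. \emph{(i)} For $r\le r_0$ we have $R_{i,k}=r_0$, so $g$ has an atom at $\beta^2 r_0^{-2\alpha}$ of mass $r_0^2/p^2$ (the disc of radius $r_0$ about TX $i$ lies entirely in the network disc), contributing $\frac{r_0^2}{p^2}e^{-\gamma/(\beta^2 r_0^{-2\alpha})}$. \emph{(ii)} For $r_0<r\le p-d$ the circle of radius $r$ about TX $i$ is still entirely inside the network disc, so $r$ has density $2r/p^2$; substituting $g=\beta^2 r^{-2\alpha}$, i.e.\ $r=(g/\beta^2)^{-1/(2\alpha)}$, and carrying the density through this orientation-reversing change of variables gives $\frac{1}{\alpha\beta^2 p^2}\int_{\beta^2/(p-d)^{2\alpha}}^{\beta^2 r_0^{-2\alpha}}e^{-\gamma/g}(g/\beta^2)^{-1-1/\alpha}\,dg$. \emph{(iii)} For $p-d<r\le p+d$ the circle of radius $r$ overlaps the network disc only partially, $A(r)$ being the classical area of intersection of two circles of radii $r$ and $p$ whose centres are $d$ apart; writing $s$ for $A(r)/(\pi p^2)$ re-expressed in $g$ via the same substitution recovers exactly the function $s(g)$ in the statement, and this regime contributes $-\int_{\beta^2/(p+d)^{2\alpha}}^{\beta^2/(p-d)^{2\alpha}}e^{-\gamma/g}\,ds(g)$. (For $r>p+d$ the user is outside the network disc, contributing nothing.) Inserting these three pieces into $F_{\gamma_{i,k,n}}(\gamma)=1-\E_g\{e^{-\gamma/g}\}$ gives the stated expression; the $+$ sign on the last integral arises precisely because $r\mapsto g$ reverses orientation, so that, with $s$ the intersection-area fraction, $ds(g)$ is a negative measure on that interval.

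The hardest part will be regime \emph{(iii)}: one must use the two-disc intersection-area formula
\[
A=r^2\cos^{-1}\!\Big(\tfrac{d^2+r^2-p^2}{2dr}\Big)+p^2\cos^{-1}\!\Big(\tfrac{d^2+p^2-r^2}{2dp}\Big)-\tfrac12\sqrt{(-d+r+p)(d-r+p)(d+r-p)(d+r+p)},
\]
substitute $r^2=(g/\beta^2)^{-1/\alpha}$ (equivalently $r=(g/\beta^2)^{-1/(2\alpha)}$) consistently everywhere — the last term factoring into the two radicals displayed in the statement — and track the decreasing change of variables carefully so that the integration limits and the sign of the Stieltjes measure $ds(g)$ come out as written. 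As a consistency check one verifies the endpoint values $s\to (p-d)^2/p^2$ as $g\to\beta^2/(p-d)^{2\alpha}$ (matching the endpoint of regime \emph{(ii)}) and $s\to 1$ as $g\to\beta^2/(p+d)^{2\alpha}$ (total mass one). Everything else — the atom at $r_0$, the ordering $r_0<p-d<p+d$, and the reduction to the exponential law of $|\nu_{i,k,n}|^2$ — is elementary.
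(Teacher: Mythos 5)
Your proposal is correct and follows essentially the same route as the paper's proof: both derive the law of the path-gain $G_{i,k}=\beta^2 R_{i,k}^{-2\alpha}$ geometrically (an atom of mass $r_0^2/p^2$ at $\beta^2 r_0^{-2\alpha}$, the density $\frac{1}{\alpha\beta^2 p^2}(g/\beta^2)^{-1-1/\alpha}$ where the disc of radius $r$ about TX $i$ lies wholly inside the network disc, and the two-circle intersection-area fraction $s(g)$ in the partial-overlap regime), and then integrate the conditional exponential tail $e^{-\gamma/g}$ against this law. The only cosmetic difference is that the paper computes $\Pr(G_{i,k}>g)$ directly whereas you first write the law of the distance $r$ and change variables to $g$; the sign bookkeeping for $ds(g)$ and the endpoint checks you describe match the paper's treatment.
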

\begin{proof}
We assume that the users are distributed uniformly in a circular area of radius $p$ 
and there are $B$ base-stations in that area as shown in \figref{ch5.model2.app}.
\putFrag{ch5.model2.app}
	{OFDMA downlink system with $K$ users and $B$ base-stations.}
	{2.5}
	{\psfrag{p}[][][0.7]{\Large\sf $p$}
	\psfrag{p-R}[][][0.7]{\Large\sf $p-R$~~}
	\psfrag{base-station}[][][0.7]{\Large\sf \quad \quad Base Station (BS)}
	\psfrag{users}[][][0.7]{\Large\sf ~~~Users}
	\psfrag{O}[][][0.7]{\Large\sf O}
	}

The probability density function of the user-coordinates $(x_k, y_k)$ can be written as
\begin{equation}
f_{(x_k, y_k)}\big(x,y\big) = \begin{cases} \frac{1}{\pi p^2} & x^2 + y^2 \leq p^2 \\
0 & \textrm{otherwise}. \end{cases}
\end{equation}
Note that around any base-station, the users are distributed at-least within a distance $R$ ($R > r_0$).
Hence, $p - d = p - \sqrt{a_i^2 + b_i^2} \geq R > r_0$ for all $i$. Now, 
\begin{eqnarray}
\gamma_{i,k,n} &=& \overbrace{
\Big(\underbrace{\max\Big\{r_0, \sqrt{(x_k-a_i)^2+(y_k-b_i)^2}\Big\}}_{R_{i,k}}\Big)^{-2\alpha}\beta^2 }^{G_{i,k}} |\nu_{i,k,n}|^2 \\
&=& \min\Big\{r_0^{-2\alpha}, \big((x_k-a_i)^2+(y_k-b_i)^2\big)^{-\alpha}\Big\}\, \beta^2 |\nu_{i,k,n}|^2.
\end{eqnarray}
We now compute the probability density function of $G_{i,k}$ $( = \beta^2 R_{i,k}^{-2\alpha})$.
\begin{eqnarray}
\lefteqn{\operatorname{Pr}(G_{i,k} > g)} \nonumber \\
&=& \operatorname{Pr}\bigg(r_0^{-2\alpha} > \frac{g}{\beta^2}\bigg) \times 
\operatorname{Pr}\bigg(\big((x_k-a_i)^2+(y_k-b_i)^2\big)^{-\alpha} > \frac{g}{\beta^2} \bigg) \nonumber \\
&=& \operatorname{Pr}\bigg(r_0 < \Big(\frac{g}{\beta^2}\Big)^{-1/2\alpha}\bigg) \times 
\operatorname{Pr}\bigg(\sqrt{(x_k-a_i)^2+(y_k-b_i)^2} < \Big(\frac{g}{\beta^2}\Big)^{-1/2\alpha} \bigg) \nonumber \\
&=& \begin{cases}
0 & \textrm{if}~ g \geq \beta^2 r_0^{-2\alpha} \\
\operatorname{Pr}\Big(\sqrt{(x_k-a_i)^2+(y_k-b_i)^2} < \big(\frac{g}{\beta^2}\big)^{-1/2\alpha} \Big) & 
\textrm{otherwise}.
\end{cases}
\end{eqnarray}
Now, $\operatorname{Pr}\Big(\sqrt{(x_k-a_i)^2+(y_k-b_i)^2} < \big(\frac{g}{\beta^2}\big)^{-1/2\alpha} \Big)$ 
is basically the probability that the distance between the user $k$ and BS $i$ is less than 
$\big(\frac{g}{\beta^2}\big)^{-1/2\alpha}$. 
Since, the users are uniformly distributed, this probability is precisely equal to 
$\frac{1}{\pi p^2}$ times the 
intersection area of the overall area (of radius $p$ around O) and a circle around BS $i$ with a 
radius of $\big(\frac{g}{\beta^2}\big)^{-1/2\alpha}$. This is shown as the shaded region 
in \figref{ch5.scaling}. 
\putFrag{ch5.scaling}
	{System Layout. The BS $i$ is located at a distance of $d$ from the center with the 
	coordinates $(a_i, b_i)$, and the user is stationed at $(x_k, y_k)$.}
	{2.5}
	{\psfrag{(ab)}[][][0.7]{\large $(a_i, b_i)$}
	\psfrag{p}[][][0.7]{\large ~$p$}
	\psfrag{a}[][][0.7]{\large ~$d$}
	\psfrag{k}[][][0.7]{\large ~$(x_k, y_k)$}
	\psfrag{O}[][][0.7]{\Large\sf O}
	}

Therefore, we have:
\begin{eqnarray}
\operatorname{Pr}(G_{i,k} > g) = \begin{cases}
1 & \textrm{if}~ \big(\frac{g}{\beta^2}\big)^{-1/2\alpha} \in (p+d, \infty)\\
s(g) & \textrm{if}~ \big(\frac{g}{\beta^2}\big)^{-1/2\alpha} \in (p-d, p+d] \\
\big(\frac{g}{\beta^2}\big)^{-1/\alpha}\frac{1}{p^2} & 
\textrm{if}~ \big(\frac{g}{\beta^2}\big)^{-1/2\alpha} \in (r_0, p-d]\\
0 & \textrm{if}~ \big(\frac{g}{\beta^2}\big)^{-1/2\alpha} \in [0, r_0],
\end{cases}
\end{eqnarray}
where $s(g)$ equals
\begin{eqnarray}
&& \frac{1}{\pi p^2} \left[ 
\bigg(\frac{g}{\beta^2}\Big)^{-1/\alpha} \cos^{-1}\Bigg(\frac{d^2 + \big(\frac{g}{\beta^2}
\big)^{-1/\alpha} - p^2}{2d \big(\frac{g}{\beta^2}\big)^{-1/2\alpha}}\Bigg) +
p^2 \cos^{-1} \Bigg(\frac{d^2+p^2 - \big(\frac{g}{\beta^2}\big)^{-1/\alpha}}{2dp}\Bigg) 
\right.
\nonumber \\ && \mbox{} 
- \frac{1}{2}\sqrt{\bigg(p+d-\Big(\frac{g}{\beta^2}\Big)^{-1/2\alpha}\bigg) 
\bigg(p+\Big(\frac{g}{\beta^2}\Big)^{-1/2\alpha}-d\bigg)} \nonumber \\
&& \mbox{} 
\left.
\times \sqrt{
\bigg(d + \Big(\frac{g}{\beta^2}\Big)^{-1/2\alpha} - p\bigg)
\bigg(d + p + \Big(\frac{g}{\beta^2}\Big)^{-1/2\alpha}\bigg)} \,
\sizecorr{
\bigg(\frac{g}{\beta^2}\Big)^{-1/\alpha} \cos^{-1}\Bigg(\frac{d^2 + \big(\frac{g}{\beta^2}
\big)^{-1/\alpha} - p^2}{2d \big(\frac{g}{\beta^2}\big)^{-1/2\alpha}}\Bigg) +
p^2 \cos^{-1} \Bigg(\frac{d^2+p^2 - \big(\frac{g}{\beta^2}\big)^{-1/\alpha}}{2dp}\Bigg) 
}
\right].
\end{eqnarray}
The CDF of $G_{i,k}$ can therefore be written as
\begin{eqnarray}
F_{G_{i,k}}(g) = \begin{cases}
0 & \textrm{if}~ g \in \big[0, \beta^2 (p+d)^{-2\alpha}\big)\\
1-s(g) & \textrm{if}~ g \in \big[\beta^2 (p+d)^{-2\alpha}, \beta^2 (p-d)^{-2\alpha}\big) \\
1-\big(\frac{g}{\beta^2}\big)^{-1/\alpha}\frac{1}{p^2} & 
\textrm{if}~ g \in \big[\beta^2 (p-d)^{-2\alpha}, \beta^2 r_0^{-2\alpha}\big)\\
1 & \textrm{if}~ g \in \big[\beta^2 r_0^{-2\alpha}, \infty\big), 
\end{cases}
\end{eqnarray}
A plot of the above CDF is shown in \figref{ch5.cdf2}.
\putFrag{ch5.cdf2}
	{Cumulative distribution function of $G_{i,k}$.}
	{3.8}
	{\psfrag{1}[][][0.7]{\Large\sf $1$}
	\psfrag{p+d}[][][0.7]{\Large\sf $\beta^2 (p+d)^{-2\alpha}$}
	\psfrag{p-d}[][][0.7]{\Large\sf $\beta^2 (p-d)^{-2\alpha}$}
	\psfrag{r0}[][][0.7]{\Large\sf $\beta^2 r_0^{-2\alpha}$}
	\psfrag{y-p-d}[][][0.7]{ \quad \quad \Large\sf $1 - \frac{r_0^2}{p^2}$}
	\psfrag{y-p+d}[][][0.7]{\Large\sf $1 - \frac{(p-d)^2}{p^2}$}
	\psfrag{0}[][][0.7]{\Large\sf $0$}
	\psfrag{0}[][][0.7]{\Large\sf $0$}
	\psfrag{mu}[][][0.7]{\Large\sf $g$}
	\psfrag{pmu}[][][0.7]{\Large\sf$\operatorname{Pr}(G_{i,k} \leq g)$ \quad \quad \quad \quad}
	}

The probability density function of $G_{i,k}$ can be written as follows:
\begin{eqnarray}
f_{G_{i,k}}(g) = \begin{cases}
0 & \textrm{if}~ g \in \big[0, \beta^2 (p+d)^{-2\alpha}\big)\\
-\frac{d s(g)}{d g} & \textrm{if}~ g \in \big[\beta^2 (p+d)^{-2\alpha}, \beta^2 (p-d)^{-2\alpha}\big) \\
\frac{1}{\alpha \beta^2 p^2} \big(\frac{g}{\beta^2}\big)^{-1-1/\alpha} & 
\textrm{if}~ g \in \big[\beta^2 (p-d)^{-2\alpha}, \beta^2 r_0^{-2\alpha}\big)\\
\frac{r_0^2}{p^2} & \textrm{if}~ g = \beta^2 r_0^{-2\alpha} \\
0 & \textrm{if}~ g > \beta^2 r_0^{-2\alpha}, \label{eq:gpdf}
\end{cases}
\end{eqnarray}
where $\frac{d s(g)}{d g} \leq 0$.
The pdf of $G_{i,k}$ has a discontinuity of 
the \emph{first-kind} at $\beta^2 r_0^{-2\alpha}$ (where 
it takes an impulse value), and is continuous in $[\beta^2 (p+d)^{-2\alpha}, 
\beta^2 r_0^{-2\alpha})$. At all other points, it takes the value $0$.

Using \eqref{gpdf}, the cumulative distribution function of $\gamma_{i,k,n}$, i.e., 
$F_{\gamma_{i,k,n}}(\gamma)$ (when $\gamma \geq 0$) can be written as
\begin{eqnarray}
F_{\gamma_{i,k,n}}(\gamma) &=& \int p\Big(|\nu_{i,k,n}|^2 \leq \frac{\gamma}{g}\Big) 
f_{G_{i,k}}(g) d g\\
&=& \int \big(1-e^{-\gamma/g}\big) f_{G_{i,k}}(g) d g\\
&=& 1 - \int e^{-\gamma/g} f_{G_{i,k}}(g) d g\\
&=& 1 - \frac{r_0^2}{p^2} e^{-\frac{\gamma}{\beta^2 r_0^{-2\alpha}}} -
\int_{\beta^2 (p-d)^{-2\alpha}}^{\beta^2 r_0^{-2\alpha}} e^{-\gamma/g}
\frac{1}{\alpha \beta^2 p^2} \big(\frac{g}{\beta^2}\big)^{-1-1/\alpha} dg \nonumber \\
&& \mbox{} + \int^{\beta^2 (p-d)^{-2\alpha}}_{\beta^2 (p+d)^{-2\alpha}} 
e^{-\gamma/g}d s(g). \label{eq:longint}
\end{eqnarray}
\end{proof}

\begin{lemma} \label{lem:lem3}
Let $\gamma_{i,k,n}$ be a random variable with a cdf defined in \lemref{lem2}. Then,
the growth function $h(\gamma) \defn \frac{1-F_{\gamma_{i,k,n}}(\gamma)}{f_{\gamma_{i,k,n}}(\gamma)}$
converges to a constant $\beta^2 r_0^{-2\alpha}$ as $\gamma \to \infty$, and
$\gamma_{i,k,n}$ belongs to a domain of attraction~\cite{Nagaraja}. Furthermore,
the cdf of $(\max_k\gamma_{i,k,n} - l_K)$ converges in distribution to a
limiting random variable with a Gumbel type cdf, that is given by
\begin{equation}
\exp(-e^{-x r_0^{2\alpha}/\beta^2}), ~x \in (-\infty, \infty),
\end{equation}
where $l_K$ is such that $F_{\gamma_{i,k,n}}(l_K) = 1 - 1/K$. In particular,
$l_K = \beta^2 r_0^{-2\alpha} \log \frac{K r_0^2}{p^2}$.
\end{lemma}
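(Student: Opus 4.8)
The plan is to read off the exact right tail of $\gamma_{i,k,n}$ from \lemref{lem2}, verify a von Mises--type condition, and then apply the classical Gnedenko limit theorem. First I would use the mixed pdf of $G_{i,k}$ in \eqref{gpdf} -- an atom of mass $r_0^2/p^2$ at $g=\beta^2 r_0^{-2\alpha}$ plus a bounded absolutely continuous part supported on the fixed (i.e.\ $K$-independent) interval $[\beta^2(p+d)^{-2\alpha},\,\beta^2 r_0^{-2\alpha})$ -- to write
\begin{eqnarray}
1-F_{\gamma_{i,k,n}}(\gamma) = \E\big\{e^{-\gamma/G_{i,k}}\big\} &=& \frac{r_0^2}{p^2}\,e^{-\gamma r_0^{2\alpha}/\beta^2} + \int_{\beta^2(p+d)^{-2\alpha}}^{\beta^2 r_0^{-2\alpha}} e^{-\gamma/g}\,f^{\mathrm{ac}}_{G_{i,k}}(g)\,dg. \nonumber
\end{eqnarray}
Since $p-d\ge R>r_0$, on the a.c.\ support we have $1/g\ge r_0^{2\alpha}/\beta^2$, and a short Laplace (Watson's lemma) estimate at the upper endpoint $g=\beta^2 r_0^{-2\alpha}$ gives $\int e^{-\gamma/g}f^{\mathrm{ac}}_{G_{i,k}}(g)\,dg = O\!\big(\gamma^{-1}e^{-\gamma r_0^{2\alpha}/\beta^2}\big)$, which is lower order than the atom term. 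Hence $1-F_{\gamma_{i,k,n}}(\gamma)=\frac{r_0^2}{p^2}e^{-\gamma r_0^{2\alpha}/\beta^2}\big(1+O(\gamma^{-1})\big)$ as $\gamma\to\infty$.

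I would treat the density the same way, writing $f_{\gamma_{i,k,n}}(\gamma)=\E\big\{G_{i,k}^{-1}e^{-\gamma/G_{i,k}}\big\}$; the atom contributes $\frac{r_0^2}{p^2}\cdot\frac{r_0^{2\alpha}}{\beta^2}e^{-\gamma r_0^{2\alpha}/\beta^2}$ and the a.c.\ part is again $O\!\big(\gamma^{-1}e^{-\gamma r_0^{2\alpha}/\beta^2}\big)$. Dividing, $h(\gamma)=\big(1-F_{\gamma_{i,k,n}}(\gamma)\big)/f_{\gamma_{i,k,n}}(\gamma)\to\beta^2/r_0^{2\alpha}=\beta^2 r_0^{-2\alpha}$, which is the first assertion. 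Because the tail is, up to the constant $r_0^2/p^2$, a pure exponential of rate $r_0^{2\alpha}/\beta^2$, the von Mises sufficient condition $h'(\gamma)\to 0$ holds, so $F_{\gamma_{i,k,n}}$ lies in the Gumbel max-domain of attraction~\cite{Nagaraja}.

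Finally I would invoke the standard extreme-value limit theorem for the Gumbel domain with location constants $b_K$ defined by $F_{\gamma_{i,k,n}}(b_K)=1-1/K$ and scale constants $a_K=h(b_K)$, which gives $(\max_k\gamma_{i,k,n}-b_K)/a_K\Rightarrow e^{-e^{-x}}$. Solving $1-F_{\gamma_{i,k,n}}(l_K)=\frac{r_0^2}{p^2}e^{-l_K r_0^{2\alpha}/\beta^2}(1+o(1))=1/K$ yields $l_K=\beta^2 r_0^{-2\alpha}\log\frac{Kr_0^2}{p^2}+o(1)$, the stated value; and since $a_K=h(l_K)\to\beta^2 r_0^{-2\alpha}$, a convergence-of-types argument lets me replace $a_K$ by the constant $\beta^2 r_0^{-2\alpha}$, so $\max_k\gamma_{i,k,n}-l_K$ converges in distribution to a random variable with cdf $\exp(-e^{-x r_0^{2\alpha}/\beta^2})$.

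The main obstacle is the asymptotic control of the absolutely continuous piece of the tail (and of the density): although its support reaches up to $\beta^2 r_0^{-2\alpha}$, one must show its contribution is genuinely of lower order, via a Laplace-method argument isolating the $g=\beta^2 r_0^{-2\alpha}$ endpoint (where the contribution is $O(1/\gamma)$ smaller than the atom term) from the rest of the interval (where the exponential decay is strictly faster). Everything downstream -- the von Mises condition, Gnedenko's theorem, and solving for $l_K$ -- is then routine, so this estimate is the crux.
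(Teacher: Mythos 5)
Your proposal is correct and follows essentially the same route as the paper: isolate the atom of the path-gain distribution at $g=\beta^2 r_0^{-2\alpha}$ as the dominant contribution to the tail, show the absolutely continuous part is $O(\gamma^{-1}e^{-\gamma r_0^{2\alpha}/\beta^2})$ (the paper's bound $\mc{T}(\gamma)\leq \beta^2 r_0^{-2\alpha+2}/(\alpha\gamma p^2)$ is exactly your Watson's-lemma endpoint estimate), conclude that the growth function converges to $\beta^2 r_0^{-2\alpha}$ so the distribution is in the Gumbel domain, and solve $F(l_K)=1-1/K$ for the stated $l_K$. The only cosmetic differences are that the paper splits the a.c.\ part into the explicit-density region and the $s(g)$ region (handled by integration by parts) and derives two-sided bounds on $l_K$ rather than a single $o(1)$ statement, while you make the convergence-of-types step explicit; neither changes the substance.
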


\begin{proof}
We have from \lemref{lem2}
\begin{eqnarray}
F_{\gamma_{i,k,n}}(\gamma)
&=& 1 - \frac{r_0^2}{p^2} e^{-\frac{\gamma}{\beta^2 r_0^{-2\alpha}}} -
\int_{\beta^2 (p-d)^{-2\alpha}}^{\beta^2 r_0^{-2\alpha}} e^{-\gamma/g}
\frac{1}{\alpha \beta^2 p^2} \Big(\frac{g}{\beta^2}\Big)^{-1-1/\alpha} dg 
\nonumber \\ && \mbox{} 
+ \int^{\beta^2 (p-d)^{-2\alpha}}_{\beta^2 (p+d)^{-2\alpha}} 
e^{-\gamma/g}s'(g) dg \nonumber \\
&=& 1 - \frac{r_0^2}{p^2} e^{-\frac{\gamma}{\beta^2 r_0^{-2\alpha}}} -
\int_{\beta^2 (p-d)^{-2\alpha}}^{\beta^2 r_0^{-2\alpha}} e^{-\gamma/g}
\frac{1}{\alpha \beta^2 p^2} \Big(\frac{g}{\beta^2}\Big)^{-1-1/\alpha} dg 
\nonumber \\ && \mbox{} 
+ s(g)e^{-\gamma/g}\Big|^{\beta^2 (p-d)^{-2\alpha}}_{\beta^2 (p+d)^{-2\alpha}}
- \gamma \int^{\beta^2 (p-d)^{-2\alpha}}_{\beta^2 (p+d)^{-2\alpha}} 
\frac{e^{-\gamma/g}s(g)}{g^2} dg \\
&=& 1 - \frac{r_0^2}{p^2} e^{-\frac{\gamma}{\beta^2 r_0^{-2\alpha}}} -
\int_{\beta^2 (p-d)^{-2\alpha}}^{\beta^2 r_0^{-2\alpha}} e^{-\frac{\gamma}{g}}
\frac{1}{\alpha \beta^2 p^2} \Big(\frac{g}{\beta^2}\Big)^{-1-\frac{1}{\alpha}} dg 
\nonumber \\ && \mbox{} 
+ e^{-\frac{\gamma}{\beta^2 (p-d)^{-2\alpha}}} \frac{(p-d)^2}{p^2} 
- e^{-\frac{\gamma}{\beta^2 (p+d)^{-2\alpha}}}
- \gamma \int^{\beta^2 (p-d)^{-2\alpha}}_{\beta^2 (p+d)^{-2\alpha}} 
\frac{e^{-\frac{\gamma}{g}}s(g)}{g^2} dg, \label{eq:nonint}
\end{eqnarray}
where $\frac{r_0^2}{p^2} < \frac{(p-d)^2}{p^2} \leq s(g) \leq 1$ (see \figref{ch5.cdf2}).
Now, we claim that
\begin{equation}
\lim_{\gamma \to \infty} \big(1-F_{\gamma_{i,k,n}}(\gamma)\big)
e^{\frac{\gamma}{\beta^2 r_0^{-2\alpha}}} = \frac{r_0^2}{p^2}. \label{eq:48}
\end{equation}
It is clear that the first two terms in \eqref{nonint} contribute everything
to the limit in \eqref{48}. We will consider the rest of the terms now and show
that they contribute zero towards the limit in RHS of \eqref{48}. First, considering the 
$4^{\textrm{th}}$, $5^{\textrm{th}}$, and $6^{\textrm{th}}$ terms, we have
\begin{eqnarray}
&& \lim_{\gamma \to \infty} e^{\frac{\gamma}{\beta^2 r_0^{-2\alpha}}} \times \Bigg| 
e^{-\frac{\gamma}{\beta^2 (p-d)^{-2\alpha}}} \frac{(p-d)^2}{p^2} 
- e^{-\frac{\gamma}{\beta^2 (p+d)^{-2\alpha}}}
- \gamma \int^{\beta^2 (p-d)^{-2\alpha}}_{\beta^2 (p+d)^{-2\alpha}} 
\frac{e^{-\frac{\gamma}{g}}s(g)}{g^2} dg \Bigg| \nonumber \\
&& \leq \lim_{\gamma \to \infty} \left( \Bigg| 
e^{-\frac{\gamma}{\beta^2 (p-d)^{-2\alpha}}} \frac{(p-d)^2}{p^2} \Bigg|
+ \Bigg|e^{-\frac{\gamma}{\beta^2 (p+d)^{-2\alpha}}} \Bigg| + 
\Bigg|\gamma \int^{\beta^2 (p-d)^{-2\alpha}}_{\beta^2 (p+d)^{-2\alpha}} 
\frac{e^{-\frac{\gamma}{g}}s(g)}{g^2} dg \Bigg| \right) e^{\frac{\gamma}{\beta^2 r_0^{-2\alpha}}} \\
&& \leq \lim_{\gamma \to \infty} 
 \frac{(p-d)^2}{p^2} e^{-\frac{\gamma}{\beta^2}( (p-d)^{2\alpha}-r_0^{2\alpha})}
+ e^{-\frac{\gamma}{\beta^2}((p+d)^{2\alpha}-r_0^{2\alpha})} 
+ \gamma \frac{e^{-\frac{\gamma}{\beta^2}((p-d)^{2\alpha}-r_0^{2\alpha})}}{\beta^4 (p+d)^{-4\alpha}}
\\ && = 0.
\end{eqnarray}
Now, we consider the third term in \eqref{nonint}. We will show that
\begin{equation}
\lim_{\gamma \to \infty} \underbrace{e^{\frac{\gamma}{\beta^2 r_0^{-2\alpha}}} \times 
\int_{\beta^2 (p-d)^{-2\alpha}}^{\beta^2 r_0^{-2\alpha}} e^{-\frac{\gamma}{g}}
\frac{1}{\alpha \beta^2 p^2} \Big(\frac{g}{\beta^2}\Big)^{-1-\frac{1}{\alpha}} dg}_{\mc{T}(\gamma)} 
= 0. \label{eq:C.41}
\end{equation}
Taking the first exponential term inside the integral, we have
\begin{equation}
\mc{T}(\gamma) = \int_{\beta^2 (p-d)^{-2\alpha}}^{\beta^2 r_0^{-2\alpha}} e^{-\frac{\gamma}{g}
+ \gamma  r_0^{2\alpha}/\beta^2}
\frac{1}{\alpha \beta^2 p^2} \Big(\frac{g}{\beta^2}\Big)^{-1-\frac{1}{\alpha}} dg.
\end{equation}
Substituting $\gamma/g$ with $x$, we get
\begin{equation}
\mc{T}(\gamma) = \int^{\frac{\gamma(p-d)^{2\alpha}}{\beta^2}}_{\frac{\gamma r_0^{2\alpha}}{\beta^2}}
 e^{-x + \gamma  r_0^{2\alpha}/\beta^2}
\frac{1}{\alpha \beta^2 p^2} \Big(\frac{\gamma}{x\beta^2}\Big)^{-1-\frac{1}{\alpha}} \Big(
\frac{\gamma}{x^2}\Big)dx.
\end{equation}
Again substituting $x - \gamma  r_0^{2\alpha}/\beta^2$ by $y$, we have
\begin{eqnarray}
\mc{T}(\gamma) &=& \frac{1}{\alpha p^2} \Big(\frac{\gamma}{\beta^2}\Big)^{-\frac{1}{\alpha}}
\int^{\gamma\frac{(p-d)^{2\alpha}-r_0^{2\alpha}}{\beta^2}}_{0} e^{-y}
\Big(y + \frac{\gamma r_0^{2\alpha}}{\beta^2}\Big)^{-1+\frac{1}{\alpha}}  dy \\
&\leq& \frac{1}{\alpha p^2} \Big(\frac{\gamma}{\beta^2}\Big)^{-\frac{1}{\alpha}}
\Big(\frac{\gamma  r_0^{2\alpha}}{\beta^2}\Big)^{-1+\frac{1}{\alpha}}
\int^{\frac{\left( (p-d)^{2\alpha}-r_0^{2\alpha}\right)\gamma}{\beta^2}}_{0} e^{-y}dy \label{eq:ey}\\
&=& \frac{\beta^2}{\alpha \gamma p^2} r_0^{-2\alpha+2}
\Big(1- e^{-\gamma\frac{(p-d)^{2\alpha}-r_0^{2\alpha}}{\beta^2}}\Big) \\
&\leq& \frac{\beta^2 r_0^{-2\alpha+2}}{\alpha \gamma p^2} \label{eq:ty}
\end{eqnarray}
where, in \eqref{ey}, an upper bound is taken by putting $y=0$ in the term
$\Big(y + \frac{\gamma r_0^{2\alpha}}{\beta^2}\Big)^{-1+\frac{1}{\alpha}}$ inside the integral.
Since $\mc{T}(\gamma)$ is positive, \eqref{ty} shows that $\lim_{\gamma \to \infty} \mc{T}(\gamma) = 0$. Hence, the claim is true.

Now, after computing the derivative of $F_{\gamma_{i,k,n}}(\gamma)$ w.r.t. $\gamma$ 
to obtain the probability density function $f_{\gamma_{i,k,n}}(\gamma)$, we have
\begin{equation}
\lim_{\gamma \to \infty} f_{\gamma_{i,k,n}}(\gamma)e^{\gamma r_0^{2\alpha}/\beta^2} = 
\frac{r_0^2}{p^2 \beta^2 r_0^{-2\alpha}}. \label{eq:fq}
\end{equation}
We do not prove the above equation here as \eqref{fq} is straightforward to verify (similar
to the steps taken to prove \eqref{48}). 
From \eqref{48} and \eqref{fq}, we obtain that the growth function converges to a constant,
i.e.,
\begin{equation}
\lim_{\gamma \to \infty} \frac{1-F_{\gamma_{i,k,n}}(\gamma)}{f_{\gamma_{i,k,n}}(\gamma)} 
= \beta^2 r_0^{-2\alpha}.
\end{equation}
This means that $\gamma_{i,k,n}$ belongs to a \emph{domain of maximal attraction}~\cite[pp. 296]{Nagaraja}.
In particular, the cdf of $(\max_k\gamma_{i,k,n} - l_K)$ converges in distribution to a 
limiting random variable with an extreme-value cdf, that is given by\cite[Definition 8.3.1]{Nagaraja2}
\begin{equation}
\exp(-e^{-x r_0^{2\alpha}/\beta^2}), ~x \in (-\infty, \infty).
\end{equation}
Here, $l_K$ is such that $F_{\gamma_{i,\cdot,n}}(l_K) = 1 - 1/K$. Solving for $l_K$, we have
\begin{eqnarray}
\frac{1}{K} &=& \frac{r_0^2}{p^2} e^{-\frac{l_K}{\beta^2 r_0^{-2\alpha}}} +
\int_{\frac{\beta^2}{(p-d)^{2\alpha}}}^{\frac{\beta^2}{r_0^{2\alpha}}} e^{-\frac{l_K}{g}}
\frac{1}{\alpha \beta^2 p^2} \Big(\frac{g}{\beta^2}\Big)^{-1-\frac{1}{\alpha}} dg 
+ \int^{\frac{\beta^2}{(p-d)^{2\alpha}}}_{\frac{\beta^2}{(p+d)^{2\alpha}}}
e^{-\frac{l_K}{g}}\big(-s'(g)\big) dg. \label{eq:lk2}
\end{eqnarray}
Substituting $l_K/g$ by $x$ in the first integral in RHS of \eqref{lk2}
and computing an upper bound, we get
\begin{eqnarray}
\frac{1}{K} &\leq& \frac{r_0^2}{p^2} e^{-\frac{l_K}{\beta^2 r_0^{-2\alpha}}} + 
\frac{1}{\alpha \beta^2 p^2} \int_{\frac{l_K}{\beta^2 (p-d)^{-2\alpha}}}^{\frac{l_K}{\beta^2 
r_0^{-2\alpha}}} e^{-x} \Big(\frac{l_K}{x\beta^2}\Big)^{-1-\frac{1}{\alpha}} \Big(\frac{-l_K}{x^2}\Big)dx 
\nonumber \\ && \mbox{}
- e^{-\frac{l_K}{\beta^2 (p-d)^{-2\alpha}}} \int^{\frac{\beta^2}{(p-d)^{2\alpha}}}_{\frac{\beta^2}{(p+d)^{2\alpha}}}
s'(g) dg \label{eq:52}\\
&=& \exp \Big(-\frac{l_K}{\beta^2 r_0^{-2\alpha}}\Big) \frac{r_0^2}{p^2} + 
\frac{1}{\alpha p^2} \Big(\frac{l_K}{\beta^2}\Big)^{-\frac{1}{\alpha}}
\int^{\frac{l_K}{\beta^2 (p-d)^{-2\alpha}}}_{\frac{l_K}{\beta^2 r_0^{-2\alpha}}} e^{-x} x^{-1+\frac{1}{\alpha}}dx 
\nonumber \\ && \mbox{}
+ e^{-\frac{l_K}{\beta^2 (p-d)^{-2\alpha}}} 
\Big(-s(g)\Big)\bigg|^{g=\frac{\beta^2}{(p-d)^{2\alpha}}}_{g = \frac{\beta^2}{(p+d)^{2\alpha}}} \nonumber \\
&\leq& \frac{r_0^2}{p^2} e^{-\frac{l_K}{\beta^2 r_0^{-2\alpha}}} + 
\frac{1}{\alpha p^2} \Big(\frac{l_K}{\beta^2}\Big)^{-\frac{1}{\alpha}}
\bigg(\frac{l_K r_0^{2\alpha}}{\beta^2}\bigg)^{-1+\frac{1}{\alpha}} 
\int^{\frac{l_K}{\beta^2 (p-d)^{-2\alpha}}}_{\frac{l_K}{\beta^2 r_0^{-2\alpha}}} e^{-x}dx 
\nonumber \\ && \mbox{} 
+ e^{-\frac{l_K}{\beta^2 (p-d)^{-2\alpha}}} \bigg(1-\frac{(p-d)^2}{p^2}\bigg) \label{eq:53} \\
&\leq& e^{-\frac{l_K}{\beta^2 r_0^{-2\alpha}}} \frac{r_0^2}{p^2} + 
\frac{r_0^{2-2\alpha}}{\alpha p^2} \Big(\frac{l_K}{\beta^2}\Big)^{-1} 
\int^{\infty}_{\frac{l_K}{\beta^2 r_0^{-2\alpha}}} e^{-x}dx 
+ e^{-\frac{l_K}{\beta^2 (p-d)^{-2\alpha}}} \\
&\leq& e^{-\frac{l_K}{\beta^2 r_0^{-2\alpha}}} \frac{r_0^2}{p^2} + 
\frac{r_0^{2-2\alpha}}{\alpha p^2} \Big(\frac{l_K}{\beta^2}\Big)^{-1} 
e^{-\frac{l_K}{\beta^2 r_0^{-2\alpha}}} + e^{-\frac{l_K}{\beta^2 (p-d)^{-2\alpha}}} \\
&\leq& e^{-\frac{l_K}{\beta^2 r_0^{-2\alpha}}} \frac{r_0^2}{p^2} \bigg( 1 + 
\frac{\beta^2 r_0^{-2\alpha}}{\alpha l_K} +
\frac{p^2}{r_0^2} e^{-\frac{l_K}{\beta^2}\big((p-d)^{2\alpha} - r_0^{2\alpha}\big)}
\bigg) \label{eq:54}\\
&=&  e^{-\frac{l_K}{\beta^2 r_0^{-2\alpha}}} \frac{r_0^2}{p^2}\bigg(1+O\bigg(\frac{1}{l_K}
\bigg)\bigg).
\end{eqnarray}
In \eqref{52}, we substitute $l_K/g$ by $x$ in the first integral of \eqref{lk2}, and compute an upper bound 
by taking the exponential term out of the second integral of \eqref{lk2}. In \eqref{53}, we note 
that $\frac{(p-d)^2}{p^2} \leq s(g) \leq 1$. 
From the above analysis, we now have
\begin{eqnarray}
l_K &\leq& \beta^2 r_0^{-2\alpha} \log \frac{K r_0^2}{p^2} + O\Big(\frac{1}{l_K}\Big).  \label{eq:lklb}
\end{eqnarray}
Now, to compute a lower bound on $l_K$ from \eqref{lk2}, we note that fact that
$\frac{d s(g)}{dg} \leq 0$. Therefore,
\begin{eqnarray}
\frac{1}{K} &\geq& \frac{r_0^2}{p^2} e^{-\frac{l_K}{\beta^2 r_0^{-2\alpha}}} \\
\implies l_K &\geq& \beta^2 r_0^{-2\alpha} \log \frac{K r_0^2}{p^2}. \label{eq:lkub}
\end{eqnarray}
From \eqref{lklb} and \eqref{lkub}, we have 
$\beta^2 r_0^{-2\alpha} \log \frac{K r_0^2}{p^2} \leq l_K \leq \beta^2 r_0^{-2\alpha} 
\log \frac{K r_0^2}{p^2} + O\big(\frac{1}{\log K}\big)$. Therefore,
\begin{eqnarray}
 l_K \approx \beta^2 r_0^{-2\alpha} \log \frac{K r_0^2}{p^2}
\end{eqnarray}
for large $K$.
\end{proof}

Interestingly, for a given BS $i$, the scaling of $\max_k \gamma_{i,k,n}$ (given by 
$l_K$ in large $K$ regime) is independent of the coordinates $(a_i, b_i)$ and is a function of $r_0, p$. 
Now, since the growth function converges to a constant (see \lemref{lem3}), we
apply \cite[Theorem A.2]{sharif} giving us:
\begin{eqnarray}
\operatorname{Pr}\Big\{l_K - \log \log K \leq \max_k \gamma_{i,k,n} \leq l_K + \log \log K\Big\} 
\geq 1 - O\Big(\frac{1}{\log K}\Big), \label{eq:maxlb}
\end{eqnarray}
where $l_K = \beta^2 r_0^{-2\alpha} \log \frac{K r_0^2}{p^2}$.
Therefore,
\begin{eqnarray}
\lefteqn{\E \Big\{ \log \big(1+ P\con \max_k \gamma_{i,k,n} \big) \Big\}} \nonumber \\
&\leq& \operatorname{Pr}\Big( \max_k \gamma_{i,k,n} \leq l_K + \log \log K\Big)
\log(1+P\con l_K + P\con \log \log K)
\nonumber \\ && \mbox{}
+ \operatorname{Pr}\Big( \max_k \gamma_{i,k,n} > l_K + \log \log K\Big)
\log(1+P\con \beta^2 r_0^{-2\alpha} K)
\label{eq:mast} \\
&\leq& \log(1+P\con l_K + P\con \log \log K) + \log(1+P\con \beta^2 r_0^{-2\alpha} K) \times 
O\Big(\frac{1}{\log K}\Big) \nonumber \\
&=& \log(1+P\con l_K) + O(1). \label{eq:1}
\end{eqnarray}
where, in \eqref{mast}, we have used the fact that the sum-rate is bounded above
by $\log(1+P\con \beta^2 r_0^{-2\alpha} K)$. This is because
\begin{eqnarray}
\log \big(1+ P\con \max_k \gamma_{i,k,n} \big) \Big\}
&\leq& \log \Big(1+ P\con \sum_k \gamma_{i,k,n} \Big) 
\nonumber \\ &\xrightarrow{\scriptsize{\textrm{w.p. 1}}}&  \log \Big(1+ P\con K \E\{\gamma_{i,1,n}\} \Big)
\label{eq:jen} \\ &\leq& \log \Big(1+ P\con \beta^2 r_0^{-2\alpha} K \E\big\{|\nu_{i,1,n}|^2\big\} \Big)
\nonumber \\ &\leq& \log \Big(1+ P\con \beta^2 r_0^{-2\alpha} K \Big).
\end{eqnarray}
Further, from \eqref{maxlb}, we have
\begin{eqnarray}
\E \Big\{ \log \big(1+ P\con \max_k \gamma_{i,k,n} \big) \Big\}
&\geq& \log(1 + P\con l_K - P\con \log \log K) \Big(1- O\Big(\frac{1}{\log K}\Big)\Big). \label{eq:2}
\end{eqnarray}
Combining \eqref{1} and \eqref{2}, we get, for large $K$, 
\begin{eqnarray}
\lefteqn{BN\log(1 + P\con l_K - P\con \log \log K) \Big(1- O\Big(\frac{1}{\log K}\Big)\Big)} \nonumber \\
&& \mbox{} \leq \sum_{i,n}\E \Big\{
\log \big(1+ P\con \max_k \gamma_{i,k,n} \big) \Big\} \label{eq:exactbounds}
\leq \big(\log(1+P\con l_K) + O(1)\big)BN. 
\end{eqnarray}
Therefore, from \lemref{lem1} and Theorem $1$, we get
\begin{eqnarray}
\big( \log(1+P\con l_K) + O(1)\big)BN f^{\sf DN}_{\sf lo}(r, B,N)
&\leq& \mc{C}^* 
\leq \big( \log(1+P\con l_K) + O(1)\big)BN.
\end{eqnarray}
This results in:
\begin{eqnarray}
\mc{C}^* &=& O(BN \log \log K), ~\textrm{and} \nonumber \\
\mc{C}^* &=& \Omega(BN f^{\sf DN}_{\sf lo}(r,B,N) \log \log K). \label{eq:mast2}
\end{eqnarray}

Now, we find the upper bounds on $\mc{C}^*$ resulting from the application of Jensen's inequality
(see \eqref{b2}). We have from \eqref{b2} that
\begin{eqnarray}
\mc{C}^* &\leq& N \sum_{i} \E \bigg\{ \log \Big(1+ \frac{P\con}{N} \max_{n,k} \gamma_{i,k,n} \Big) \bigg\} \label{eq:proofcor1} \\
&\leq&  \bigg( \log\bigg(1+\frac{P\con}{N} l_{KN}\bigg) + O(1)\bigg)BN , \label{eq:from71}
\end{eqnarray}
where \eqref{from71} follows from \eqref{exactbounds}, and 
$l_{KN} = \beta^2 r_0^{-2\alpha} \log \frac{KNr_0^2}{p^2}$ determines the SNR scaling of the maximum over
$KN$ i.i.d. random variables. This implies
\begin{eqnarray}
\mc{C}^* &=& O\bigg(BN \log \frac{\log KN}{N}\bigg). \label{eq:49}
\end{eqnarray}
Note that the above result is only true if $\frac{P\con}{N} l_{KN} \gg 1$
to make the approximation $\log(1+x) \approx \log x$ valid for large $x$.

\section{Proof of \lemref{exthm3} and \lemref{exthm5}} \label{app:ap5}

We will first find the SNR scaling laws, i.e., scaling of $\max_k \gamma_{i,k,n}$, for 
each of the three families of 
distributions --- Nakagami-$m$, Weibull, and LogNormal.
This involves deriving the domain of attraction of channel-SNR $\gamma_{i,k,n}$
for all three types of distributions. The domains of attraction are of three types - Fr\'{e}chet, Weibull, and Gumbel. 
Let the growth function be defined as $h(\gamma) \defn \frac{1-F_{\gamma_{i,k,n}}(\gamma)}{f_{\gamma_{i,k,n}}(\gamma)}$. 
The random variable, $\gamma_{i,k,n}$, belongs to the Gumbel-type if $\lim_{\gamma \to \infty} h'(\gamma) = 0$.
It turns out that all three distributions considered, i.e., Nakagami-$m$, Weibull, and LogNormal,
belong to this category. Then we find the scaling point $l_K$ such that $F_{\gamma_{i,k,n}}(l_K) = 1-1/K$. 
The intuition behind this choice of $l_K$ is that the cdf of $\max_k \gamma_{i,k,n}$ is 
$F^K_{\gamma_{i,k,n}}(\gamma)$. For $\gamma = l_K$, we have $F^K_{\gamma_{i,k,n}}(l_K) = (1-1/K)^K \to e^{-1}$. 
The fact that $F^K_{\gamma_{i,k,n}}(\gamma)$ converges for a particular choice of $\gamma$ 
gives information about the asymptotic behavior of $\max_k \gamma_{i,k,n}$.

\subsection{Nakagami-$m$}

In this case, $|\nu_{i,k,n}|$ is distributed according to Nakagami-$(m, w)$
distribution. Hence, $|\nu_{i,k,n}|^2$ is distributed according to Gamma-$(m,w/m)$
distribution. The cumulative distribution function of $\gamma_{i,k,n}$, i.e., 
$F_{\gamma_{i,k,n}}(\gamma)$ (when $\gamma \geq 0$) is
\begin{eqnarray}
F_{\gamma_{i,k,n}}(\gamma)
&=& \int p\Big(|\nu_{i,k,n}|^2 \leq \frac{\gamma}{g}\Big) 
f_{G_{i,k}}(g) d g\\
&=& \int \frac{\gamma\big(m,\frac{m \gamma}{wg}\big)}{\Gamma(m)} f_{G_{i,k}}(g) d g\\
&=& 1- \int_{\beta^2 (p+d)^{-2\alpha}}^{\beta^2 r_0^{-2\alpha}} 
\frac{\Gamma\big(m,\frac{m \gamma}{wg}\big)}{\Gamma(m)} f_{G_{i,k}}(g) d g \label{eq:naka}
\end{eqnarray}
where $f_{G_{i,k}}(g)$ is defined in \eqref{gpdf}. Now, for large $\gamma$, we 
can approximate \eqref{naka} as
\begin{eqnarray}
F_{\gamma_{i,k,n}}(\gamma) &\approx& 1- \frac{1}{\Gamma(m)}\int_{\beta^2 (p+d)^{-2\alpha}}^{\beta^2 r_0^{-2\alpha}} 
\Big(\frac{m\gamma}{wg}\Big)^{m-1} e^{-\frac{m\gamma}{wg}} f_{G_{i,k}}(g) d g \\
&=& 1 - \frac{r_0^2}{p^2 \Gamma(m)}  \Big(\frac{m\gamma}{w\beta^2 r_0^{-2\alpha}}\Big)^{m-1} e^{-\frac{m\gamma}{w\beta^2 r_0^{-2\alpha}}}
\nonumber \\ && \mbox{}
- \frac{1}{\Gamma(m)}\int_{\beta^2 (p-d)^{-2\alpha}}^{\beta^2 r_0^{-2\alpha}} \Big(\frac{m\gamma}{wg}\Big)^{m-1} e^{-\frac{m\gamma}{wg}}
\frac{1}{\alpha \beta^2 p^2} \big(\frac{g}{\beta^2}\big)^{-1-\frac{1}{\alpha}} dg 
\nonumber \\ && \mbox{}
+ \frac{1}{\Gamma(m)}\int^{\beta^2 (p-d)^{-2\alpha}}_{\beta^2 (p+d)^{-2\alpha}} \Big(\frac{m\gamma}{wg}\Big)^{m-1} 
e^{-\frac{m\gamma}{wg}}d s(g), \label{eq:Nakapdf}
\end{eqnarray}
where $f_{G_{i,k}}(g)$ is defined in \eqref{gpdf}. We claim that
\begin{eqnarray}
\lefteqn{ \lim_{\gamma \to \infty} \big( 1 - F_{\gamma_{i,k,n}}(\gamma)\big) \gamma^{1-m} 
e^{\frac{m\gamma}{w\beta^2 r_0^{-2\alpha}}}} \nonumber \\
&=& \lim_{\gamma \to \infty} \gamma^{1-m} e^{\frac{m\gamma}{w\beta^2 r_0^{-2\alpha}}} 
\frac{1}{\Gamma(m)}\int_{\beta^2 (p+d)^{-2\alpha}}^{\beta^2 r_0^{-2\alpha}} 
\Big(\frac{m\gamma}{wg}\Big)^{m-1} e^{-\frac{m\gamma}{wg}} f_{G_{i,k}}(g) d g \label{eq:nakastep}\\
&=& \frac{r_0^2 m^{m-1}}{p^2 \Gamma(m) (w\beta^2 r_0^{-2\alpha})^{m-1}}. \label{eq:nakaclaim}
\end{eqnarray}
Note that the first two terms in the RHS of \eqref{Nakapdf} contribute everything towards the limit in \eqref{nakaclaim}.
We will show that the rest of the terms contribute zero to the limit in RHS of \eqref{nakaclaim}. In particular, 
ignoring the constant $\Gamma(m)$, the contribution of the two integral-terms (in \eqref{Nakapdf}) is
\begin{eqnarray}
\lefteqn{\gamma^{1-m} e^{\frac{m\gamma}{w\beta^2 r_0^{-2\alpha}}} \Bigg(
- \int_{\beta^2 (p-d)^{-2\alpha}}^{\beta^2 r_0^{-2\alpha}} \Big(\frac{m\gamma}{wg}\Big)^{m-1} e^{-\frac{m\gamma}{wg}}
\frac{1}{\alpha \beta^2 p^2} \Big(\frac{g}{\beta^2}\Big)^{-1-\frac{1}{\alpha}} dg}
\nonumber \\ && \mbox{}
+ \int^{\beta^2 (p-d)^{-2\alpha}}_{\beta^2 (p+d)^{-2\alpha}} \Big(\frac{m\gamma}{wg}\Big)^{m-1} e^{-\frac{m\gamma}{wg}}
d s(g)\Bigg) \nonumber \\
&=& \underbrace{- \int_{\beta^2 (p-d)^{-2\alpha}}^{\beta^2 r_0^{-2\alpha}} \Big(\frac{m}{wg}\Big)^{m-1} e^{-\frac{m\gamma}{w}
\big(\frac{1}{g} - \frac{1}{\beta^2 r_0^{-2\alpha}}\big)}
\frac{1}{\alpha \beta^2 p^2} \Big(\frac{g}{\beta^2}\Big)^{-1-\frac{1}{\alpha}} dg }_{\mc{T}_1(\gamma)} 
\nonumber \\ && \mbox{}
+ \underbrace{\int^{\beta^2 (p-d)^{-2\alpha}}_{\beta^2 (p+d)^{-2\alpha}} \Big(\frac{m}{wg}\Big)^{m-1} e^{-\frac{m\gamma}{w}
\big(\frac{1}{g} - \frac{1}{\beta^2 r_0^{-2\alpha}}\big)} d s(g)}_{\mc{T}_2(\gamma)} \nonumber \\
&=& \mc{T}_1(\gamma) + \mc{T}_2(\gamma).
\end{eqnarray}
Now,
\begin{eqnarray}
|\mc{T}_1(\gamma)|
&=& \Big(\frac{m}{w}\Big)^{m-1} \frac{\beta^{\frac{2}{\alpha}}}{\alpha p^2}
\int_{\frac{\beta^2}{(p-d)^{2\alpha}}}^{\frac{\beta^2}{r_0^{2\alpha}}} g^{-m-\frac{1}{\alpha}} e^{-\frac{m\gamma}{w}
\big(\frac{1}{g} - \frac{1}{\beta^2 r_0^{-2\alpha}}\big)} dg \\
&=& \Big(\frac{m}{w}\Big)^{m-1} \frac{\beta^{\frac{2}{\alpha}}}{\alpha p^2} 
\int^{\beta^{-2}(p-d)^{2\alpha}}_{\beta^{-2}r_0^{2\alpha}} x^{m+\frac{1}{\alpha}-2} 
e^{-\frac{m\gamma}{w} \big(x - \beta^{-2}r_0^{2\alpha}\big)} dx \label{eq:t1} \\
&\leq& \Big(\frac{m}{w}\Big)^{m-1} \frac{\beta^{\frac{2}{\alpha}}}{\alpha p^2} 
\max\Bigg\{\bigg(\frac{(p-d)^{2\alpha}}{\beta^{2}}\bigg)^{m+\frac{1}{\alpha}-2}, 
\bigg(\frac{r_0^{2\alpha}}{\beta^{2}}\bigg)^{m+\frac{1}{\alpha}-2} \Bigg\}
\nonumber \\ && \mbox{}
\times \int^{\beta^{-2}(p-d)^{2\alpha}}_{\beta^{-2}r_0^{2\alpha}} 
e^{-\frac{m\gamma}{w} \big(x - \beta^{-2}r_0^{2\alpha}\big)} dx \nonumber \\
&=& \Big(\frac{m}{w}\Big)^{m-1} \frac{\beta^{\frac{2}{\alpha}}}{\alpha p^2} 
\max \Bigg\{\bigg(\frac{(p-d)^{2\alpha}}{\beta^{2}}\bigg)^{m+\frac{1}{\alpha}-2}, 
\bigg(\frac{r_0^{2\alpha}}{\beta^{2}}\bigg)^{m+\frac{1}{\alpha}-2} \Bigg\}
\nonumber \\ && \mbox{} \times
\frac{1 - e^{-\frac{m\gamma}{w} \big(\beta^{-2}(p-d)^{2\alpha} - \beta^{-2}r_0^{2\alpha}\big)}}{\frac{m\gamma}{w}}  \\
&\to& 0, ~\textrm{as}~ \gamma \to \infty.
\end{eqnarray}
where, in \eqref{t1}, we substituted $\frac{1}{g}$ by $x$.
Further,
\begin{eqnarray}
 |\mc{T}_2(\gamma)| &=& \bigg| \int^{\beta^2 (p-d)^{-2\alpha}}_{\beta^2 (p+d)^{-2\alpha}} \Big(\frac{m}{wg}\Big)^{m-1} 
e^{-\frac{m\gamma}{w} \big(\frac{1}{g} - \frac{1}{\beta^2 r_0^{-2\alpha}}\big)} d s(g) \bigg| \\
&\leq& e^{-\frac{m\gamma}{w \beta^2 } \left((p-d)^{2\alpha} - r_0^{2\alpha}\right)}
\bigg| \int^{\beta^2 (p-d)^{-2\alpha}}_{\beta^2 (p+d)^{-2\alpha}} \Big(\frac{m}{wg}\Big)^{m-1} d s(g) \bigg| \\
&\to& 0, ~\textrm{as}~ \gamma \to \infty.
\end{eqnarray}
Therefore, $\mc{T}_1(\gamma)$ and $\mc{T}_2(\gamma)$ have zero contribution to the RHS in \eqref{nakaclaim}, and the
our claim in \eqref{nakaclaim} is true. Now, from \eqref{Nakapdf}, we have
\begin{eqnarray}
f_{\gamma_{i,k,n}}(\gamma)
&=& \frac{\gamma^{m-1}}{\Gamma(m)}\int_{\beta^2 (p+d)^{-2\alpha}}^{\beta^2 r_0^{-2\alpha}} 
\Big(\frac{m}{wg}\Big)^{m} e^{-\frac{m\gamma}{wg}} f_{G_{i,k}}(g) d g 
\nonumber \\ && \mbox{}
- \frac{(m-1)\gamma^{m-2}}{\Gamma(m)}\int_{\beta^2 (p+d)^{-2\alpha}}^{\beta^2 r_0^{-2\alpha}} 
\Big(\frac{m}{wg}\Big)^{m-1} e^{-\frac{m\gamma}{wg}} f_{G_{i,k}}(g) d g \nonumber
\end{eqnarray}
Using \eqref{nakastep}-\eqref{nakaclaim}, it is easy to verify that
\begin{eqnarray}
\lim_{\gamma \to \infty} f_{\gamma_{i,k,n}}(\gamma) \gamma^{1-m} e^{\frac{m\gamma}{w\beta^2 r_0^{-2\alpha}}}
= \frac{r_0^2 m^{m}}{p^2 \Gamma(m)(w\beta^2 r_0^{-2\alpha})^{m}}. \label{eq:nakaclaim2}
\end{eqnarray}
From \eqref{nakaclaim} and \eqref{nakaclaim2}, we obtain that the growth function converges to a constant. 
In particular,
\begin{eqnarray}
\lim_{\gamma \to \infty} \frac{1- F_{\gamma_{i,k,n}}(\gamma)}{f_{\gamma_{i,k,n}}(\gamma)} = \frac{w \beta^2 r_0^{-2\alpha}}{m},
\end{eqnarray}
Hence, $\gamma_{i,k,n}$ belongs to the Gumbel-type~\cite[Definition 8.3.1]{Nagaraja2} and 
$\max_{k} \gamma_{i,k,n} - l_K$ converges in distribution to a limiting random variable with 
a Gumbel-type cdf, that is given by
\begin{equation}
\exp(-e^{-x r_0^{2\alpha}/\beta^2}), ~x \in (-\infty, \infty),
\end{equation}
where $1- F_{\gamma_{i,k,n}}(l_K) = \frac{1}{K}$. From \eqref{nakaclaim}, we have
$l_K \approx \frac{w \beta^2 r_0^{-2\alpha}}{m}\log 
\frac{K r_0^2 m^{m-1}}{p^2 \Gamma(m) (w\beta^2 r_0^{-2\alpha})^{m-1}}$ for large $K$.

Now, since the growth function converges to a constant and $l_K = \Theta(\log K)$, 
we can use \cite[Theorem 1]{sharif} to obtain:
\begin{eqnarray}
\operatorname{Pr}\Big\{l_K - \log \log K &\leq& \max_k \gamma_{i,k,n} \leq l_K + \log \log K\Big\} 
\geq 1 - O\Big(\frac{1}{\log K}\Big).
\end{eqnarray}
This is the same as \eqref{maxlb}. Thus, following the same analysis as in \eqref{mast}-\eqref{49}, we get
\begin{eqnarray}
\mc{C}^* &=& O\bigg(BN \log\log \frac{K r_0^2}{p^2}\bigg) ~\textrm{and} \\
\mc{C}^* &=& BN f_{\sf lo}^{\sf{DN}}(r, B, N)\Omega\bigg(\log \log \frac{K r_0^2}{p^2}\bigg).
\end{eqnarray}
Further, if $\log \frac{KN}{N} \gg 1$, then $\mc{C}^* = O\Big( BN\log \frac{\log \frac{KN r_0^2}{p^2}}{N} \Big)$.

\subsection{Weibull}

In this case, $|\nu_{i,k,n}|$ is distributed according to Weibull-$(\lambda, t)$
distribution. Hence, $|\nu_{i,k,n}|^2$ is distributed according to Weibull-$(\lambda^2,t/2)$
distribution. We start with finding the cumulative distribution function of $\gamma_{i,k,n}$, i.e., 
$F_{\gamma_{i,k,n}}(\gamma)$ (when $\gamma \geq 0$) as
\begin{eqnarray}
F_{\gamma_{i,k,n}}(\gamma)
&=& \int p\Big(|\nu_{i,k,n}|^2 \leq \frac{\gamma}{g}\Big) 
f_{G_{i,k}}(g) d g\\
&=& 1- \int_{\beta^2 (p+d)^{-2\alpha}}^{\beta^2 r_0^{-2\alpha}} 
e^{-\left(\frac{\gamma}{g\lambda^2}\right)^{t/2}} f_{G_{i,k}}(g) d g \\
&=& 1 - \frac{r_0^2}{p^2} e^{-\big(\frac{\gamma}{\beta^2 r_0^{-2\alpha}\lambda^2}\big)^{t/2}} -
\int_{\frac{\beta^2}{(p-d)^{2\alpha}}}^{\frac{\beta^2}{r_0^{2\alpha}}} 
\frac{e^{-\left(\frac{\gamma}{g\lambda^2}\right)^{t/2}}}
{\alpha \beta^2 p^2} \Big(\frac{g}{\beta^2}\Big)^{-1-\frac{1}{\alpha}} dg 
\nonumber \\ && \mbox{}
+ \int^{\frac{\beta^2}{(p-d)^{2\alpha}}}_{\frac{\beta^2}{(p+d)^{2\alpha}}} 
e^{-\left(\frac{\gamma}{g\lambda^2}\right)^{t/2}}d s(g). 
\label{eq:weipdf}
\end{eqnarray}
This case is similar to the Rayleigh distribution scenario in \eqref{longint}.
Therefore, it is easy to verify that
\begin{eqnarray}
\lim_{\gamma \to \infty} \big( 1 - F_{\gamma_{i,k,n}}(\gamma)\big) 
e^{\big(\frac{\gamma}{\beta^2 r_0^{-2\alpha}\lambda^2}\big)^{t/2}} &=& \frac{r_0^2}{p^2},~\textrm{and} \label{eq:weiclaim} \\
\lim_{\gamma \to \infty} f_{\gamma_{i,k,n}}(\gamma)
\gamma^{1-t/2} e^{\big(\frac{\gamma}{\beta^2 r_0^{-2\alpha}\lambda^2}\big)^{t/2}} &=& 
\frac{t r_0^2}{2\big(\beta^2 r_0^{-2\alpha}\lambda^2\big)^{t/2}p^2}.
\end{eqnarray}
Thus, the growth function $h(\gamma) = \frac{1 - F_{\gamma_{i,k,n}}(\gamma)}{f_{\gamma_{i,k,n}}(\gamma)}$
can be approximated for large $\gamma$ as
\begin{eqnarray}
h(\gamma) \approx \frac{2\big(\beta^2 r_0^{-2\alpha}\lambda^2\big)^{t/2}}{t} \gamma^{1-t/2}. \label{eq:defhwei}
\end{eqnarray}
Since $\lim_{\gamma \to \infty} h'(\gamma) = 0$, the limiting distribution of 
$\max_k \gamma_{i,k,n}$ is of Gumbel-type. Note that this is true even when $t < 1$ which refers to 
heavy-tail distributions. Solving for $1-F_{\gamma_{i,k,n}}(l_K)  = \frac{1}{K}$, we get
\begin{eqnarray}
l_K = \beta^2 r_0^{-2\alpha} \lambda^2 \log^{\frac{2}{t}} \frac{K r_0^2}{p^2}.
\end{eqnarray}
Now, we apply the following theorem by Uzgoren. 
\begin{theorem}[Uzgoren] \label{thm:uzgoren}
Let $x_1, \ldots, x_K$ be a sequence of i.i.d. positive random variables with 
continuous and strictly positive pdf $f_X(x)$ for $x>0$ and cdf represented by $F_X(x)$.
Let $h_X(x)$ be the growth function. Then, if $\lim_{x \to \infty} h_X'(x) = 0$, we have
\begin{eqnarray}
\lefteqn{\log \big\{ -\log F^K\big(l_K + h_X(l_K)\, u\big)\big\}}
\nonumber \\ &=& -u + \frac{u^2}{2!}h'_X(l_K) + \frac{u^3}{3!} 
\big(h_X(l_K) h''_X(l_K) - 2 h'^2_X(l_K)\big) 
+ O\bigg(\frac{e^{-u + O(u^2 h'_X(l_K))}}{K}\bigg). \nonumber
\end{eqnarray}
\end{theorem}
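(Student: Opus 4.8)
The plan is to linearize the Gumbel-type limit by passing to the transform $y \mapsto \log\bigl(-\log F_X^K(y)\bigr)$ and to produce the expansion by a Taylor argument on the growth function $h_X$, exploiting that $h_X'(l_K)\to 0$ (recall that the normalizing sequence $l_K$ is defined by $F_X(l_K)=1-\tfrac1K$, so $l_K\to\infty$, and $F_X^K$ is the cdf of $\max_k x_k$). First I would record the elementary identity
\begin{equation*}
\frac{d}{dy}\Bigl(-\log\bigl(1-F_X(y)\bigr)\Bigr)=\frac{f_X(y)}{1-F_X(y)}=\frac{1}{h_X(y)},
\end{equation*}
valid wherever $f_X>0$, integrate it from $l_K$ to $l_K+h_X(l_K)u$, use $-\log\bigl(1-F_X(l_K)\bigr)=\log K$, and substitute $y=l_K+h_X(l_K)v$ to obtain
\begin{equation*}
-\log\bigl(1-F_X(l_K+h_X(l_K)u)\bigr)=\log K+\int_0^u\frac{h_X(l_K)}{h_X\bigl(l_K+h_X(l_K)v\bigr)}\,dv .
\end{equation*}

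Next I would expand the integrand. Since $h_X'\to0$ forces $h_X(x)=o(x)$, the interval $[l_K,\,l_K+h_X(l_K)u]$ is asymptotically short, and a second-order Taylor expansion of $h_X$ about $l_K$ gives $h_X(l_K+h_X(l_K)v)=h_X(l_K)\bigl[1+v\,h_X'(l_K)+\tfrac{v^2}{2}h_X(l_K)h_X''(l_K)+\cdots\bigr]$; feeding this into $\tfrac{1}{1+z}=1-z+z^2-\cdots$ yields
\begin{equation*}
\frac{h_X(l_K)}{h_X(l_K+h_X(l_K)v)}=1-v\,h_X'(l_K)+v^2\Bigl(h_X'(l_K)^2-\tfrac12 h_X(l_K)h_X''(l_K)\Bigr)+\cdots .
\end{equation*}
Integrating term by term over $[0,u]$ turns the previous display into $\log K+u-\tfrac{u^2}{2}h_X'(l_K)+\tfrac{u^3}{3}\bigl(h_X'(l_K)^2-\tfrac12 h_X(l_K)h_X''(l_K)\bigr)+\cdots$, and negating gives $\log\bigl(1-F_X(l_K+h_X(l_K)u)\bigr)$.

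Finally I would convert $1-F_X$ into $-\log F_X^K$. Setting $\bar F:=1-F_X(l_K+h_X(l_K)u)$, the previous step shows $\bar F=\tfrac1K e^{-u+O(u^2h_X'(l_K))}$; then $-\log F_X^K(l_K+h_X(l_K)u)=-K\log(1-\bar F)=K\bar F\,(1+O(\bar F))$, so
\begin{equation*}
\log\bigl\{-\log F_X^K(l_K+h_X(l_K)u)\bigr\}=\log K+\log\bar F+O(\bar F)=-u+\frac{u^2}{2!}h_X'(l_K)+\frac{u^3}{3!}\bigl(h_X(l_K)h_X''(l_K)-2h_X'(l_K)^2\bigr)+O\!\left(\frac{e^{-u+O(u^2h_X'(l_K))}}{K}\right),
\end{equation*}
the $\log K$'s cancelling and $O(\bar F)$ being absorbed into the displayed remainder; note $-\tfrac{u^3}{3}\bigl(h_X'(l_K)^2-\tfrac12 h_X(l_K)h_X''(l_K)\bigr)=\tfrac{u^3}{3!}\bigl(h_X(l_K)h_X''(l_K)-2h_X'(l_K)^2\bigr)$, which is exactly the claimed cubic term.

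The step I expect to be the real obstacle is making the Taylor-plus-geometric expansion rigorous with a \emph{uniform} bound on the discarded terms over $v\in[0,u]$, i.e.\ proving that the series tail is genuinely of lower order than the cubic term shown. This requires quantitative smoothness of $h_X$ beyond the bare hypothesis $h_X'\to0$ (control of $h_X''$ and $h_X'''$ on $[l_K,\,l_K+h_X(l_K)u]$), which is the regularity implicit in the von Mises/Uzgoren framework; the hypothesis $h_X'(l_K)\to0$ does the essential work of making the interval short and of supplying $h_X'(l_K)$ as the small expansion parameter in which successive terms decay. Everything else — the change of variables, the coefficient bookkeeping, and the $-\log(1-x)=x+O(x^2)$ passage — is routine.
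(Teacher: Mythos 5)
The paper does not actually prove this theorem: its ``proof'' is a one-line citation to Uzgoren's original article (Equation 19 there), so there is no in-paper argument to compare against. Your derivation is a correct reconstruction of the standard Uzgoren/von Mises computation. The chain is sound: the identity $\frac{d}{dy}\bigl(-\log(1-F_X(y))\bigr) = 1/h_X(y)$, integration from $l_K$ (where $-\log(1-F_X(l_K))=\log K$) with the substitution $y=l_K+h_X(l_K)v$, the Taylor-plus-geometric expansion of $h_X(l_K)/h_X(l_K+h_X(l_K)v)$, termwise integration, and finally $-K\log(1-\bar F)=K\bar F(1+O(\bar F))$ to convert $1-F_X$ into $-\log F_X^K$. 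The coefficient bookkeeping checks out, including the sign conversion $-\tfrac{u^3}{3}\bigl(h_X'^2-\tfrac12 h_X h_X''\bigr)=\tfrac{u^3}{3!}\bigl(h_X h_X''-2h_X'^2\bigr)$ and the cancellation of the $\log K$ terms, and the remainder $O(\bar F)=O\bigl(e^{-u+O(u^2 h_X'(l_K))}/K\bigr)$ matches the stated error. The one caveat you raise yourself is real: the bare hypothesis $\lim_{x\to\infty}h_X'(x)=0$ does not by itself control $h_X''$ and $h_X'''$ on $[l_K,\,l_K+h_X(l_K)u]$, so uniform control of the discarded Taylor tail requires the additional regularity implicit in the von Mises framework (and satisfied by the Weibull and LogNormal cases where the paper invokes this theorem, for which $h$, $h'$, $h''$ are explicitly computed). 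Since the theorem is quoted as an external result rather than proved, this does not affect the paper, and your formal derivation is exactly the argument the citation points to.
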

\begin{proof}
See \cite[Equation $19$]{uzgoren} for proof.
\end{proof}
The above theorem gives taylor series expansion of the limiting distribution for Gumbel-type distributions.
In particular, for $h(\cdot)$ defined in \eqref{defhwei}, setting $l_K = \beta^2 r_0^{-2\alpha} \lambda^2 
\log^{\frac{2}{t}} \frac{K r_0^2}{p^2}$ and $u = \log \log K$, we have
$h(l_K) = O\Big(\frac{1}{\log^{-\frac{2}{t}+1} K}\Big)$,
$h'(l_K) = O\Big(\frac{1}{\log K}\Big)$, $h''(l_K) = O\Big(\frac{1}{\log^{\frac{2}{t}+1} K}\Big)$, and so on.
In particular, we have
\begin{eqnarray}
\operatorname{Pr} \Big( \max_k \gamma_{i,k,n} \leq l_K + h(l_K)\,\log \log K \Big) &=& e^{-e^{-\log \log K +
O\big(\frac{\log^2\log K}{\log K}\big)}} \\
&=& 1 - O\Big(\frac{1}{\log K}\Big), \label{eq:ub1}
\end{eqnarray}
where we have used the fact that $e^x = 1 + O(x)$ for small $x$. Similarly, 
\begin{eqnarray}
\operatorname{Pr} \Big( \max_k \gamma_{i,k,n} \leq l_K - h(l_K)\,\log \log K \Big) &=& e^{-e^{\log \log K +
O\big(\frac{\log^2\log K}{\log K}\big)}} \\
&=& e^{-\big(1+O\big(\frac{\log\log K}{\log K}\big)\big)\log K} \\
&=& O\Big(\frac{1}{K}\Big). \label{eq:ub2}
\end{eqnarray}
Subtracting \eqref{ub2} from \eqref{ub1}, we get
\begin{eqnarray}
\operatorname{Pr} \bigg(1 - O\bigg(\frac{\log \log K}{\log K}\bigg) < 
\frac{\max_k \gamma_{i,k,n}}{l_K} \leq 1 + O\bigg(\frac{\log \log K}{\log K}\bigg)\bigg) \geq 1 - O\Big( \frac{1}{\log K}\Big).
\end{eqnarray}
Note that the above equation is the same as \eqref{maxlb}.
Therefore, following \eqref{mast}-\eqref{49}, we get
\begin{eqnarray}
\mc{C}^* &=& BN \, O\bigg( \log \log^{2/t} \frac{K r_0^2}{p^2} \Big), ~\textrm{and} \\
\mc{C}^* &=& BN f_{\sf lo}^{\sf{DN}}(r, B, N) \, \Omega \Big(\log \log^{2/t} \frac{K r_0^2}{p^2} \bigg).
\end{eqnarray}
Further, if $\frac{\log^{2/t} KN}{N} \gg 1$, then $\mc{C}^* = O\bigg( BN \log \frac{\log^{2/t} \frac{KN r_0^2}{p^2}}{N} \bigg)$.

\subsection{LogNormal}

In this case, $|\nu_{i,k,n}|$ is distributed according to LogNormal-$(a, w)$
distribution. Hence, $|\nu_{i,k,n}|^2$ is distributed according to LogNormal-$(2a,4w)$
distribution. The cumulative distribution function of $\gamma_{i,k,n}$, i.e., 
$F_{\gamma_{i,k,n}}(\gamma)$ (when $\gamma \geq 0$) is
\begin{eqnarray}
F_{\gamma_{i,k,n}}(\gamma)
&=& \int p\Big(|\nu_{i,k,n}|^2 \leq \frac{\gamma}{g}\Big) 
f_{G_{i,k}}(g) d g\\
&=& 1- \frac{1}{2}\int_{\beta^2 (p+d)^{-2\alpha}}^{\beta^2 r_0^{-2\alpha}} 
\textsf{erfc} \bigg[\frac{\log \frac{\gamma}{g} - 2a}{\sqrt{8w}} \bigg] f_{G_{i,k}}(g) d g,
\end{eqnarray}
where $\textsf{erfc}[\cdot]$ is the complementary error function. Using the asymptotic expansion of 
$\textsf{erfc}[\cdot]$, $F_{\gamma_{i,k,n}}(\gamma)$ can be approximated~\cite[Eq. 7.1.23]{stegun} 
in the large $\gamma$-regime as:
\begin{eqnarray}
F_{\gamma_{i,k,n}}(\gamma)
\approx 1- \frac{1}{2}\int_{\beta^2 (p+d)^{-2\alpha}}^{\beta^2 r_0^{-2\alpha}} 
f_{G_{i,k}}(g)
\frac{e^{-\Big(\frac{\log \frac{\gamma}{g} - 2a}{\sqrt{8w}}\Big)^2}}{\Big(\frac{\log \frac{\gamma}{g} - 2a}
{\sqrt{8w}}\Big) \sqrt{\pi}} 
\sum_{m=0}^{\infty}(-1)^m \frac{(2m-1)!!}{2^m\Big(\frac{\log \frac{\gamma}{g} - 2a}{\sqrt{8w}}\Big)^{2m}}
d g
\end{eqnarray}
where $(2m-1)!! = 1\times3\times5 \times \ldots \times (2m-1)$.
We can ignore the terms $m = 1, 2, \ldots$ as the dominant term for large $\gamma$ corresponds to $m=0$. Therefore,
\begin{eqnarray}
\lefteqn{F_{\gamma_{i,k,n}}(\gamma)} \nonumber \\ 
&=& 1- \sqrt{\frac{2w}{\pi}}\int_{\beta^2 (p+d)^{-2\alpha}}^{\beta^2 r_0^{-2\alpha}} 
\frac{e^{-\Big(\frac{\log \frac{\gamma}{g} - 2a}{\sqrt{8w}}\Big)^2}}{\log \frac{\gamma}{g} - 2a} 
f_{G_{i,k}}(g) d g \\
&=& 1 - \sqrt{\frac{2w}{\pi}}\frac{r_0^2}{p^2} 
\frac{e^{-\Big(\frac{\log \frac{\gamma}{\beta^2 r_0^{-2\alpha}} - 2a}{\sqrt{8w}}\Big)^2}}{\log \frac{\gamma}{\beta^2 r_0^{-2\alpha}} - 2a} 
- \sqrt{\frac{2w}{\pi}}\int_{\frac{\beta^2}{(p-d)^{2\alpha}}}^{\frac{\beta^2}{r_0^{2\alpha}}} 
\frac{1}{\alpha \beta^2 p^2} \Big(\frac{g}{\beta^2}\Big)^{-1-\frac{1}{\alpha}} 
\frac{e^{-\Big(\frac{\log \frac{\gamma}{g} - 2a}{\sqrt{8w}}\Big)^2}}{\log \frac{\gamma}{g} - 2a} dg 
\nonumber \\ && \mbox{}
+ \sqrt{\frac{2w}{\pi}}\int^{\frac{\beta^2}{(p-d)^{2\alpha}}}_{\frac{\beta^2}{(p+d)^{2\alpha}}} 
\frac{e^{-\Big(\frac{\log \frac{\gamma}{g} - 2a}{\sqrt{8w}}\Big)^2}}{\log \frac{\gamma}{g} - 2a} d s(g). 
\label{eq:logpdf}
\end{eqnarray}
Now, we claim that
\begin{eqnarray}
\lim_{\gamma \to \infty} \big( 1 - F_{\gamma_{i,k,n}}(\gamma)\big) 
\big(\log \gamma - \log (\beta^2 r_0^{-2\alpha}) - 2a\big){e^{\Big(\frac{\log \frac{\gamma}{\beta^2 r_0^{-2\alpha}} - 2a}{\sqrt{8w}}\Big)^2}}
= \frac{r_0^2}{p^2}\sqrt{\frac{2w}{\pi}}. \label{eq:logclaim}
\end{eqnarray}
This is because the contribution of the two integrals in \eqref{logpdf} towards the RHS of \eqref{logclaim}
is zero. The contribution of first integral, when $\gamma$ is large, is
\begin{eqnarray}
\lefteqn{\Bigg|\Big(\log \frac{\gamma}{\beta^2 r_0^{-2\alpha}} - 2a\Big){e^{\Big(\frac{\log \frac{\gamma}{\beta^2 r_0^{-2\alpha}} - 2a}{\sqrt{8w}}\Big)^2}}
\int_{\frac{\beta^2}{(p-d)^{2\alpha}}}^{\frac{\beta^2}{r_0^{2\alpha}}} 
\frac{1}{\alpha \beta^2 p^2} \Big(\frac{g}{\beta^2}\Big)^{-1-\frac{1}{\alpha}} 
\frac{e^{-\Big(\frac{\log \frac{\gamma}{g} - 2a}{\sqrt{8w}}\Big)^2}}{\log \frac{\gamma}{g} - 2a} dg\Bigg|} 
\nonumber \\ &\leq&
\Big(\log \frac{\gamma}{\beta^2 r_0^{-2\alpha}} - 2a\Big) \label{eq:124}
\frac{r_0^{-2\alpha-2}}{\alpha \beta^2 p^2} 
\int_{\frac{\beta^2}{(p-d)^{2\alpha}}}^{\frac{\beta^2}{r_0^{2\alpha}}} 
\frac{e^{\Big(\frac{\log \frac{\gamma}{\beta^2 r_0^{-2\alpha}} - 2a}{\sqrt{8w}}\Big)^2-\Big(\frac{\log \frac{\gamma}{g} - 2a}{\sqrt{8w}}\Big)^2}}{\log \frac{\gamma}{g} - 2a} dg
\qquad\qquad
\\ &\leq&
\frac{r_0^{-2\alpha-2}}{\alpha \beta^2 p^2} 
\int_{\frac{\beta^2}{(p-d)^{2\alpha}}}^{\frac{\beta^2}{r_0^{2\alpha}}} 
e^{\Big(\frac{\log \frac{\gamma}{\beta^2 r_0^{-2\alpha}} - 2a}{\sqrt{8w}}\Big)^2-\Big(\frac{\log \frac{\gamma}{g} - 2a}{\sqrt{8w}}\Big)^2} dg
\\ &\leq&
\frac{r_0^{-2\alpha-2}}{\alpha \beta^2 p^2} \int_{\frac{\beta^2}{(p-d)^{2\alpha}}}^{\frac{\beta^2}{r_0^{2\alpha}}} 
e^{\frac{1}{8w}\big(\log \frac{\gamma^2}{g \beta^2 r_0^{-2\alpha}} - 4a\big)\log \frac{g}{\beta^2 r_0^{-2\alpha}}} dg \label{eq:logtrik}
\\ &=&
\frac{r_0^{-2\alpha-2}}{\alpha \beta^2 p^2} \int_{\frac{\beta^2}{(p-d)^{2\alpha}}}^{\frac{\beta^2}{r_0^{2\alpha}}} 
\Big(\frac{g}{\beta^2 r_0^{-2\alpha}}\Big)^{\frac{1}{8w}\big(\log \frac{\gamma^2}{g \beta^2 r_0^{-2\alpha}} - 4a\big)} dg 
\\ &\leq&
\frac{r_0^{-2\alpha-2}}{\alpha \beta^2 p^2} \int_{\frac{\beta^2}{(p-d)^{2\alpha}}}^{\frac{\beta^2}{r_0^{2\alpha}}} 
\Big(\frac{g}{\beta^2 r_0^{-2\alpha}}\Big)^{\frac{1}{8w}\big(\log \frac{\gamma^2}{\beta^4 r_0^{-4\alpha}} - 4a\big)} dg 
\label{eq:ming} 
\\ &=&
\frac{r_0^{-2\alpha-2}}{\alpha \beta^2 p^2} \frac{1}{\frac{1}{8w}\big(\log \frac{\gamma^2}{\beta^4 r_0^{-4\alpha}} - 4a\big)}
\bigg(1-\Big(\frac{r_0}{p-d}\Big)^{\frac{2\alpha}{8w}\big(\log \frac{\gamma^2}{\beta^4 r_0^{-4\alpha}} - 4a\big)-2\alpha} \bigg) 
\\ &\to& 0, ~\textrm{as}~ \gamma \to \infty.
\end{eqnarray}
where in \eqref{124}, we take an upper bound by taking the term $\big(\frac{g}{\beta^2}\big)^{-1-1/\alpha}$ out of the integral,
and in \eqref{ming}, we put $g = \beta^2 r_0^{-2\alpha}$ in the exponent of $\Big(\frac{g}{\beta^2 r_0^{-2\alpha}}\Big)$
since $g \leq \beta^2 r_0^{-2\alpha}$. The second integral has an exponent term that goes to zero 
faster than $e^{-\Big(\frac{\log \frac{\gamma}{\beta^2 r_0^{-2\alpha}} - 2a}{\sqrt{8w}}\Big)^2} \to 0$, making
its contribution zero. Note that only the first two term in \eqref{logpdf} contribute to the RHS in \eqref{logclaim}.
Similar to the above analysis, it is easy to show that
\begin{eqnarray}
\lim_{\gamma \to \infty} f_{\gamma_{i,k,n}}(\gamma) \, \gamma
e^{\Big(\frac{\log \frac{\gamma}{\beta^2 r_0^{-2\alpha}} - 2a}{\sqrt{8w}}\Big)^2} 
&=& \frac{r_0^2}{p^2\sqrt{8w\pi}}. 
\end{eqnarray}
Using the above equation and \eqref{logclaim}, we have
\begin{eqnarray}
h(\gamma) &=& \frac{1- F_{\gamma_{i,k,n}}(\gamma)}{f_{\gamma_{i,k,n}}(\gamma)} \approx \frac{4w \gamma}{\log \gamma}
~\textrm{for large}~ \gamma, ~\textrm{and} \\
\lim_{\gamma \to \infty} h'(\gamma) &=& 0.
\end{eqnarray}
Therefore, the limiting distribution of $\max_k \gamma_{i,k,n}$ belongs to the Gumbel-type.
Solving for $l_K$, we have 
\begin{equation}
l_K = \beta^2 r_0^{-2\alpha}e^{\sqrt{8w \log \frac{K r_0^2}{p^2}+ \Theta(\log \log K)}},~\textrm{and}
\end{equation}
$h(l_K) = O\Big(\frac{l_K}{\log l_K}\Big)$,
$h'(l_K) = O\Big(\frac{1}{\log l_K}\Big)$, $h''(l_K) = O\Big(\frac{1}{l_K \log l_K}\Big)$, and so on.
Using \thmref{uzgoren} for $u = \log \log K$, we have
\begin{eqnarray}
\operatorname{Pr} \Big( \max_k \gamma_{i,k,n} \leq l_K + h(l_K)\,\log \log K \Big) &=& e^{-e^{-\log \log K +
O\big(\frac{\log^2\log K}{\sqrt{\log K}}\big)}} \\
&=& 1 - O\Big(\frac{1}{\log K}\Big), \label{eq:ub1ex}
\end{eqnarray}
where we have used the fact that $e^x = 1 + O(x)$ for small $x$. Similarly, 
\begin{eqnarray}
\operatorname{Pr} \Big( \max_k \gamma_{i,k,n} \leq l_K - h(l_K)\,\log \log K \Big) &=& e^{-e^{\log \log K +
O\big(\frac{\log^2\log K}{\sqrt{\log K}}\big)}} \\
&=& e^{-\big(1+O\big(\frac{\log\log K}{\sqrt{\log K}}\big)\big)\log K} \\
&=& O\Big(\frac{1}{K}\Big). \label{eq:ub2ex}
\end{eqnarray}
Combining \eqref{ub1ex} and \eqref{ub2ex}, we get
\begin{eqnarray}
\operatorname{Pr} \bigg(l_K - c\frac{e^{\sqrt{8w \log K}}}{\log K}\log \log K < &\max_k \gamma_{i,k,n}& 
\leq l_K + c\frac{e^{\sqrt{8w \log K}}}{\log K}\log \log K \bigg)
\nonumber \\
&\geq& 1 - O\bigg( \frac{1}{\log K}\bigg),
\end{eqnarray}
where $c$ is a constant. Now, following a similar analysis as in \eqref{mast}-\eqref{49}, we get
\begin{eqnarray}
\max_{k}\gamma_{i,k,n} &=& \Theta\big(l_K\big) ~\textrm{w.h.p.}, \\
\mc{C}^* &=& O\Bigg(BN \sqrt{\log \frac{K r_0^2}{p^2}} \Bigg), ~\textrm{and} \\
\mc{C}^* &=& \Omega \Bigg(BN f_{\sf lo}^{\sf{DN}}(r, B, N) \,\sqrt{\log \frac{K r_0^2}{p^2}} \Bigg).
\end{eqnarray}
Further, if $\frac{e^{\sqrt{\log KN}}}{N} \gg 1$, then $\mc{C}^* = O\left(BN \log\frac{e^{\sqrt{\log \frac{KN r_0^2}{p^2}}}}{N} \right)$.


\section{Proof of \thmref{thm2}} \label{app:ap6}

We have $F_X(l_{T/S_1}) = 1 - \frac{S_1}{T}$, where $S_1 \in (0, T]$. Therefore,
$F_{\max_t X_t}(l_{T/S_1}) = \big(1 - \frac{S_1}{T}\big)^T$. This gives, for any 
increasing concave function $V(\cdot)$,
\begin{eqnarray}
\E \Big\{ V \big(\max_{1\leq t\leq T} X_t \big) \Big\}
&\geq& \operatorname{Pr}\Big( \max_{1\leq t\leq T} X_t \geq l_{T/S_1}\Big) V \big(l_{T/S_1}\big) 
\nonumber \\ 
&=& \bigg(1 - \Big(1 - \frac{S_1}{T}\Big)^T\bigg)V \big(l_{T/S_1}\big) \\
&\geq& \big(1 - e^{-S_1}\big)V \big(l_{T/S_1}\big).
\end{eqnarray}
Additionally, if $V(\cdot)$ is concave, then an upper bound on $\E \big\{ V \big(\max_{1\leq t\leq T} X_t \big) \big\}$ can be 
obtained via Jensen's inequality. In particular, we have 
$\E \big\{ V \big(\max_{1\leq t\leq T} X_t \big) \big\} \leq V \big(\E \big\{\max_{1\leq t\leq T} X_t \big\} \big)$.

Now we give few corollaries based on \thmref{thm2}. Setting $S_1 = \log K$ and $V(x) = \log(1+P\con \, x)$, we get
\begin{eqnarray}
\Big(1-\frac{1}{K}\Big)\log \big(1+ P\con l_{K/\log K}\big) &\leq& \E \Big\{
\log \big(1+ P\con \max_k \gamma_{i,k,n} \big) \Big\} \label{eq:exactbounds2}
\end{eqnarray}
where $F_{\gamma_{i,k,n}}(l_{K/\log K}) = 1 - \frac{\log K}{K}$.
Further, setting $S_1 = 1$ and get
\begin{eqnarray}
0.63 \, \log \big(1+ P\con l_{K}\big) &\leq& \E \Big\{
\log \big(1+ P\con \max_k \gamma_{i,k,n} \big) \Big\}
\end{eqnarray}
where $F_{\gamma_{i,k,n}}(l_K) = 1 - \frac{1}{K}$.

\section{Proof of \thmref{thm3}} \label{app:ap7}
The maximum distance between a TX and user is $2p$. Therefore, we have
\begin{eqnarray} 
\mc{C}^*_{\mathsf{LB}} &\leq& \max_{\vec{p} \in \mc{P}}\sum_{i=1}^B \sum_{n=1}^N
\E \Bigg\{ \max_k \log \bigg(1+ \frac{p_{i,n}\, \gamma_{i,k,n}}{1+ \beta^2 (2p)^{-2\alpha} \sum_{j\neq i}p_{j,n}\, 
|\nu_{j,k,n}|^2}\bigg) \Bigg\}\nonumber \\
&\leq& \max_{\vec{p} \in \mc{P}}\sum_{i=1}^B \sum_{n=1}^N
\E \Bigg\{ \log \Bigg(1+ \max_k \frac{p_{i,n}\, \beta^2 R_{i,k}^{-2\alpha} |\nu_{i,k,n}|^2}{1+ \beta^2 (2p)^{-2\alpha} 
\sum_{j\neq i}p_{j,n}\, |\nu_{j,k,n}|^2} \bigg) \Bigg\}.  \label{eq:ub}
\end{eqnarray}
Similarly, as a lower bound, we have (due to truncated path-loss model)
\begin{eqnarray} 
\mc{C}^*_{\mathsf{LB}} &\geq& \max_{\vec{p} \in \mc{P}} \sum_{i=1}^B \sum_{n=1}^N
\E \Bigg\{ \max_k \log \bigg(1+ \frac{p_{i,n}\, \gamma_{i,k,n}}{1+ \beta^2 r_0^{-2\alpha} \sum_{j\neq i}p_{j,n}\, 
|\nu_{j,k,n}|^2}\bigg) \Bigg\}\nonumber \\
&\geq& \max_{\vec{p} \in \mc{P}} \sum_{i=1}^B \sum_{n=1}^N
\E \Bigg\{ \log \Bigg(1+ \max_k \frac{p_{i,n}\, \beta^2 R_{i,k}^{-2\alpha} |\nu_{i,k,n}|^2}{1+ \beta^2 r_0^{-2\alpha} 
\sum_{j\neq i}p_{j,n}\, |\nu_{j,k,n}|^2} \bigg) \Bigg\}. \label{eq:lb}
\end{eqnarray}
Note that the only difference in the bounds in \eqref{ub} and \eqref{lb} is the multiplication factor in the denominator of SINR term.
In particular, the bounds can be represented by:
\begin{eqnarray} 
\max_{\vec{p} \in \mc{P}}\sum_{i=1}^B \sum_{n=1}^N \E \Bigg\{ \log \Bigg(1+ \max_k \frac{p_{i,n}\, \beta^2 R_{i,k}^{-2\alpha} 
|\nu_{i,k,n}|^2}{1+ \beta^2 c^{-2\alpha} \sum_{j\neq i}p_{j,n}\, |\nu_{j,k,n}|^2} \bigg) \Bigg\}, \label{eq:bounds-single}
\end{eqnarray}
where $r_0 \leq c \leq 2p$ is a constant.
Defining $\mathbb{X}_{i,n}(c) \defn \max_k \mathbb{X}_{i,k,n}(c)$, where
\begin{eqnarray}
\mathbb{X}_{i,k,n}(c) &\defn& \frac{ \beta^2 R_{i,k}^{-2\alpha} |\nu_{i,k,n}|^2}{1+ \beta^2 c^{-2\alpha} \sum_{j\neq i}p_{j,n}\, |\nu_{j,k,n}|^2},
\end{eqnarray}
the bounds can be represented by
\begin{eqnarray}
\max_{\vec{p} \in \mc{P}}\sum_{i=1}^B \sum_{n=1}^N \E \Big\{ \log\big(1+p_{i,n} \mathbb{X}_{i,n}(c)\big)\Big\}. \label{eq:obj.jsac}
\end{eqnarray}
Let us denote $\mathbb{Y}(c) \defn \beta^2 c^{-2\alpha} \sum_{j\neq i}p_{j,n}\, |\nu_{j,k,n}|^2$. Then, we have
\begin{eqnarray}
F_{\mathbb{X}_{i,k,n}(c)| R_{i,k} = r_{i,k}}(x)
&=& \int_{y = 0}^{\infty} \operatorname{Pr} \Bigg(|\nu_{i,k,n}|^2 \leq \frac{x(1+y)}{\beta^2 r_{i,k}^{-2\alpha} }
\Bigg)f_{\mathbb{Y}(c)}(y) dy 
\nonumber \\ &=& 
\int_{y = 0}^{\infty} \bigg(1- e^{-\frac{x(1+y)}{\beta^2 r_{i,k}^{-2\alpha} }}\bigg)f_{\mathbb{Y}(c)}(y) dy 
\nonumber \\ &=& 
1 - \int_{y = 0}^{\infty} e^{-\frac{x(1+y)}{\beta^2 r_{i,k}^{-2\alpha} }}f_{\mathbb{Y}(c)}(y) dy, \nonumber
\label{eq:fyy}
\end{eqnarray}
where $F_{W}(x)$ denotes the value that is taken by the cdf of random variable $W$ at $x$.
Now, $\mathbb{Y}(c)$ has a MGF 
\begin{equation}
M_{\mathbb{Y}(c)}(t) = \prod_{j\neq i} \frac{1}{1-\beta^2 c^{-2\alpha} p_{j,n} t}.
\end{equation}
Therefore, we have
\begin{eqnarray}
F_{\mathbb{X}_{i,k,n}(c)| G_{i,k} = \beta^2 r_{i,k}^{-2\alpha}}(x)
&=& 1 - e^{-\frac{x}{\beta^2 r_{i,k}^{-2\alpha} }}
\prod_{j\neq i} \frac{1}{1+\beta^2 c^{-2\alpha} p_{j,n} \frac{x}{{\beta^2 r_{i,k}^{-2\alpha} }}} \nonumber \\
&=& 1 - e^{-\frac{x}{g_{i,k}}}
\prod_{j\neq i} \frac{1}{1+\frac{\beta^2 c^{-2\alpha} p_{j,n} x}{g_{i,k}}},
\end{eqnarray}
where $G_{i,k} = \beta^2  R_{i,k}^{-2\alpha}$.
This gives
\begin{eqnarray}
F_{\mathbb{X}_{i,k,n}}(x)
&=& \int F_{\mathbb{X}_{i,k,n}(c) | G_{i,k} = g}(x) f_{G_{i,k}}(g) dg \\
\\ &=& 
1 - \frac{r_0^2}{p^2} 
e^{-\frac{x}{g}}\prod_{j\neq i} \frac{1}{1+\frac{\beta^2 c^{-2\alpha} p_{j,n} x}{g}}
\Bigg|_{g = \beta^2 r_0^{-2\alpha}} \\
\nonumber \\
&& \mbox{} 
- \int_{\beta^2 (p-d)^{-2\alpha}}^{\beta^2 r_0^{-2\alpha}} 
e^{-\frac{x}{g }} 
\Bigg(\prod_{j\neq i} \frac{1}{1+\frac{\beta^2 c^{-2\alpha} p_{j,n} x}{g}}\Bigg)
\frac{1}{\alpha \beta^2 p^2} \Big(\frac{g}{\beta^2}\Big)^{-1-1/\alpha} dg \nonumber \\
&& \mbox{} 
+ \int^{\beta^2 (p-d)^{-2\alpha}}_{\beta^2 (p+d)^{-2\alpha}} 
e^{-\frac{x}{g }}\Bigg(\prod_{j\neq i} \frac{1}{1+\frac{\beta^2 c^{-2\alpha} p_{j,n} x}{g}}\Bigg)d s(g), \label{eq:gamexact}
\end{eqnarray}
where $f_{G_{i,k}}(g)$ is defined in \eqref{gpdf}.
At large values of $x$, the last two terms in the above expression are negligible compared to the second 
term\footnote{Following the analysis in
\eqref{C.41}-\eqref{ty}, the last but one term in \eqref{gamexact} can be ignored. It is straightforward to show that the last term 
can be ignored at large $x$ since the exponential term decays quickly to zero.}. 
Therefore, at large $x$, one can approximate
\begin{eqnarray}
1 - F_{\mathbb{X}_{i,k,n}(c)}(x)
&\approx& 
\frac{r_0^2}{p^2} 
e^{-\frac{x}{\beta^2 r_0^{-2\alpha}}} \prod_{j\neq i} \frac{1}
{1+\frac{p_{j,n} x}{c^{2\alpha}  r_0^{-2\alpha}}} ~\textrm{and} \label{eq:gam-approx} \\
f_{\mathbb{X}_{i,k,n}(c)}(x) &\approx& \frac{r_0^2}{p^2 \beta^2 r_0^{-2\alpha}} 
e^{-\frac{x}{\beta^2 r_0^{-2\alpha}}} \prod_{j\neq i} \frac{1}
{1+\frac{p_{j,n} x}{c^{2\alpha}  r_0^{-2\alpha}}}.
\end{eqnarray}
Note that $\mathbb{X}_{i,k,n}(c)$ belongs to a domain of attraction since $\lim_{x \to \infty} \frac{1 - F_{\mathbb{X}_{i,k,n}(c)}(x)}
{f_{\mathbb{X}_{i,k,n}(c)}(x)} = \beta^2 r_0^{-2\alpha}$. In particular, the distribution of 
$\mathbb{X}_{i,n}(c) = \max_k \mathbb{X}_{i,k,n}(c)$ can be approximated by a Gumbel distribution when $K$ is large. 
With some abuse of notation, let us denote the scaling point of $\max_{k=1, \ldots, K} \mathbb{X}_{i,k,n}(c)$ by $l_K(c,i,n)$.
Then, we have that $\max_k \mathbb{X}_{i,k,n}(c) - l_K(c,i,n)$ converges in distribution to Gumbel-type cdf that is given by
\begin{eqnarray}
\exp\{-e^{-xr_0^{2\alpha}/\beta^2}\}, ~x \in (-\infty, \infty). \label{eq:gumb}
\end{eqnarray}
Here, $l_K(c,i,n)$ satisfies $F_{\mathbb{X}_{i,1,n}(c)}\big(l_K(c,i,n)\big) = 1 - \frac{1}{K}$.

We will now bound $\mc{C}^*_{\mathsf{LB}}$ via the upper and lower bounds represented by the common expression in \eqref{obj.jsac}.
First, we consider the upper bound. From \eqref{ub} and \eqref{obj.jsac}, we have 
\begin{eqnarray}
\mc{C}^*_{\mathsf{LB}} &\leq& \max_{\vec{p} \in \mc{P}}\sum_{i=1}^B \sum_{n=1}^N \E \Big\{ \log\big(1+p_{i,n} \, \mathbb{X}_{i,n}(2p)\big)\Big\} \\
&\leq& \max_{\vec{p} \in \mc{P}}\sum_{i=1}^B \sum_{n=1}^N \log\big(1+p_{i,n} \E\big\{\mathbb{X}_{i,n}(2p)\big\}\big), \label{eq:ub-final}
\end{eqnarray}
where the above equation follows by Jensen's inequality. Now, we know 
\begin{eqnarray}
\max_k \mathbb{X}_{i,k,n}(c) - l_K(c,i,n) \xrightarrow{d} Q
\end{eqnarray}
as $K$ tends to infinity, where $\xrightarrow{d}$ denotes convergence in distribution, and $Q$ has a gumbel-cdf given by \eqref{gumb}.
In \cite{pickands}, it was shown that $L^1$ convergence also holds for $\max_k \mathbb{X}_{i,k,n}(c) - l_K(c,i,n)$ provided the moments of 
$\max_k \mathbb{X}_{i,k,n}(c)$ are finite for large $K$. Since the mean of $\max_k \mathbb{X}_{i,k,n}(c)$ is always finite for finite $K$, 
we have
\begin{eqnarray}
\E\big\{\max_k \mathbb{X}_{i,k,n}(c)\} \to \E\{Q\} + l_K(c,i,n),
\end{eqnarray}
as $K$ grows large. 
Noticing that $\E\{Q\} = \beta^2 r_0^{-2\alpha}u$, where $u$ is the Euler-Mascheroni constant ($u \approx 0.5772$), we
apply the above result to \eqref{ub-final} to get the following.
\begin{eqnarray}
\mc{C}^*_{\mathsf{LB}} &\leq& \max_{\vec{p} \in \mc{P}}\sum_{i=1}^B \sum_{n=1}^N \log\Big(1+\big(l_K(2p,i,n) +  \beta^2 r_0^{-2\alpha}u\big)p_{i,n}\Big) \\
&=& \max_{\vec{p} \in \mc{P}}\sum_{i=1}^B \sum_{n=1}^N \log\bigg(1+\bigg(1+ \frac{\beta^2 r_0^{-2\alpha}u}{l_K(2p,i,n)}\bigg)p_{i,n} \, l_K(2p,i,n) \bigg) \label{eq:Clbub-prior}\\
&\leq& \max_{\vec{p} \in \mc{P}}\sum_{i=1}^B \sum_{n=1}^N \bigg(1+ \frac{\beta^2 r_0^{-2\alpha}u}{l_K(2p,i,n)} \bigg)\log\big(1+p_{i,n} \, l_K(2p,i,n) \big),
\label{eq:Clbub}
\end{eqnarray}
where \eqref{Clbub} follows from \eqref{Clbub-prior} because $\log(1+ax) \leq a \log(1+x)$ for all $x \geq 0$ and $a \geq 1$.
Now, from \eqref{gam-approx}, we know that $l=l_K(2p,i,n)$ satisfies
\begin{eqnarray}
\frac{r_0^2 K}{p^2} = e^{\frac{l}{\beta^2 r_0^{-2\alpha}}} \prod_{j\neq i}
\bigg(1+\frac{p_{j,n} l}{(2p)^{2\alpha} r_0^{-2\alpha}}\bigg). \label{eq:lsatisfy}
\end{eqnarray}
Note that the value of $l$ that satisfies the above equation decreases with increase in $\{p_{j,n}~\textrm{for all}~j \neq i\}$. Therefore,
we can write $l_K(2p,i,n) \geq \bar{l}(2p, K)$ for all $(i,n)$, where $l=\bar{l}(2p, K)$ is computed by solving \eqref{lsatisfy}
with $p_{j,n} = P\con$ for all $(j,n)$. In particular, $\bar{l}(2p, K)$ satisfies
\begin{eqnarray}
\frac{r_0^2 K}{p^2} = e^{\frac{\bar{l}(2p, K)}{\beta^2 r_0^{-2\alpha}}} \bigg(1+\frac{\bar{l}(2p, K)P\con }{(2p)^{2\alpha} r_0^{-2\alpha}}\bigg)^{B-1}.
\label{eq:lbar}
\end{eqnarray}
Using $l_K(2p,i,n) \geq \bar{l}(2p, K)$ in \eqref{Clbub}, we get
\begin{eqnarray}
\mc{C}^*_{\mathsf{LB}} &\leq& \bigg(1+ \frac{\beta^2 r_0^{-2\alpha}u}{\bar{l}(2p, K)} \bigg) \max_{\vec{p} \in \mc{P}} 
\sum_{i=1}^B \sum_{n=1}^N \log\big(1+p_{i,n} \, l_K(i,n) \big), \label{eq:opt_ub}
\end{eqnarray}
where $l = l_K(2p,i,n)$ satisfies \eqref{lsatisfy}.

We will now consider the lower bound in \eqref{obj.jsac}. The lower bound follows from \thmref{thm2}. In particular,
Using $V(\mathbb{X}_{i,n}(c)) = \log(1+p_{i,n} \, \mathbb{X}_{i,n}(c))$ in \thmref{thm2} and taking the summation
over all $(i,n)$, the optimization problem with an objective function $\sum_{i,n} \E\big\{V(\mathbb{X}_{i,n}(c))\big\}$ 
evaluates the lower bounds in \eqref{obj.jsac} when $c = r_0$. Therefore, we have from \thmref{thm2},
\begin{eqnarray}
\mc{C}^*_{\mathsf{LB}}  \geq (1-e^{-S_1}) \, \max_{\vec{p} \in \mc{P}} \sum_{i=1}^B \sum_{n=1}^N \log\big(1+p_{i,n} \, l_{K/S_1}(r_0,i,n) \big),
\end{eqnarray}
where $S_1 \in (0, K]$ and $F_{\mathbb{X}_{i,1,n}(r_0)}\big(l_{K/S_1}(r_0,i,n)\big) = 1 - \frac{S_1}{K}$. Putting $S_1 = 1$, we have
\begin{eqnarray}
\mc{C}^*_{\mathsf{LB}}  \geq 0.63 \, \max_{\vec{p} \in \mc{P}} \sum_{i=1}^B \sum_{n=1}^N \log\big(1+p_{i,n} l_{K}(r_0,i,n) \big), \label{eq:opt_lb}
\end{eqnarray}

Representing \eqref{opt_ub} and \eqref{opt_lb} in one mathematical form, we define a class of optimization problems 
as follows.
\begin{eqnarray}
\mathsf{OP}\big(c, h(K)\big)
&\defn& \max_{\vec{p} \in \mc{P}} \, \sum_{i=1}^B \sum_{n=1}^N \log (1 + p_{i,n} x_{i,n}) \label{eq:ob1}\\ 
&& \mbox{} ~\mathrm{s.t.}~ 
\frac{r_0^2 \, h(K)}{p^2} = e^{\frac{x_{i,n}}{\beta^2 r_0^{-2\alpha}}} \prod_{j\neq i}
\bigg(1+\frac{c^{-2\alpha} p_{j,n} x_{i,n}}{r_0^{-2\alpha}}\bigg)~\forall~ i,n. \label{eq:c--3}
\end{eqnarray}
Then, we have for large $K$,
 \begin{eqnarray}
(1-e^{-S_1}) \mathsf{OP}\big(r_0, K/S_1\big)
\leq \mc{C}^*_{\mathsf{LB}} \leq \bigg(1+ \frac{\beta^2 r_0^{-2\alpha}u}{\bar{l}(2p, K)} \bigg) \mathsf{OP}\big(2p, K\big),
\end{eqnarray}
where $S_1 \in (0, K]$, $u$ is the Euler-Mascheroni constant and $\bar{l}(2p, K)$ satisfies \eqref{lbar} (re-written below for brevity):
\begin{eqnarray}
\frac{r_0^2 K}{p^2} = e^{\frac{\bar{l}(2p, K)}{\beta^2 r_0^{-2\alpha}}} \bigg(1+\frac{\bar{l}(2p, K) P\con }{(2p)^{2\alpha} r_0^{-2\alpha}}\bigg)^{B-1}.
\end{eqnarray}

\subsection{Proof of a Property of $\mathsf{OP}\big(c,h(K)\big)$}

We will now show that for positive constants $c_1, c_2$ $(0 < c_1 \leq c_2)$ and any increasing function $h(\cdot)$, we have
\begin{eqnarray}
1 \leq \frac{\mathsf{OP}\big(c_2, h(K)\big)}{\mathsf{OP}\big(c_1, h(K)\big)} \leq \bigg(\frac{c_2}{c_1}\bigg)^{2\alpha}.
\end{eqnarray}

For any given set of powers $\{p_{i,n}~\textrm{for all}~ i,n\}$, let $\{x_{i,n}(c_1)~\textrm{for all}~ i,n\}$ be the solution to 
\eqref{c--3} (rewritten below for brevity) when considering the optimization problem $\mathsf{OP}\big(c_1, h(K)\big)$. 
\begin{eqnarray}
\frac{r_0^2 \, h(K)}{p^2} = e^{\frac{x_{i,n}}{\beta^2 r_0^{-2\alpha}}} \prod_{j\neq i}
\bigg(1+\frac{c^{-2\alpha} p_{j,n} x_{i,n}}{r_0^{-2\alpha}}\bigg)~\forall~ i,n. \label{eq:c--3a}
\end{eqnarray}
Similarly, for the same set of powers 
$\{p_{i,n}~\textrm{for all}~ i,n\}$, let $\{x_{i,n}(c_2)~\textrm{for all}~ i,n\}$ be the solution to 
\eqref{c--3a} when considering the optimization problem $\mathsf{OP}\big(c_2, h(K)\big)$.
Clearly, $x_{i,n}(c_2) \geq x_{i,n}(c_1)$ since the RHS of \eqref{c--3a} is 
a decreasing function of $c$. Now, we claim that
\begin{eqnarray}
\bigg(\frac{c_2}{c_1}\bigg)^{2\alpha} x_{i,n}(c_1) \geq x_{i,n}(c_2)
\end{eqnarray}
for all $(i,n)$. We know that for all $(i,n)$
\begin{eqnarray}
\frac{r_0^2 \, h(K)}{p^2} = e^{\frac{x_{i,n}(c_2)}{\beta^2 r_0^{-2\alpha}}} \prod_{j\neq i}
\bigg(1+\frac{c_2^{-2\alpha} p_{j,n} x_{i,n}(c_2)}{r_0^{-2\alpha}}\bigg) \label{eq:c2eq}
\end{eqnarray}
Now, if we substitute $x_{i,n}(c_2)$ by any larger value, then the RHS of \eqref{c2eq} will be larger than LHS of \eqref{c2eq}. This 
is because the RHS of \eqref{c--3a} is an increasing function of $x_{i,n}$. Let us substitute
$\big(\frac{c_2}{c_1}\big)^{2\alpha} x_{i,n}(c_1)$ instead of $x_{i,n}(c_2)$. Then, we get
\begin{eqnarray}
\frac{r_0^2 \, h(K)}{p^2} \gtrless e^{\frac{c_2^{2\alpha}x_{i,n}(c_1)}{c_1^{2\alpha}\beta^2 r_0^{-2\alpha}}} \prod_{j\neq i}
\bigg(1+\frac{c_1^{-2\alpha} p_{j,n} x_{i,n}(c_1)}{r_0^{-2\alpha}}\bigg), \label{eq:c2eq2}
\end{eqnarray}
where the actual inequality will be determined later.
Since $\{x_{i,n}(c_1)~\textrm{for all}~ i,n\}$ is the solution to 
\eqref{c--3a} when considering the optimization problem $\mathsf{OP}\big(c_1, h(K)\big)$, we also have
\begin{eqnarray}
\frac{r_0^2 \, h(K)}{p^2} = e^{\frac{x_{i,n}(c_1)}{\beta^2 r_0^{-2\alpha}}} \prod_{j\neq i}
\bigg(1+\frac{c_1^{-2\alpha} p_{j,n} x_{i,n}(c_1)}{r_0^{-2\alpha}}\bigg) \label{eq:c1eq}
\end{eqnarray}
Dividing \eqref{c2eq2} by \eqref{c1eq} and taking logarithm of both sides, we get
\begin{eqnarray}
0 \gtrless \bigg(\frac{c_2^{2\alpha}}{c_1^{2\alpha}}-1\bigg)\frac{x_{i,n}(c_1)}{\beta^2 r_0^{-2\alpha}},
\end{eqnarray}
Since $c_2 \geq c_1$, we have in \eqref{c2eq2}
\begin{eqnarray}
\frac{r_0^2 \, h(K)}{p^2} \leq e^{\frac{c_2^{2\alpha}x_{i,n}(c_1)}{c_1^{2\alpha}\beta^2 r_0^{-2\alpha}}} \prod_{j\neq i}
\bigg(1+\frac{c_1^{-2\alpha} p_{j,n} x_{i,n}(c_1)}{r_0^{-2\alpha}}\bigg),
\end{eqnarray}
Therefore, $\big(\frac{c_2}{c_1}\big)^{2\alpha} x_{i,n}(c_1) \geq x_{i,n}(c_2)$ for all $(i,n)$.
Using this relation and the fact that $x_{i,n}(c_2) \geq x_{i,n}(c_1)$, we have
\begin{eqnarray}
\log(1+p_{i,n}x_{i,n}(c_1)) \leq \log(1+p_{i,n}x_{i,n}(c_2)) \leq \log\Big(1+\Big(\frac{c_2}{c_1}\Big)^{2\alpha} p_{i,n}x_{i,n}(c_1)\Big).
\label{eq:log1}
\end{eqnarray} 
Also note that $\log(1+ax) \leq a \log(1+x)$ for all $x \geq 0$ and $a \geq 1$. Therefore, we have
\begin{eqnarray}
\log\Big(1+\Big(\frac{c_2}{c_1}\Big)^{2\alpha} p_{i,n}x_{i,n}(c_1)\Big) \leq 
\Big(\frac{c_2}{c_1}\Big)^{2\alpha} \log\big(1+p_{i,n}x_{i,n}(c_1)\big). \label{eq:log2}
\end{eqnarray} 
Combining \eqref{log1} and \eqref{log2}, we have for every $(i,n)$
\begin{eqnarray}
\log(1+p_{i,n}x_{i,n}(c_1)) \leq \log(1+p_{i,n}x_{i,n}(c_2)) \leq \Big(\frac{c_2}{c_1}\Big)^{2\alpha} \log(1+p_{i,n}x_{i,n}(c_1)).
\label{eq:log3}
\end{eqnarray} 
Taking the sum over all $(i,n)$ and applying maximizing over powers $\vec{p} = \{p_{i,n}\}$, we get 
(see \eqref{ob1}-\eqref{c--3})
\begin{eqnarray}
\mathsf{OP}\big(c_1, h(K)\big) \leq \mathsf{OP}\big(c_2, h(K)\big) \leq \Big(\frac{c_2}{c_1}\Big)^{2\alpha}\,\mathsf{OP}\big(c_1, h(K)\big).
\end{eqnarray}
In other words,
\begin{eqnarray}
1 \leq \frac{\mathsf{OP}\big(c_2, h(K)\big)}{\mathsf{OP}\big(c_1, h(K)\big)} \leq \Big(\frac{c_2}{c_1}\Big)^{2\alpha}.
\end{eqnarray}

\end{appendices}

\end{document}